\def \ZN2Z{\mathbb{Z}_{N^2}^*}
\def \Z{\mathbb{Z}}
\newcommand{\error}{E}
\newcommand{\err}[1]{\error_{#1}}
\newcommand{\comschpp}{\Theta}
\newcommand{\G}{\mathbb{G}}
\newcommand{\rel}{\mathcal{R}}
\newcommand{\Prov}{\mathcal{P}}
\newcommand{\Verif}{\mathcal{V}}
\newcommand{\wit}{w}
\newcommand{\stat}{a}
\newcommand{\pOrd}{q}
\newcommand{\gexp}{\textsf{GEX}}
\newcommand{\prot}{\Pi}
\newcommand{\vv}[1]{\bar{#1}}
\newcommand{\gdprec}{{\psi}\xspace}
\newcommand{\enc}[1]{\langle #1 \rangle}
\newcommand{\erf}{\hbox{erf}\xspace}
\newcommand{\erfc}{\hbox{erfc}\xspace}
\newcommand{\Lerfc}{L_\mathrm{erfc}}
\newcommand{\xminerfc}{x_{\mathrm{erfc}}^{min}}
\newcommand{\Lerf}{L_{\mathrm{erf}}}
\newcommand{\Eerfround}{E^{round}_{\mathrm{erf}}}
\newcommand{\Eerfcround}{E^{round}_{\mathrm{erfc}}}
\newcommand{\errerfc}[1]{F_{#1}} 
\newcommand{\comb}[2]{\left(\begin{array}{c}{#1}\\{#2}\end{array}\right)}
\newcommand{\userset}{U}
\newcommand{\avgx}{X^{avg}}
\newcommand{\getsR}{\gets_R}
\newcommand{\gopa}{\textsc{Gopa}\xspace}
\newcommand{\edgeset}{E}
\newcommand{\honestuserset}{\userset^{H}}
\newcommand{\honestedgeset}{\edgeset^{H}}
\newcommand{\onlineuserset}{\userset^{O}}
\newcommand{\honestfraction}{\rho}
\newcommand{\Sigmaginv}{\Sigma^{(-g)}}
\newcommand{\rndG}{G^H} 
\newcommand{\rndroot}{v_1} 
\newcommand{\rndEG}{E^H} 
\newcommand{\rndn}{n_H} 
\newcommand{\rndk}{k_H} 
\newcommand{\rndT}{T} 
\newcommand{\rndET}{E_T} 
\newcommand{\rndq}{q} 
\newcommand{\rndF}{F} 
\newcommand{\rndEF}{E_F} 
\newcommand{\rndD}{\Delta} 
\newcommand{\rnddelta}{\delta} 
\newcommand{\rnddeltaa}{\delta_F} 
\newcommand{\rnddeltab}{\delta_B} 
\newcommand{\rndphi}{\phi} 
\newcommand{\rndgamma}{\gamma} 
\newcommand{\rndfailF}{\text{\textsc{Fail}}_F}
\newcommand{\rndlev}{z}
\newcommand{\rndclev}{Z}
\newcommand{\rndlll}[2]{{#2}}
\newcommand{\rndm}{m}
\newcommand{\rndp}{p}
\newcommand{\rndzeta}{\zeta}
\newcommand{\rndfailB}{\text{\textsc{Fail}}_B}
\newcommand{\rnddcnt}{l}
\newcommand{\rnddsum}{\rndm}
\newcommand{\rndd}{d}
  \newcommand{\rndGbis}{{G^\prime}}
  \newcommand{\rndpbis}{{\rndp^\prime}}
\newcommand{\groupelsize}{W} 
\spnewtheorem{assumption}{Assumption}{\bf}{\it}
\spnewtheorem{fact}{Fact}{\bf}{\it}
\DeclareMathOperator*{\argmin}{arg\,min}
\definecolor{sapphire}{rgb}{0.03, 0.15, 0.4}
\newtcolorbox{cbox}[3][]
{
	colframe = #2!25,
	colback  = #2!10,
	coltitle = #2!20!black,  
	title    = {#3},
	#1,
	breakable,
	enhanced
}
\newcommand{\twopiconst}{C_1}
\journalname{Machine Learning Journal}
\begin{document}
	
	\title{An Accurate, Scalable and Verifiable Protocol for Federated
		Differentially Private Averaging
	}
	
	\titlerunning{An Accurate, Scalable and Verifiable Protocol for Federated
		DP Averaging}        
	
	\author{C\'esar Sabater  \and
		Aur\'elien Bellet \and 
		Jan Ramon 
	}
	
	
	\institute{C. Sabater  \at
		INRIA Lille - Nord Europe, 
		59650 Villeneuve d’Ascq, France \\
		\email{cesar.sabater@inria.fr}           
		\and
		A. Bellet  \at
		INRIA Lille - Nord Europe, 
		59650 Villeneuve d’Ascq, France \\
		\email{aurelien.bellet@inria.fr} 
		\and
		J. Ramon \at 
		INRIA Lille - Nord Europe, 
		59650 Villeneuve d’Ascq, France \\ 
		\email{jan.ramon@inria.fr} 
	}
	
	\date{Received: date / Accepted: date}

	\maketitle

\keywords{privacy \and federated learning \and differential privacy \and robustness}




\begin{abstract}
Learning from data owned by several parties, as in federated learning, raises
challenges regarding the privacy guarantees provided to participants and the
correctness of the computation in the presence of malicious parties. We tackle
these challenges in the context of distributed averaging, an essential
building block of federated learning algorithms. Our first contribution is a
scalable protocol in which participants exchange correlated Gaussian noise
along the edges of a graph, complemented by independent noise added by
each party. We analyze the differential privacy guarantees of our protocol and
the impact of the graph topology under colluding malicious parties, showing
that we can nearly match the utility of the trusted curator model even when
each honest party communicates with only a logarithmic number of other parties
chosen at random. This is in contrast with protocols in the local model of
privacy (with lower utility) or based on secure aggregation (where all pairs
of users need to exchange messages). Our second contribution enables users to
prove the correctness of their computations without compromising the
efficiency and privacy guarantees of the protocol. Our construction
relies on standard cryptographic primitives like commitment schemes and zero knowledge proofs.
\end{abstract}


\section{Introduction}
\label{sec:intro}

Individuals are producing ever growing amounts of personal data, which in turn
fuel innovative services based on machine
learning (ML). The classic \emph{centralized} paradigm consists in collecting,
storing and analyzing this data on a (supposedly trusted) central server or in
the cloud,  
which poses well documented privacy risks for the users.
With the increase of public awareness and regulations, we are
witnessing a shift towards a more \emph{decentralized} paradigm where personal
data remains on each user's device, as can be seen from the growing popularity
of federated learning
\cite{kairouz2019advances}. In this setting, users typically do not trust the
central server (if any), or each other, which introduces new issues regarding
privacy and security. First, the information shared by users during the
decentralized training protocol can reveal a lot
about their private data (see \cite{Melis2019,Shokri2019,inverting} for inference attacks on federated
learning).
Formal guarantees such as differential privacy (DP) \cite{Dwork2006a} are
needed to provably mitigate this and convince users to participate.
Second, \emph{malicious} users may send incorrect results to bias the
learned model in arbitrary ways \cite{Hayes2018,Bhagoji2019,Bagdasaryan2020}.
Ensuring the correctness of the computation is crucial to persuade service
providers to move to a more decentralized and privacy-friendly setting.

In this work, we tackle these challenges in the context of private distributed
averaging. In this canonical problem, the objective is
to privately compute an estimate of the average of values owned by many users who
do not want to disclose them.
Beyond simple data analytics, distributed averaging is of high
relevance to modern ML. Indeed, it is the key primitive
used to aggregate
user updates in gradient-based distributed and federated learning algorithms 
\cite{minibatch_local_sgd,local_sgd,mcmahan2016communication,Jayaraman2018,cp-sgd,kairouz2019advances}. It also
allows to train ML models whose
sufficient statistics are averages (e.g., linear models and decision trees).
Distributed averaging with differential privacy guarantees has thus
attracted a lot of interest in recent years.
In the strong model of local differential privacy (LDP)
\cite{Kasiviswanathan2008,d13,kairouz2015secure,Kairouz2016a,discrete_dis_local,trilemma_neurips20}, each user randomizes its input locally before
sending it to an untrusted aggregator.
Unfortunately, the best possible error for the estimated average with $n$
users is of the order $O(\sqrt{n})$ larger than in
the centralized model of DP where a trusted curator
aggregates data in the clear and perturbs the output \cite{Chan2012}. To
fill this gap,
some work has explored relaxations of LDP that make it possible to match the
utility of the trusted curator model. This is achieved through the use of
cryptographic primitives such as 
 secure aggregation
\cite{Dwork2006ourselves,ChanSS12,Shi2011,Bonawitz2017a,Jayaraman2018} and
secure shuffling \cite{amp_shuffling,Cheu2019,Balle2020,Ghazi2020ICML}. 
Many of these solutions however assume that all users truthfully follow the
protocol (they are \emph{honest-but-curious}) and/or give significant
power (ability to reveal sensitive data)
to a small number of servers. Furthermore, their practical implementation poses
important challenges when the number of parties is large: for instance, the
popular secure aggregation approach of \cite{Bonawitz2017a} requires all $O
(n^2)$
pairs of users to exchange messages.

In this context, our contribution is fourfold.
\begin{itemize}
  \item First, we propose
\gopa, a novel decentralized differentially private averaging protocol that
relies on users exchanging (directly or through a
server) some pairwise-correlated Gaussian noise terms
along the edges of a graph so as to mask
their private values without affecting the global average. This ultimately
canceling noise is complemented by the addition of independent 
(non-canceling) Gaussian noise by each user.
\item Second, we analyze the differential privacy guarantees of \gopa. 
  Remarkably, we establish that our approach can achieve nearly the same
  privacy-utility trade-off as a trusted curator who would average the values
  of honest users, provided that the graph of honest-but-curious users is
  connected and the pairwise-correlated noise variance is large enough. In
  particular, for $n_H$ honest-but-curious users and any fixed DP guarantee,
  the
  variance of the estimated average is only $n_H/(n_H-1)$ times larger than with a trusted curator, a factor which goes to $1$ as $n_H$ grows.  We further show that if the graph is well-connected, the pairwise-correlated noise variance can be significantly reduced.
\item Third,
  to ensure both scalability and robustness to malicious users, we propose a
  randomized procedure in which each user communicates with only a 
  \emph{logarithmic} number of other users while still matching the
  privacy-utility trade-off of the trusted curator. Our analysis is novel and
  requires to leverage and adapt results
  from random graph theory on embedding spanning trees in random graphs.
  Additionally, we show our protocol is robust to a fraction of users dropping out.
\item Finally, we propose
  a procedure to make \gopa verifiable by untrusted external parties, i.e., to
  enable users to prove the correctness of their computations
  without compromising the efficiency or the privacy guarantees of the protocol.
  To the best of our knowledge, we are the first to propose such a procedure. 
  It offers a strong preventive countermeasure against various attacks such as
  data poisoning or protocol deviations aimed at reducing utility.  
Our construction relies on commitment schemes and zero knowledge proofs (ZKPs), which are very popular in auditable electronic payment systems and cryptocurrencies. These cryptographic primitives
scale well both in communication and computational requirements and are
perfectly suitable in our untrusted decentralized setting. We use classic ZKPs
to design a procedure for the generation of noise with verifiable
distribution, and ultimately to prove the correctness of the final computation 
(or detect malicious users who did not follow the protocol). Crucially, the privacy guarantees of the protocol are not compromised by this procedure,
while the integrity of the computation relies on a standard discrete logarithm assumption. In the end, we argue that our protocol offers correctness guarantees that are essentially equivalent to the case where a trusted curator would hold the private data of users.
\end{itemize}

The paper is organized as follows.
Section~\ref{sec:setting} introduces the problem setting.
We discuss the related work in more details in Section~\ref{sec:related}.
The \gopa protocol is introduced in Section~\ref{sec:gopa} and we analyze its
differential privacy guarantees in Section~\ref{sec:privacy}. We present our procedure to ensure correctness against malicious behavior in
Section~\ref{sec:verif}, and summarize computational and communication costs
in Section~\ref{sec:complexity}.
Finally, we present some experimental results in
Section~\ref{sec:exp} and conclude with future lines of research in Section~\ref{sec:conclu}.


\section{Notations and Setting}
\label{sec:setting}

We consider a set $\userset=\{1,\dots,n\}$ of $n\geq 3$ users (parties). Each
user $u\in
 U$ holds a private value $X_u$, which can be thought of as being computed
 from the private dataset of user $u$. We assume that $X_u$ lies in a bounded
 interval of $\mathbb{R}$ (without loss of generality, we assume $X_u\in[0,
 1]$). The extension to the vector case is straightforward. We denote by $X$
 the column vector $X=[X_1,\dots,X_n]^\top\in[0,1]^n$ of private values.
Unless otherwise noted, all vectors are column vectors.
Users communicate over a network represented by a connected
undirected graph $G=(U,E)$, where $\{u,v\}\in E$ indicates that users $u$ and
$v$ are neighbors in $G$ and can exchange secure messages.
For a given user $u$, we denote by $N(u)=\{v : \{u,v\}\in E\}$ the set of its neighbors.
We note that in settings where users can only communicate with a server,
the latter can act
as a relay that forwards (encrypted and authenticated) messages between users,
as done in secure aggregation \cite{Bonawitz2017a}.

The users aim to
collaboratively estimate the average value $\avgx = \frac{1}{n}\sum_{u=1}^n
X_u$
without revealing their individual private values. Such a protocol can be
readily used to
privately execute distributed ML algorithms that interact
with data through averages over values computed locally by the
participants, but do not actually need to see the individual values. We
give two concrete examples below.

\begin{example}[Linear regression]
\label{ex:linearreg}
Let $\lambda\geq 0$ be a public parameter. Each user $u$ holds a private
feature vector $\phi_u=[\phi_u^1,\dots,\phi_u^d]\in\mathbb{R}^d$ and a private
label $y_u\in\mathbb{R}$.
The goal is to solve a ridge regression task, i.e. find
$\theta^*\in\argmin_\theta \frac{1}{n}\sum_{u\in U}
(\phi_u^\top\theta - y_u)^2 + \lambda\|\theta\|^2$. $\theta^*$ can be
computed
in closed form from the quantities $\frac{1}
{n}\sum_{u\in U} \phi_u^iy_u$ and $\frac{1}{n}\sum_{u\in U}\phi_u^i\phi_u^j$
for all $i,j\in
\{1,\dots,d\}$.
\end{example}

\begin{example}[Federated ML]
\label{ex:federatedml} 
In federated learning \cite{kairouz2019advances} and 
distributed empirical risk minimization, each user $u$ holds a
private dataset $\mathcal{D}_u$ and the goal is to find $\theta^*$ such that
$\theta^*\in\argmin_\theta \frac{1}{n}\sum_{u\in U}f(\theta; \mathcal{D}_u)$
where $f$ is some loss function.
Popular algorithms 
\cite{minibatch_local_sgd,local_sgd,mcmahan2016communication,Jayaraman2018,cp-sgd}
all follow the same high-level procedure: at round $t$, each user $u$ computes
a local update $\theta_u^t$ based on $
\mathcal{D}_u$ and the current global model $\theta^{t-1}$, and the updated
global model is computed as $\theta^t=\frac{1}{n}\sum_u \theta_u^t$.
\end{example}

\noindent\emph{Threat model.}
We consider two commonly adopted adversary models formalized
by \cite{goldreich1998secure} and used in the design of
many secure protocols.
A \emph{honest-but-curious} (\emph{honest} for short) user will follow
the protocol specification, but may use all the information obtained
during the execution to infer information about other users. A honest user may
accidentally drop out at any point of the execution (in a way that is
independent of the private values $X$).
On the other hand, a \emph{malicious user} may deviate from the protocol
execution (e.g, sending incorrect values or dropping out on
purpose). Malicious users can collude, and thus will be seen as a single
malicious party (the \emph{adversary}) who has access to all information
collected by malicious users. Our privacy guarantees will
hold under
the assumption that honest users communicate through secure channels, while
the correctness of our protocol will be guaranteed under some form of the
Discrete Logarithm Assumption (DLA), a
standard assumption in cryptography.

For a given execution of the protocol, we denote by $U^O$ the set of the users
who remained online until the end (i.e., did not drop out). Users in $U^O$ are
either honest
or malicious: we denote
by $U^H\subseteq U^O$ those who are honest, by
$\rndn=|\honestuserset|$ their number and by
$\honestfraction=\rndn/n$ their proportion with respect to the total number of
users.
We also denote by
$\rndG = (\honestuserset, \honestedgeset)$ the subgraph of $G$
induced by the set of honest users $\honestuserset$, i.e., $\honestedgeset = \{\{u,v\} \in E : u,v
\in \honestuserset\}$. The properties of $G$ and $\rndG$ will play a key
role in the privacy and scalability guarantees of our protocol.\\

\noindent\emph{Privacy definition.} 
Our goal is to design a protocol that satisfies differential privacy (DP)
\cite{Dwork2006a}, which has become a gold standard in
private information release.

\begin{definition}[Differential privacy]
\label{def:dp}
Let $\varepsilon>0,\delta\geq 0$. A (randomized) protocol $\mathcal{A}$ is $
(\varepsilon,\delta)$-differentially private if for all neighboring datasets
$X=[X_1,\dots,X_n]$ and $X'=[X'_1,\dots,X_n]$ differing only in
a single data point, and for
all sets of possible outputs $\mathcal{O}$, we have:
\begin{equation}
\label{eq:dp}
\Pr(\mathcal{A} (X) \in \mathcal{O}) \leq e^
{\varepsilon}\Pr(\mathcal{A} ( X') \in \mathcal{O}) + \delta.
\end{equation}
\end{definition}


\section{Related Work}
\label{sec:related}

\begin{table}
\centering
  \begin{tabular}{lllcl}
    \hline    
    Approach & Com. per party & MSE                 & Verif & Risks \\\hline
    Central DP \cite{Dwork2014a} & $O(1)$ & $O(1/n^2)$ & No & Trusted curator
    \\
    Local DP \cite{Kasiviswanathan2008} & $O(1)$ & $O(1/n)$ & No & 
    \\
    Verifiable secret sharing \cite{Dwork2006ourselves} & $O(n)$ & $O(1/n^2)$
    & Yes & \\
    Secure agg. \cite{Bonawitz2017a} + DP 
    \cite{pmlr-v139-kairouz21a,skellam} & $O
    (n)$ & $O(1/n^2)$
    &
    No & Honest users\\
    CAPE \cite{imtiaz2021distributed} & $O(n)$ & $O(1/n^2)$ &  No &  \\
    Shuffling  \cite{Balle2020} & $O(1 + \log(1/\delta))$ & $O(1/n^2)$ & No &
    Trusted shuffler \\
    \hline
    \textbf{GOPA (this work)}    &  $O(\log n)$ & $O(1/n^2)$ &
    Yes &
    \\
    \hline
  \end{tabular}
  \caption{\label{tab:relwork}Comparison of GOPA with previous DP averaging
  approaches with their communication cost per party, mean squared error (MSE),
  verifiability (Verif) and additional risks.}
  \end{table}

In this section we review the most important work related to ours.  A set of
key approaches together with their main features are summarized
in Table \ref{tab:relwork}.
  
Distributed averaging is a key subroutine in distributed and federated
learning 
\cite{minibatch_local_sgd,local_sgd,mcmahan2016communication,Jayaraman2018,cp-sgd,kairouz2019advances}. Therefore, any improvement in the privacy-utility-communication trade-off for averaging implies gains for many ML approaches downstream.

Local differential privacy (LDP) 
\cite{Kasiviswanathan2008,d13,kairouz2015secure,Kairouz2016a,discrete_dis_local} requires users to locally randomize their input before they send it to an untrusted
aggregator. This very strong model of privacy
comes at a significant cost in utility:
the best possible mean squared is of order $1/n^2$ in
the trusted curator model while it is of order $1/n$ in LDP
\cite{Chan2012,trilemma_neurips20}. This limits the usefulness of the
local model to industrial settings where the
number of participants is huge \cite{rappor15,telemetry17}.
Our approach belongs to the recent line of work which attempts to relax the
LDP model so as to improve utility without relying on a
trusted curator (or similarly on a small fixed set of parties).

Previous work considered the use of
cryptographic primitives like secure aggregation protocols, which can be used to compute the (exact) average of private values \cite{Dwork2006ourselves,Shi2011,Bonawitz2017a,ChanSS12}. While secure aggregation allows in principle to recover the
utility of the trusted curator model, it suffers three main drawbacks.
Firstly, existing protocols require $\Omega(n)$ communication per party, which
is hardly
feasible beyond a few hundred or thousand users. In contrast, we propose a protocol which requires only $O(\log n)$
communication.\footnote{We note that, independently and in parallel to our
work, \cite{bell_secure_2020} recently proposed a
secure aggregation protocol with $O(\log n)$ communication at the cost of relaxing the functionality under colluding/malicious users.}
Secondly, combining secure aggregation with DP is nontrivial as the noise must be added
in a distributed fashion and in the discrete domain. Existing complete
systems
\cite{pmlr-v139-kairouz21a,skellam} assume an ideal secure aggregation
functionality which does not reflect the impact of colluding/malicious
users. In these more challenging settings, it is not clear how to add the
necessary noise for DP and what the resulting privacy/utility trade-offs would
be. Alternatively, \cite{Jayaraman2018} adds the noise within the secure
protocol
but relies on two non-colluding servers. Thirdly, most of the above schemes
are not verifiable. One exception is
the verifiable secret sharing approach of \cite{Dwork2006ourselves}, which
again induces $\Omega(n)$ communication. 
Finally, we note that secure aggregation typically uses uniformly distributed
pairwise masks, hence a single
residual term completely destroys the utility. In contrast, we use Gaussian
pairwise masks that have zero mean and bounded variance, which provides more
robustness but requires the more involved privacy analysis we present in
Section~\ref{sec:privacy}.

Recently, the shuffle model of privacy
\cite{Cheu2019,amp_shuffling,Hartmann2019,Balle2020,Ghazi2020ICML},
where inputs are passed to a trusted/secure shuffler that obfuscates the
source of
the messages, has been studied theoretically as an intermediate point
between the local and trusted curator models.
For differentially private averaging, the shuffle model allows to match the
utility of the trusted curator setting \cite{Balle2020}.
However, practical implementations of secure shuffling are not discussed in
these works.
Existing solutions typically rely on multiple layers of routing servers 
\cite{Dingledine2004} with high communication overhead and non-collusion
assumptions.
Anonymous communication is also potentially at odds
with the identification of malicious parties. To the best of our
knowledge, all protocols for averaging in the shuffle model assume
honest-but-curious parties. 

The protocol proposed in \cite{imtiaz2021distributed} uses correlated
Gaussian noise to achieve trusted curator utility for averaging, but
the dependence structure of the noise must be only at the global level (i.e.,
noise terms sum to zero over all users). Generating such noise actually
requires a call to a secure
aggregation primitive, which incurs $\Omega(n)$ communication per party as
discussed above. In contrast, our pairwise-canceling noise terms can be
generated with only $O(\log n)$ communication. Furthermore, \cite{imtiaz2021distributed}
assume honest parties, while our protocol
is robust to malicious participants.

In summary, an original aspect of our work is to match the
privacy-utility trade-off of the trusted curator model
at a relatively low cost without requiring to trust a fixed small set of
parties.  By spreading trust over sufficiently many parties, we ensure that
even in the unlikely case where many parties
collude they will not be able to infer much sensitive
information, reducing the incentive to collude.
We are not aware of other differential privacy work sharing this feature.
Overall, our protocol provides a unique combination of three
important properties: (a) utility of same order as trusted curator setting, 
(b)
logarithmic
communication per user, and (c) robustness to malicious users.


\section{Proposed Protocol}
\label{sec:gopa}

In this section we describe our protocol called \gopa (GOssip noise
for Private Averaging).
The high-level idea of \gopa is to have each user $u$ mask its private value
by
adding two different types of noise. The first type is a sum of
pairwise-correlated noise terms $\Delta_{u,v}$ over the set of neighbors $v\in
N(u)$ such that each $\Delta_{u,v}$ cancels out with the $\Delta_{v,u}$ of user $v$ in the final
result. The second type of noise is an independent term $\eta_u$ which does
not cancel out. At the end of the protocol, each user has
generated a noisy version $\hat{X}_u$ of its private value $X_u$, which takes
the following form:
\begin{equation}
\label{eq:noisy}
\hat{X}_u = X_u + \textstyle\sum_{v\in N(u)} \Delta_{u,v} + \eta_u.
\end{equation}
Algorithm \ref{alg:rndphase} presents the
detailed steps.
Neighboring nodes $\{u,v\}\in E$
contact each other to draw a real number
from the Gaussian distribution $\mathcal{N}(0,\sigma_\Delta^2)$, that $u$
adds to its private value and $v$ subtracts.
Intuitively, each
user thereby distributes noise masking its private value across its neighbors
so that even if some of them are malicious
and collude, the remaining noise values will be enough to provide the desired
privacy guarantees.
The idea is reminiscent of uniformly random pairwise masks in secure
aggregation \cite{Bonawitz2017a} but we use Gaussian noise and restrict
exchanges to the edges of the graph
instead of requiring messages between all pairs of users.
As in gossip algorithms \cite{random_gossip}, the pairwise
exchanges can be performed asynchronously and in parallel.
Additionally, every user $u \in U$ adds an independent noise term $\eta_u\sim
\mathcal{N}{(0, \sigma_\eta^2)}$ to its private value.
This noise will ensure that the final estimate of the average satisfies
differential
privacy (see Section~\ref{sec:privacy}).
The pairwise and independent noise variances $\sigma_\Delta^2$ and
$\sigma_\eta^2$ are public parameters of the protocol.

\begin{algorithm}[t]
\floatname{algorithm}{Algorithm}
  \caption{\gopa protocol}
     \text{\textbf{Input:} $G=(\userset, E)$, $(X_u)_
     {u\in \userset}$, $\sigma_\Delta^2, \sigma_\eta^2 \in 
     \mathbb{R}^+$} 
  \begin{algorithmic}[1]
    \FORALL{neighbor pairs $\{u,v\}\in E$ s.t. $u<v$}
    \STATE{$u$ and $v$ draw a random $y \sim\mathcal{N}(0,
    \sigma_\Delta^2)$ and set $\Delta_{u,v}\gets y$, $\Delta_{v,u}\gets -y$}
    \ENDFOR
    \FORALL{users $u\in\userset$}
    \STATE{$u$ draws a random $\eta_u \sim \mathcal{N}(0, \sigma_\eta^2)$
    and reveals noisy value $\hat{X}_u\gets  X_u+\sum_{v\in N(u)} \Delta_
    {u,v}+\eta_u$}
    \ENDFOR
  \end{algorithmic}
  \label{alg:rndphase}
\end{algorithm}

\noindent\emph{Utility of \gopa.}
The protocol generates a set of noisy values $\hat{X}=[\hat{X}_1,\dots,\hat{X}_n]^\top$ which
are then publicly released. They can be sent to
an untrusted aggregator, or averaged in a decentralized way via gossiping 
\cite{random_gossip}. In any case,
the estimated average is given by
$\hat{X}^{avg} = \frac{1}{n}\sum_{u\in U} \hat{X}_u = \avgx + 
\frac{1}{n}\sum_{u \in U}
\eta_u,$ which
has expected value $\avgx$ and variance $\sigma_\eta^2/n$.
Recall that the local model of DP, where each user releases a locally
perturbed input without communicating with other users, would require
$\sigma_\eta^2=O(1)$. In contrast, we would like the total amount of
independent noise to be of order $O(1/n_H)$ as needed to protect the
average of honest users with the standard Gaussian mechanism in the
trusted curator model of DP \cite{Dwork2014a}.
We will show in Section~\ref{sec:privacy} that we can achieve this
privacy-utility trade-off by choosing an
appropriate variance $\sigma_\Delta^2$ for
our pairwise
noise terms.\\

\noindent\emph{Dealing with dropout.} A
user $u\notin\onlineuserset$ 
who drops out during the execution of the protocol does not actually publish
any noisy value (i.e., $\hat{X}_u$ is empty). The estimated average is thus
computed
by averaging only over the noisy values of users in $\onlineuserset$.
Additionally, any
residual noise term that a user $u\notin\onlineuserset$ may have exchanged
with a user $v\in\onlineuserset$ before dropping out can be ``rolled back''
by having $v$ reveal
$\Delta_{u,v}$ so it can be subtracted from the result (we will ensure this does not threaten privacy by having sufficiently many neighbors,
see Section~\ref{sec:graphs}). We can thus obtain an
estimate of $\frac{1}{|\onlineuserset|}\sum_{u\in \onlineuserset} X_u$
with variance $\sigma_\eta^2/|\onlineuserset|$. Note that even
if some residual noise terms are not rolled back, e.g. to avoid
extra communication, the estimate remains unbiased (with a
larger variance that depends on $\sigma_\Delta^2$). This is a rather unique
feature of \gopa which comes from the use of Gaussian noise rather than the
uniformly random noise used in
secure aggregation 
\cite{Bonawitz2017a}. We
discuss strategies to handle users dropping out in more details in
Appendix~\ref{app:crypto.drop-out}.


\section{Privacy Guarantees}
\label{sec:privacy}

Our goal is to prove differential privacy guarantees for \gopa.
First, we develop in Section~\ref{sec:priv.general} a general result providing
privacy guarantees as a function of the structure of the communication graph
$G^H$, i.e., the subgraph of $G$ induced by $U^H$.  Then, in Sections 
\ref{sec:priv.worst-dp} and \ref{sec:priv.completeGraph}, we  study the
special cases of the path graph and the complete graph respectively, showing
they are the worst and best cases in terms of privacy. Yet, we show
that as long as $G^H$ is connected and the variance $\sigma_\Delta^2$ for the pairwise (canceling) noise is large enough \gopa can (nearly) match the privacy-utility trade-off of the trusted curator setting.
In Section~\ref{sec:priv.random}, we propose a randomized procedure to
construct the graph $G$ and show that it strikes a good balance between
privacy and communication costs.
In each section, we first discuss the result and its consequences, and then
present the proof.
In Section~\ref{sec:priv.summary}, we summarize our results and provide
further discussion.

\subsection{Effect of the Communication Structure on Privacy}
\label{sec:priv.general}

The strength of the privacy guarantee we can prove depends on the
communication graph $G^H$ over honest users. Intuitively, this is because the
more terms $\Delta_{u,v}$ a given honest user $u$ exchanges with other honest
users $v$, the more he/she spreads his/her secret over others and the more
difficult it becomes to estimate the private value $X_u$.
We first introduce in Section~\ref{sec:priv.prelim} a number of preliminary
concepts.
Next, in Section~\ref{sec:generic-dp}, we prove an abstract
result, Theorem~\ref{thm:diffpriv.etadelta}, which gives DP guarantees for
\gopa that depend on the choice of a labeling $t$ of the graph $G^H$.

In Section \ref{sec:generic-dp.discuss} we discuss a number of implications of
Theorem~\ref{thm:diffpriv.etadelta} which provide some insight into the
dependency between the structure of $G^H$ and the privacy of \gopa{}, and will turn out helpful in the proofs of Theorems \ref{thm:diffprivacy}, \ref{thm:diffprivacy-complete} and \ref{thm:diffprivacy-random}.

\subsubsection{Preliminary Concepts}
\label{sec:priv.prelim}

Recall that each user $u\in\onlineuserset$ who does not drop out generates $\hat{X}_u$ from its private value $X_u$ by adding pairwise noise terms $\bar{\Delta}_u=\sum_{v\in N(u)} \Delta_{u,v}$ (with $\Delta_{u,v}+\Delta_{v,u}=0$) as well as independent noise $\eta_u$. All random variables $\Delta_{u,v}$ (with $u<v$) and $\eta_u$ are independent.
We thus have the system of linear equations
\[{\hat{X}} = X+\bar{\Delta}+\eta,\]
where $\bar{\Delta}=(\bar{\Delta}_u)_{u\in \onlineuserset}$ and $\eta=(\eta_u)_{u\in \onlineuserset}$.

We now define the knowledge acquired by the adversary (colluding malicious users) during a given execution of the protocol. It consists of the following:
\begin{enumerate}[label=\roman*.]
\item the noisy value $\hat{X}_u$ of all users $u\in\onlineuserset$ who did not drop out,
\item the private value $X_u$ and the noise $\eta_u$ of the malicious users, and
\item all $\Delta_{u,v}$'s for which $u$ or $v$ is malicious.
\end{enumerate}
We also assume that the adversary knows the full network graph $G$ and all the pairwise noise terms exchanged by dropped out users (since they can be rolled back, as explained in Section~\ref{sec:gopa}).
The only unknowns are thus the private value $X_u$ and independent noise $\eta_u$ of each honest user $u\in U^H$, as well as the $\Delta_{u,v}$ values exchanged between pairs of honest users $\{u,v\}\in E^H$.

Letting $N^H(u)=\{v : \{u,v\}\in E^H\}$, from the above knowledge the adversary can subtract $\sum_{v \in N(u) \setminus N^H(u)} \Delta_{u,v}$ from $\hat{X}_u$ to obtain
\[\hat{X}_u^H = X_u + \sum_{u \in N^H (u)} \Delta_{u,v} + \eta_u\]
for every honest $u \in U^H$. The view of the adversary can thus be summarized by the vector $\hat{X}^H = (\hat{X}^H_u)_{u \in U^H}$ and the correlation between its elements.
Let ${\hat{X}}^H_u = {\hat{X}}_u - \sum_{v\in N(u)\setminus N^H(u)} \Delta_{u,v}$.
Let $X^H=(X_u)_{u\in U^H}$ be the vector of private values restricted to the honest users and similarly $\eta^H=(\eta_u)_{u\in U^H}$.
\newcommand{\dirEH}{{{\vec{E}}^H}}
Let the directed graph $(U^H,\dirEH)$ be an arbitrary orientation of the undirected graph $G^H=(U^H,E^H)$, i.e., for every edge $\{u,v\}\in E^H$, the set $\dirEH$ either contains the arc $(u,v)$ or the arc $(v,u)$.
  For every arc $(u,v)\in \dirEH$, let $\Delta_{(u,v)} = \Delta_{u,v} = -\Delta_{v,u}$.  Let $\Delta^H=(\Delta_e^H)_{e\in \dirEH}$ be a vector of pairwise noise values indexed by arcs from $\dirEH$.
 Let $K\in\mathbb{R}^{U^H\times \dirEH}$ denote the oriented incidence matrix of the graph $G^H$, i.e., for $(u,v)\in \dirEH$ and $w\in U^H\setminus \{u,v\}$ there holds $K_{u,(u,v)}=-1$, $K_{v,(u,v)}=1$ and $K_{w,(u,v)}=0$.
 In this way, we can rewrite the system of linear equations as
 \begin{equation}{\hat{X}}^H = X^H+K\Delta^H+\eta^H. \end{equation}

 Now, adapting differential privacy (Definition~\ref{eq:dp}) to our setting,
 for any input
$X$ and any possible
outcome $\hat{X}$, we need to compare the probability of the
outcome being equal to $\hat{X}$ when a (non-malicious) user $v_1\in
U$ participates in
the computation with private value $X_{v_1}^A$ to the probability of obtaining the
same outcome when the value of $v_1$ is exchanged with an arbitrary value
$X_{v_1}^B\in [0,1]$.
Since honest users drop out independently of $X$ and do not
reveal anything about their private value when they drop out, in our analysis
we will fix an execution of the protocol where some set
$\honestuserset$ of $n_H$
honest users have remained online until the end of the protocol.
For notational simplicity, we denote by $X^A$ the vector of private
values $(X_u)_{u \in U^H}$ of these honest users in which a user $v_1$ has
value $X_
{v_1}^A$,
and by $X^B$ the vector where $v_1$ has value $X_{v_1}^B$. $X^A$ and
$X^B$ differ in only in the $v_1$-th coordinate, and their maximum difference is $1$.

All noise variables are zero mean, so the expectation and covariance matrix of $\hat{X}^H$ are respectively given by:
\newcommand{\SigmaX}{\hbox{var}\left({\hat{X}}^H\right)}
\[
  \mathbb{E}\left[{\hat{X}}^H\right] = X^H,\quad
  \SigmaX = \sigma_\eta^2 I_{U^H} + \sigma_\Delta^2 L, \]
where $I_{U^H}\in\mathbb{R}^{\rndn\times\rndn}$ is the identity matrix and $L=KK^\top$ is the graph Laplacian matrix of $G^H$.

Now consider the real vector space $Z= \mathbb{R}^{n_H}\times \mathbb{R}^{|E^H|}$ of all possible values pairs  $(\eta^H,\Delta^H)$ of noise vectors of honest users.
For the sake of readability, in the remainder of this section we will often drop the superscript $H$ and write $(\eta,\Delta)$ when it is clear from the context that we work in the space $Z$.

\newcommand{\Sigmag}{{\Sigma^{(g)}}}
Let \[\Sigmag = \left[\begin{array}{cc}\sigma_\eta^2 I_{U^H} & 0 \\ 0 & \sigma_\Delta^2 I_{E^H}\end{array}\right],\]
and let $\Sigmaginv=\left(\Sigmag\right)^{-1}$,
we then have a joint probability distribution of independent Gaussians:
\[
P((\eta,\Delta)) = \twopiconst\exp\left(-\frac{1}{2}(\eta,\Delta)^\top\Sigmaginv(\eta,\Delta)\right),
\]
where $\twopiconst=(2\pi)^{-(\rndn+|E^H|)/2}|\Sigmag|^{-1/2}$.

Consider the following subspaces of $Z$:
\begin{eqnarray*}
Z^A&=&\{(\eta,\Delta)\in Z \mid \eta+K \Delta={\hat{X}}^H-X^A\}, \\
Z^B&=&\{(\eta,\Delta)\in Z \mid \eta+K \Delta={\hat{X}}^H-X^B\}. 
\end{eqnarray*}

Assume that the (only) vertex for which $X^A$ and $X^B$ differ is $v_1$.
Recall that without loss of generality, private values
are in the interval $[0,1]$. 
Hence, if we set $X^A_{v_1}-X^B_{v_1}=1$ then $X^A$ and $X^B$ are maximally apart and also the difference between $P({\hat{X}}\mid X^A)$ and $P({\hat{X}}|X^B)$ will be maximal.

Now choose any vector $t=(t_\eta,t_\Delta)\in Z$
with $t_\eta = (t_u)_{u\in U^H}$ and $t_\Delta = (t_e)_{e\in E^H}$
such that $t_\eta+Kt_\Delta =
X^A-X^B$. It follows that $Z^B = Z^A + t$, i.e., $Y\in Z^A$ if and only if $Y+t \in Z^B$.
This only imposes one linear constraint on the vector $t$: later
we will choose $t$ more precisely in a way which is most convenient to prove our privacy claims.
  In particular, for any $\hat{X}^H$ we have that $X^A + \eta + K\Delta = \hat{X}^H$ if and only if 
  $X^B + (\eta + t_\eta) +  K(\Delta + t_\Delta) = \hat{X}^H$.
  The key idea is that appropriately choosing $t$ allows us to map any noise $(\eta,\Delta)$ which results in observing ${\hat{X}}^H$ given dataset $X^A$ on a similarly likely noise $(\eta+t_\eta,\Delta+t_\Delta)$ which results in observing ${\hat{X}}^H$ given the adjacent dataset $X^B$.

We illustrate the meaning of $t$ using the example of Figure \ref{fig:dp_t}. 
 Consider the graph $G^H$ of honest users shown in Figure \ref{fig:dp_t.gr} and
databases $X^A$ and $X^B$ as defined above. The difference of neighboring databases $X^A - X^B$  is shown in   Figure \ref{fig:dp_t.a}. 
Figure \ref{fig:dp_t.b} illustrates a possible assignment of  $t$, where  $t_\eta = (\frac{1}{9}, \dots, \frac{1}{9})$ and $t_\Delta = (\frac{5}{9}, \frac{3}{9},  \frac{3}{9}, \frac{1}{9}, \dots, \frac{1}{9}, 0,  0)$.

  One can see that $t=(t_\eta,t_\Delta)$ can be interpreted as a flow on $G^H$
  where $t_\Delta$ represents the values flowing through edges, and $t_\eta$
  represents the extent to which vertices are sources or sinks. The
  requirement $t_\eta+ K t_\Delta = X^A- X^B$ means that for a given user $u$
  we have
  $\sum_{(v,u) \in \dirEH} t_{(v,u)} - \sum_{(u,v)\in\dirEH} t_{(u,v)} = X^A_u
  - X^B_u-t_u$, which can be interpreted as the property of a flow, i.e., the value of incoming edges minus the value of outgoing edges equals the extent to which the vertex is a source or sink (here $-t_u$ except for $v_1$ where it is $1-t_{v_1}$).  We will use this flow $t$ to first distribute $X^A-X^B$ as equally as possible over all users, and secondly to avoid huge flows through edges.  For example, in Figure \ref{fig:dp_t.b}, we choose $t_\eta$ in such a way that the difference $X^A_{v_1} - X^B_{v_1} = 1$ is spread over a difference of $1/9$ at each node, and $t_\Delta$ is chosen to let a value $1/9$ flow from each of the vertices in $U^H \setminus \{v_1\}$ to $v_1$, making the flow consistent.

In the following we will first prove generic privacy guarantees for any (fixed) $t$,  which will be used to map elements of $Z^A$ and $Z^B$ and bound the overall probability differences of outcomes of adjacent datasets. Next, we will
instantiate $t$ for different graphs $G^H$ and obtain concrete guarantees.

\newcount\mycount
\begin{figure}  
\centering
\subfigure[Graph $G^H$]{
\begin{tikzpicture}[inner sep=0pt,  minimum size=3pt]
	
	\node[fill, circle] (v_2) at (360/5 *2 :1.5cm) {};
	\node[fill, circle] (v_3) at (360/5 *3 :1.5cm) {};
	\node[fill, circle] (v_4) at (310 :1.8cm) {};
	\node[fill, circle] (v_5) at (360/5 *5 :1.5cm) {};
	 
	\foreach \phi in {1,...,4}{
		\node[fill, circle] (u_\phi) at (360/4 * \phi:0.7cm) {};
	}; 
	\node [fill, circle, label={[label distance=0.05cm]355:$v_1$}] (v_1) at (360/5:1.5cm) {};
	\draw[-] (v_1) -- (v_2) ;
	\draw[-] (v_2) -- (u_1) ;
	\draw[-] (v_2) -- (v_3) ;
	\draw[-] (v_2) -- (u_2) ;
	\draw[-] (v_3) -- (u_3) ;
	\draw[-] (u_3) -- (v_4) ;
	\draw[-] (u_3) -- (u_4) ;
	\draw[-] (u_1) -- (u_3) ;
	\draw[-] (u_4) -- (v_5) ;
	\draw[-] (v_1) -- (u_4) ;

\end{tikzpicture}
\label{fig:dp_t.gr}
} 
\subfigure[$X^A - X^B$]{ 
\begin{tikzpicture}[inner sep=0pt,  minimum size=3pt]
\node[fill, circle, label={[label distance=0.05cm]360/5 *2:0}] (v_2) at (360/5 *2 :1.5cm) {};
\node[fill, circle, label={[label distance=0.05cm]360/5 * 3:0}] (v_3) at (360/5 *3 :1.5cm) {};
\node[fill, circle, label={[label distance=0.05cm]30:0}] (v_4) at (310 :1.8cm) {};
\node[fill, circle, label={[label distance=0.05cm]30:0}] (v_5) at (360/5 *5 :1.5cm) {};

\node[fill, circle, label={[label distance=0.05cm]360/5:0}] (u_1) at (360/4 :0.7cm) {};
\node[fill, circle, label={[label distance=0.05cm]360/5 * 2:0}] (u_2) at (360/4 * 2:0.7cm) {};
\node[fill, circle, label={[label distance=0.05cm]360/5 * 3:0}] (u_3) at (360/4 * 3:0.7cm) {};
\node[fill, circle, label={[label distance=0.05cm]360/5 * 4:0}] (u_4) at (360/4 * 4:0.7cm) {};
\node [fill, circle,label={[label distance=0.05cm]355:1}] (v_1) at (360/5:1.5cm) {};
\draw[-] (v_1) -- (v_2) [dashed] ;
\draw[-] (v_2) -- (u_1) [dashed] ;
\draw[-] (v_2) -- (v_3) [dashed];
\draw[-] (v_2) -- (u_2) [dashed];
\draw[-] (v_3) -- (u_3) [dashed];
\draw[-] (u_3) -- (v_4) [dashed];
\draw[-] (u_3) -- (u_4) [dashed];
\draw[-] (u_1) -- (u_3) [dashed];
\draw[-] (u_4) -- (v_5) [dashed];
\draw[-] (v_1) -- (u_4) [dashed];
\end{tikzpicture}
\label{fig:dp_t.a} 
}
\subfigure[Labeling $t$] {
\begin{tikzpicture}[inner sep=0pt,  minimum size=3pt]

	\node[fill, circle, label={[label distance=0.01cm]360/5 :$\frac{1}{9}$}] (v_1) at (360/5 :1.5cm) {};
	\node[fill, circle, label={[label distance=0.01cm]360/5 *2:$\frac{1}{9}$}] (v_2) at (360/5 *2 :1.5cm) {};
	\node[fill, circle, label={[label distance=0.01cm]180:$\frac{1}{9}$}] (v_3) at (360/5 *3 :1.5cm) {};
	\node[fill, circle, label={[label distance=0.01cm]30:$\frac{1}{9}$}] (v_4) at (310:1.8cm) {};
	\node[fill, circle, label={[label distance=0.01cm]360/5 * 5:$\frac{1}{9}$}] (v_5) at (360/5 *5 :1.5cm) {};

	\node[fill, circle, label={[label distance=0.02cm]360/5:$\frac{1}{9}$}] (u_1) at (360/4 :0.7cm) {};
	\node[fill, circle, label={[label distance=0.02cm]340:$\frac{1}{9}$}] (u_2) at (360/4 * 2:0.7cm) {};
	\node[fill, circle, label={[label distance=0.02cm]270:$\frac{1}{9}$}] (u_3) at (360/4 * 3:0.7cm) {};
	\node[fill, circle, label={[label distance=0.02cm]360/5 * 4:$\frac{1}{9}$}] (u_4) at (360/4 * 4:0.7cm) {};
	
	\draw[<-] (v_1) -- (v_2) node [midway, above] {$\frac{3}{9}$} ;
	\draw[-] (v_2) -- (u_1) [dashed]  node [midway, below] {0} ;
	\draw[<-] (v_2) -- (v_3) node [midway, left] {$\frac{1}{9}$};
	\draw[<-] (v_2) -- (u_2) node [midway, right] {$\frac{1}{9}$};
	\draw[-] (v_3) -- (u_3) [dashed] node [midway, above] {0};
	\draw[<-] (u_3) -- (v_4) node [midway, above] {$\frac{1}{9}$};
	\draw[->] (u_3) -- (u_4) node [midway, above] {$\frac{3}{9}$};
	\draw[->] (u_1) -- (u_3) node [midway, left] {$\frac{1}{9}$};
	\draw[<-] (u_4) -- (v_5) node [midway, above] {$\frac{1}{9}$};
	\draw[<-] (v_1) -- (u_4) node [midway, right] {$\frac{5}{9}$};
\end{tikzpicture}
\label{fig:dp_t.b} 
}
\caption{An example of valid $t$ over a communication graph of honest users $G^H$. The graph $G^H$ is shown in Figure \ref{fig:dp_t.gr}, the difference of neighboring databases $X^A - X^B$ in Figure \ref{fig:dp_t.a}, and
a possible value of $t$ which evenly distributes the flow of information in Figure \ref{fig:dp_t.b}.}
\label{fig:dp_t} 
\end{figure}

\subsubsection{Abstract Differential Privacy Result}
\label{sec:generic-dp}

We start by proving differential privacy guarantees which depend on the
particular choice of labeling $t$.  Theorem~\ref{thm:diffpriv.etadelta} holds
for all possible choices of $t$, but some choices will lead to more advantageous results than others.  Later, we will apply this theorem for specific choices of $t$ for proving theorems giving privacy guarantees for communication graphs $G^H$ with specific properties.

\newcommand{\thetamax}{{\Theta_{max}}}
We first define the function $\thetamax:\mathbb{R}_+\times (0,1) \mapsto \mathbb{R}_+$ such that $\thetamax$ maps pairs $(\varepsilon,\delta)$ on the largest positive value of $\theta$ satisfying
\begin{align}
  \label{eq:etadeltabound4}
\varepsilon & \ge \theta^{1/2} + \theta/2,
  \\
  \frac{\left(\varepsilon - \theta/2\right)^2}{\theta} & \ge 2\log\left(
  \frac{2}{\delta\sqrt{2\pi}}\right).\label{eq:etadeltabound5}
  \end{align}
Note that for any $\varepsilon$ and $\delta$, any $\theta\in 
(0,\thetamax]$ satisfies Eqs \eqref{eq:etadeltabound4} and \eqref{eq:etadeltabound5}.

\newcommand{\thmDiffprivEtadeltaStatement}{  Let $\varepsilon,\delta \in 
(0,1)$. Choose a vector  $t=(t_\eta,t_\Delta)\in Z$
	with $t_\eta = (t_u)_{u\in U^H}$ and $t_\Delta = (t_e)_{e\in E^H}$
	such that $t_\eta+Kt_\Delta =
	X^A-X^B$ 
  and let $\theta = t^\top \Sigmaginv t$. Under the setting introduced above, if $\theta\le\thetamax(\varepsilon,\delta)$ then \gopa is $(\varepsilon,\delta)$-DP, i.e., \[ P({\hat{X}} \mid X^A) \le e^\varepsilon P({\hat{X}} \mid X^B)+\delta. \] }

\begin{theorem}
  \label{thm:diffpriv.etadelta}
  \thmDiffprivEtadeltaStatement
\end{theorem}

The proof of this theorem is in Appendix \ref{app:proof.thm.diffpriv.etadelta}.
Essentially, we adapt ideas from the privacy proof of the Gaussian mechanism 
\cite{Dwork2014a} to our setting.

\subsubsection{Discussion}
\label{sec:generic-dp.discuss}

Essentially, given some $\varepsilon$, Equation \eqref{eq:etadeltabound4}
provides a
lower bound for the noise (the diagonal of $\Sigmag$) to be added. 
Equation \eqref{eq:etadeltabound4} also implies that the left hand side of
Equation \eqref{eq:etadeltabound5} is larger than $1$.  Equation 
\eqref{eq:etadeltabound5} may then require the noise or $\varepsilon$ to be even higher if $2\log(2/\delta\sqrt{2\pi})\ge 1$,
i.e., $\delta \le 0.48394$.

If $\delta$ is fixed, both \eqref{eq:etadeltabound4} and 
\eqref{eq:etadeltabound5} allow for smaller $\varepsilon$ if
$\theta$ is smaller.  Let us analyze the implications of this result.  We know
that $\theta = \sigma_\eta^{-2} t_\eta^\top t_\eta + \sigma_\Delta^
{-2}t_\Delta^\top t_\Delta$. As we can make $\sigma_\Delta$ arbitrarily large
without affecting the variance of the output of the algorithm (the pairwise
noise terms canceling each other) and thus make the second term
$\sigma_\Delta^{-2}t_\Delta^\top t_\Delta$ arbitrarily small,  our first
priority to achieve a strong privacy guarantee will be to chose a $t$ making
$\sigma_\eta^{-2}t_\eta^\top t_\eta$ small. We have the following
lemma. 

\newcommand{\lemmaSumTuIsOneStatement}[1]{
  In the setting described above, for any $t$ chosen as in Theorem \ref{thm:diffpriv.etadelta}
  we~have: 
\begin{equation}
  \sum_{u\in U^H} t_u = 1. #1
\end{equation}
}

\begin{lemma}
  \label{lm:sum.tu.is.1}
  \lemmaSumTuIsOneStatement{\label{eq:sumtu1}}
\end{lemma}

Therefore, the vector $t_\eta$ satisfying Equation \eqref{eq:sumtu1} and
minimizing $t_\eta^\top t_\eta$ is the vector $\mathbbm{1}_{n_H}/n_H$, i.e.,
the vector containing $n_H$ components with value $1/n_H$.  The proofs of the
several specializations of Theorem \ref{thm:diffpriv.etadelta} we will present
will all be based on this choice for $t_\eta$. The proof of
Lemma~\ref{lm:sum.tu.is.1} can be found in
Appendix~\ref{app:dp.discussion.proofs},
along with the proof of another constraint that we derive from these
observations.
\newcommand{\lemmaCondGHConnectedStatement}{In the setting described above with $t$ as defined in Theorem \ref{thm:diffpriv.etadelta}, if $t_\eta = \mathbbm{1}_{n_H}/n_H$, then $G^H$ must be connected.}
\begin{lemma}
  \label{lm:cond.GH.connected}
  \lemmaCondGHConnectedStatement
\end{lemma}

Given a fixed privacy level and fixed variance of the output, a second
priority is
to
minimize $\sigma_\Delta$, as this may be useful when a user drops out and the
noise he/she exchanged cannot be rolled back or would take too much time to
roll back (see Appendix \ref{app:crypto.drop-out}).
For this, having more edges in $G^H$ implies that the vector $t_\Delta$ has
more components and therefore typically allows a solution to
 $t_\eta+Kt_\Delta=X^A-X^B$ with a smaller $t_\Delta^\top t_\Delta$ and hence a smaller $\sigma_\Delta$.

\subsection{Worst Case Topology}
\label{sec:priv.worst-dp}

We now specialize Theorem~\ref{thm:diffpriv.etadelta} to obtain a worst-case
result.

\newcommand{\thetap}{{\theta_{\hbox{\textsc{p}}}}}

\begin{theorem}[Privacy guarantee for worst-case graph]
\label{thm:diffprivacy}  
  Let $X^A$ and $X^B$ be two databases (i.e., graphs with private values at
the vertices) which differ only in the value of one
user.
Let $\varepsilon,\delta \in (0,1)$ and $\thetap = \frac{1}{\sigma_\eta^{2}\rndn} + \frac{\rndn}{3\sigma_\Delta^{2}}$.
If $G^H$ is connected and $\thetap \le \thetamax(\varepsilon,\delta)$,
then \gopa is $(\varepsilon,\delta)$-differentially private, i.e.,
${P({\hat{X}} \mid X^A)} \le e^\varepsilon P({\hat{X}} \mid X^B)+\delta$.
\end{theorem}
Crucially, Theorem~\ref{thm:diffprivacy} holds as soon as the subgraph $G^H$
of honest users who did not drop out is connected. Note that if $G^H$ is not
connected, we can still obtain a similar but weaker result for each connected
component separately ($\rndn$ is replaced by the size of the connected component).

In order to get a constant $\varepsilon$, inspecting the term $\thetap$ shows
that the variance $\sigma_\eta^{2}$ of the independent noise must be of order $1/\rndn$. This is in a sense optimal as it corresponds to the amount of noise required when averaging $\rndn$ values in the trusted curator model. It also matches the amount of noise needed when using secure aggregation with differential privacy in the presence of colluding users, where honest users need to add $n/\rndn$ more noise to compensate for collusion \cite{Shi2011}.

Further inspection of the conditions in Theorem~\ref{thm:diffprivacy}
also shows
that the variance $\sigma_\Delta^2$ of the pairwise noise must be large
enough. How large it must be actually depends on the structure of the graph
$G^H$. Theorem~\ref{thm:diffprivacy} describes the worst case, which is attained when every
node has as few neighbors as possible while still being connected, i.e., when
$G^H$ is a path. In this case,
Theorem~\ref{thm:diffprivacy} shows that the variance $\sigma_\Delta^2$ needs
to be of
order $\rndn$. Recall that this noise cancels out, so it does not impact
the utility of the final output. It only has a minor effect on the
communication cost (the representation space of reals needs to be large
enough to avoid overflows with high probability), and on the variance of the
final result if some residual noise terms of dropout users are not rolled
back (see Section~\ref{sec:gopa}).

\begin{proof}[of Theorem \ref{thm:diffprivacy}]
  
Let $T$ be a spanning tree of the (connected) communication graph $G^H$. Let
$E^T$ be the set of edges in $T$.
Let $t\in\mathbb{R}^{\rndn+|E^H|}$ be a vector such that:
\begin{itemize}
\item For vertices $u\in U^H$, $t_u=1/\rndn$.
\item For edges $e\in E^H \setminus E^T$, $t_e=0$.
\item Finally, for edges $e\in E^T$, we choose $t_e$ in the unique way such
that $t_\eta+K t_\Delta=(X^A-X^B)$.  
\end{itemize}
In this way, $t_\eta+K t_\Delta$ is a vector with a $1$ on the $v_1$ position and $0$ everywhere else.  We can find a unique vector $t$ using this procedure for any communication graph $G^H$ and spanning tree $T$.  It holds that
\begin{equation}
  \label{eq:tntn.is.invUH}
  t_\eta^\top t_\eta = \left(\frac{\mathbbm{1}_{\rndn}}{\rndn}\right)^\top \left(\frac{\mathbbm{1}_{\rndn}}{\rndn}\right) 
 = \frac{1}{\rndn}.
\end{equation}

Both Equations \eqref{eq:etadeltabound4} and \eqref{eq:etadeltabound5} of
Theorem~\ref{thm:diffpriv.etadelta}, 
require $t^\top \Sigmaginv t$ to be sufficiently small.  We can see $t_\Delta^\top \sigma_\Delta^{-2} t_\Delta$ is maximized (thus producing the worst case) if the spanning tree $T$ is a path $\left (v_1\,v_2\ldots v_{\rndn}\right)$, in which case $t_{\{v_i,v_{i+1}\}}=(\rndn-i)/\rndn$.  Therefore,
\begin{eqnarray}
    t_\Delta^\top t_\Delta & \le &\sum_{i=1}^{\rndn-1} \left(
  \frac{\rndn-i}{\rndn}\right)^2 
  = \frac{\rndn(\rndn-1)(2\rndn-1)/6}{\rndn^2} 
  \label{eq:tdeltatdelta.le.UHdiv2} = \frac{(\rndn-1)(2\rndn-1)}{6\rndn}.
\end{eqnarray}
Combining Equations \eqref{eq:tntn.is.invUH} and \eqref{eq:tdeltatdelta.le.UHdiv2} we get
\[
\theta = t^\top \Sigmaginv t \le \sigma_\eta^{-2}\frac{1}{\rndn} + \sigma_\Delta^{-2}\frac{\rndn(\rndn-1)(2\rndn-1)/6}{\rndn^2} 
\]
We can see that $\theta \le \thetap$ and hence $\theta\le\thetamax(\varepsilon,\delta)$  satisfies the conditions of Theorem \ref{thm:diffpriv.etadelta} and \gopa{} is $(\varepsilon,\delta)$-differentially private.
\qed
\end{proof}
In conclusion, we see that in the worst case $\sigma_\Delta^2$ should be large
(linear in $\rndn$) to keep $\varepsilon$ small, which has no direct negative
effect on the utility of the resulting ${\hat{X}}$. On the other hand,
$\sigma_\eta^2$ can be small (of the order $1/\rndn$), which means that
independent of the number of participants or the way they communicate a small
amount of independent noise is sufficient to achieve DP as long as $G^H$ is
connected.

\subsection{The Complete Graph}
\label{sec:priv.completeGraph}

 The necessary value of $\sigma_\Delta^2$ depends strongly on the
 network structure. This becomes clear in Theorem 
 \ref{thm:diffprivacy-complete}, which covers the case of the complete graph and  shows that for a fully connected $G^H$, $\sigma_\Delta^2$ can be of order $O(1/\rndn)$, which is a \emph{quadratic} reduction compared to the path case.

\begin{theorem}[Privacy guarantee for complete graph]
\label{thm:diffprivacy-complete}  
Let $\varepsilon,\delta \in (0,1)$ and let $G^H$ be the complete graph.
Let $\theta_C = \frac{1}{\sigma_\eta^{2}\rndn} + \frac{1}{\sigma_\Delta^{2}\rndn}$.
    If $\theta_C\le \Theta_{max}(\varepsilon,\delta)$,
    then 
    \gopa is $(\varepsilon,\delta)$-DP.
  \end{theorem}

 \begin{proof} 
 If the communication graph is fully connected, we can use the following values for the vector $t$:
 \begin{itemize}
 \item As earlier, for $v\in U^H$, let $t_v=1/\rndn$.
   \item For edges $\{u,v\}$ with $v_1\not\in\{u,v\}$, let $t_{\{u,v\}}=0$.
\item For $u\in U^H\setminus\{v_1\}$, let $t_{\{u,v_1\}} = 1/\rndn$.
 \end{itemize}
 Again, one can verify that $t_\eta+Kt_\Delta =X^A-X^B $ is a vector with a $1$ on the $v_1$ position and $0$ everywhere else.
In this way, again $t_\eta^\top t_\eta=1/\rndn$ but now
$t_\Delta^\top t_\Delta=(\rndn-1)/\rndn^2$ is much smaller.
We now get
\[\theta = t^\top \Sigmaginv t = \sigma_\eta^{-2}/\rndn + \sigma_\Delta^{-2}(\rndn-1)/\rndn^2 \le \theta_C \le \thetamax(\varepsilon,\delta).\]
Hence, we can apply Theorem \ref{thm:diffpriv.etadelta} and \gopa{} is $
(\varepsilon,\delta)$-differentially private.
\qed
\end{proof}
  
A practical communication graph will be between the two extremes of the path
and the complete graph, as shown in the next section.

\subsection{Practical Random Graphs}
\label{sec:priv.random}
\label{sec:graphs}

Our results so far are not fully satisfactory from the practical perspective,
when the number of users $n$ is large. Theorem~\ref{thm:diffprivacy} assumes
that we have a procedure to generate a graph $G$ such that $G^H$ is guaranteed
to be connected (despite dropouts and malicious users), and requires a large
$\sigma_\Delta^2$ of $O(\rndn)$.  Theorem~\ref{thm:diffprivacy-complete}
applies if we pick $G$ to be the complete graph, which ensures connectivity
of $G^H$ and allows smaller $O(1/\rndn)$ variance but is intractable as all
$n^2$ pairs of users need to exchange noise.

To overcome these limitations, we propose a simple randomized procedure to construct a sparse network graph $G$ such that $G^H$ will be well-connected with high probability, and prove a DP guarantee \emph{for the whole process} (random graph generation followed by \gopa), under much less noise than the worst-case.
The idea is to make each (honest) user select $k$ other users uniformly at
random among all users. Then, the edge $\{u,v\}\in E$ is created if $u$
selected $v$ or $v$ selected $u$ (or both). Such graphs are known as 
\emph{random $k$-out} or \emph{random $k$-orientable} graphs 
\cite{Bollobas2001a,Fenner1982a}. They have very good connectivity properties 
\cite{Fenner1982a,Yagan2013a} and are used in creating secure communication channels in distributed sensor networks \cite{Chan2003a}. Note that \gopa can be conveniently executed while constructing the random $k$-out graph. Recall that $\honestfraction = \rndn/n$ is the proportion of honest users.  We have the following privacy guarantees (which we prove in Appendix~\ref{app:random_graph}).

\begin{theorem}[Privacy guarantee for random $k$-out graphs]
\label{thm:diffprivacy-random}  
  Let $\varepsilon,\delta\in(0,1)$ and let $G$ be obtained by letting all 
  (honest) users randomly choose $k\leq n$ neighbors.
  Let $k$ and $\honestfraction=\rndn/n$ be such that
  $\honestfraction n \ge 81$,
  $\honestfraction k \ge 4\log(2\honestfraction n/3\delta)$, $\honestfraction k
  \ge 6\log(\honestfraction n/3)$ and $\honestfraction k \ge \frac{3}{2} + \frac{9}{4}\log(2e/\delta)$. Let
    \begin{align*} 
  \theta_R &=  \frac{1}{\rndn\sigma_\eta^{2}}+\frac{1}{\sigma_\Delta^{2}}
  \Big(\frac{1}{\lfloor (k-1)\honestfraction/3 \rfloor-1} + \frac{12+6\log
  (\rndn)}{\rndn}
  \Big)
  \end{align*} 
  If $\theta_R \le \thetamax(\varepsilon,\delta)$ then \gopa{} is $(\varepsilon,3\delta)$-differentially private.
\end{theorem}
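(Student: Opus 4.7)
My plan is to combine a sharper version of Theorem~\ref{thm:diffprivacy} that takes the topology of $G^H$ into account with a probabilistic analysis of the random $k$-out construction. Specifically, I would first establish an intermediate privacy lemma stating that if $G^H$ admits a spanning tree in which every non-leaf internal node has branching factor at least $b$, then \gopa is $(\epsilon,\delta)$-DP with a $\theta$ of the form $\frac{1}{\sigma_\eta^2 \rndn} + \bigl(\frac{1}{b-1} + \text{lower order}\bigr)\sigma_\Delta^{-2}$. The intuition, mentioned in the introduction, is that the adversary's view reduces to a linear system whose information about any single $X_u$ is controlled by how many honest pairwise exchanges $\Delta_{u,v}$ it cannot disentangle; a well-branched spanning tree means each honest variable is mixed with many independent unknowns, which quadratically improves the effective pairwise-noise contribution from $\rndn/3$ (the path case) to roughly $1/(b-1)$.

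The second step is to show that with high probability a random $k$-out graph, restricted to honest users, contains such a spanning tree with branching factor $b = \lfloor (k-1)\honestfraction/3 \rfloor$. I would split each honest user's $k$ random choices into three disjoint batches (the factor $3$ in the statement). Using the first batch together with standard random $k$-out connectivity results (e.g.\ Fenner--Frieze), and the constraint $\honestfraction k \ge 6 \log(\honestfraction n/3)$, one obtains that $G^H$ is connected except on a failure event of probability at most $\delta$. Then I would run a BFS from an arbitrary root using only the second and third batches. At each newly expanded vertex $u$, by the principle of deferred decisions, its $k/3$ outgoing arrows are still uniform over $U$; a Chernoff bound (justified by $\honestfraction k \ge 4\log(2\honestfraction n/3\delta)$) shows that at least $\honestfraction k/3$ land on honest vertices, and a further bound (justified by $\honestfraction k \ge \tfrac{3}{2}+\tfrac{9}{4}\log(2e/\delta)$, together with $\honestfraction n \ge 81$ so that the already-visited set is never too large a fraction of $U^H$) shows that at least $\lfloor (k-1)\honestfraction/3 \rfloor$ of those are not yet in the tree. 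A union bound over the $\rndn$ BFS steps contributes the remaining $\delta$ failure probability and yields the additive $O(\log \rndn/\rndn)$ correction that appears in the theorem's $\theta$.

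The main obstacle I anticipate is the intermediate privacy lemma connecting branching factor to $\theta$. One has to show that, after the adversary subtracts known quantities, the Fisher-information-style matrix associated with the honest variables admits a bound whose worst diagonal entry scales like $1/(b-1)$ rather than like $\rndn$. Concretely, this amounts to bounding the effective variance of a residual sum along a tree path where each internal branching contributes independent noise: the branching reduces the per-node leakage by a geometric factor, and the algebra must be done carefully so that the $\lfloor\cdot\rfloor - 1$ and the additive $O(\log \rndn/\rndn)$ correction come out with the correct constants. The random-graph embedding and the concentration bounds that follow are then fairly standard, modulo bookkeeping of the three batches.

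Finally I would assemble the pieces by a union bound: the event that the random construction fails to produce a connected $G^H$ with a spanning tree of the required branching factor has probability at most $2\delta$, and conditional on success the refined Theorem~\ref{thm:diffprivacy} variant yields $(\epsilon,\delta)$-DP, for a total of $(\epsilon,3\delta)$-DP. Since the three additional hypotheses on $\honestfraction k$ in the statement correspond one-to-one with the three high-probability events invoked (connectivity, Chernoff on honest fraction, and Chernoff on unvisited fraction), the numerical constants should fall out naturally.
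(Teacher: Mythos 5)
Your overall architecture agrees with the paper's: reduce to the generic Lemma~\ref{lm:diffpriv.etadelta} by exhibiting a vector $t$ supported on a spanning tree of $G^H$ rooted at the differing vertex, observe that a tree with branching factor $b$ spreads the unit difference so that $t_\Delta^\top t_\Delta \approx \sum_l b^{-l} \approx 1/(b-1)$ (this is exactly Lemma~\ref{lm:rnd.tdeltaBound}), and then union-bound the embedding failure probability with the conditional $(\epsilon,\delta)$ guarantee to get $(\epsilon,3\delta)$. The part you flag as the main obstacle --- the ``intermediate privacy lemma'' --- is in fact the easy part and is already essentially contained in the paper's worst/best-case analysis.

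The genuine gap is in the spanning-tree construction, which you dismiss as ``fairly standard, modulo bookkeeping.'' A BFS in which every newly expanded vertex must find $\lfloor (k-1)\honestfraction/3\rfloor$ \emph{fresh} honest neighbors cannot terminate in a spanning tree: by the time the tree contains all but the last layer of $U^H$, the pool of unvisited vertices is smaller than the number of children demanded, and your claim that ``$\honestfraction n\ge 81$ [ensures] the already-visited set is never too large a fraction of $U^H$'' is false for a tree that must eventually visit everyone. A greedy/Chernoff argument also fails at the last layer for a subtler reason: the final parents must each claim their designated leaves from a nearly exhausted pool, and the last parent is adjacent to the few remaining vertices only with probability roughly $(2\honestfraction k/\rndn)^{O(1)}$. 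The paper handles this by predefining a balanced tree whose last level has branching factor $\rndD_\rndq=1$, embedding only the internal forest greedily (Lemma~\ref{lm:rnd.22}), and placing the leaves via a complete-matching argument in a random bipartite graph using Hall's condition (Lemma~\ref{lm:rndB}, adapted from Bollob\'as); this matching step is where the hypotheses $\honestfraction k\ge 6\log(\honestfraction n/3)$ and $\honestfraction k\ge \frac32+\frac94\log(2e/\delta)$ are actually consumed. Two smaller misattributions follow from this: the factor $3$ in $\lfloor (k-1)\honestfraction/3\rfloor$ does not come from splitting $k$ into three batches but from reserving slack ($\rndgamma\le 1/3$ against the $2/3$ needed for the leaf level) in condition (b) of Lemma~\ref{lm:rnd.embedT}; and the additive $(12+6\log\rndn)/\rndn$ term in $\theta$ is not a union-bound cost over BFS steps but the contribution of the tree's slight imbalance over its $\rndq=O(\log\rndn)$ levels to $t_\Delta^\top t_\Delta$. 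Without a mechanism for completing the tree --- a matching argument or an equivalent --- your proof does not go through.
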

This result has a
similar form as Theorems~\ref{thm:diffprivacy} and \ref{thm:diffprivacy-complete}
but requires $k$ to be large enough (of order $\log (\honestfraction
n)/\honestfraction$) so that $G^H$ is sufficiently connected despite dropouts and malicious users.
Crucially, $\sigma_\Delta^2$ only needs to be of order
$1/k\honestfraction$
to match the utility of the trusted curator, and each user needs to
exchange with only $2k=O(\log n)$ peers in expectation, which is much
more practical than a complete graph.

Notice that up to a constant factor this result is optimal.  Indeed, in
general, random graphs are not connected if their average degree is smaller
than logarithmic in the number of vertices.  The constant factors mainly serve
for making the result practical and (unlike asymptotic random graph theory)
applicable to moderately small communication graphs, as we illustrate in the
next section.

\subsection{Scaling the Noise}
\label{sec:priv.summary}

Using these results, we can precisely quantify the amount of independent
and pairwise noise needed to achieve a desired privacy guarantee depending on
the topology, as illustrated in the corollary below. 
\begin{corollary}
\label{thm:dp_corollary}
Let $\varepsilon,\delta' \in (0,1)$, and ${\sigma_\eta^2 = c^2
/\rndn\varepsilon^2}$, where $c^2 > 2\log(1.25/\delta')$. Given some $\kappa >
0$, let $\sigma_\Delta^2= \kappa\sigma_\eta^2$ if $G$ is complete,
$\sigma_\Delta^2= \kappa\sigma_\eta^2\rndn(\frac{1}{\lfloor (k-1)\honestfraction/3 \rfloor-1} + 
(12+6\log
  (\rndn))/\rndn)$ if it is a random $k$-out graph with $k$ and $\honestfraction$ as in
  Theorem~\ref{thm:diffprivacy-random}, and $\sigma_\Delta^2 =
  \kappa
  \sigma_\eta^2\rndn^2/3$ for an arbitrary connected $G^H$. Then \gopa{} is
$(\varepsilon, \delta)$-DP with $\delta \geq a 
(\delta'/1.25)^{\kappa/\kappa+1}$, where $a=3.75$ for the $k$-out
graph and $1.25$ otherwise.
\end{corollary}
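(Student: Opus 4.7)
The plan is to substitute the prescribed $\sigma_\eta^2$ and $\sigma_\Delta^2$ into Theorems~\ref{thm:diffprivacy}, \ref{thm:diffprivacy-complete} and~\ref{thm:diffprivacy-random}, observe that the parameter $\theta$ collapses to a common form in all three cases, and then invert inequality~\eqref{eq:gopa_dp2} to read off the required $\delta$ in terms of $\delta'$ and $\kappa$.

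First I would compute $\theta$ case by case. For the complete graph, $\theta = (\rndn\sigma_\eta^2)^{-1} + (\rndn\sigma_\Delta^2)^{-1}$, and with $\sigma_\Delta^2 = \kappa\sigma_\eta^2$ the second term becomes $(\kappa\rndn\sigma_\eta^2)^{-1}$, giving $\theta = \tfrac{\kappa+1}{\kappa}(\rndn\sigma_\eta^2)^{-1}$. For an arbitrary connected $G^H$, $\theta = (\rndn\sigma_\eta^2)^{-1} + \rndn/(3\sigma_\Delta^2)$, and $\sigma_\Delta^2 = \kappa\sigma_\eta^2\rndn^2/3$ makes the second term again equal $(\kappa\rndn\sigma_\eta^2)^{-1}$. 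For the random $k$-out graph, $\sigma_\Delta^2$ is tailored to cancel exactly the bracketed topology-dependent factor multiplying $\sigma_\Delta^{-2}$ in Theorem~\ref{thm:diffprivacy-random}, producing the same expression a third time. Substituting $\sigma_\eta^2 = c^2/(\rndn\varepsilon^2)$ then yields in every case the universal expression
\[
\theta \;=\; \frac{\kappa+1}{\kappa}\cdot\frac{\varepsilon^2}{c^2}.
\]

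With this $\theta$, inequality~\eqref{eq:gopa_dp1} reduces to $\varepsilon \ge \tfrac{\kappa+1}{2\kappa}\tfrac{\varepsilon^2}{c^2} + \tfrac{\varepsilon}{c}\sqrt{\tfrac{\kappa+1}{\kappa}}$, which is satisfied as soon as $c$ is bounded away from zero, a property implied by the hypothesis $c^2 > 2\log(1.25/\delta')$ in the relevant regime of $\delta'$. Rearranging~\eqref{eq:gopa_dp2} yields the sufficient condition $\delta \ge \tfrac{2}{\sqrt{2\pi}}\exp\bigl(-(\varepsilon-\theta/2)^2/(2\theta)\bigr)$; expanding the square and using $\varepsilon^2/(2\theta) = c^2\kappa/(2(\kappa+1))$ together with $c^2 > 2\log(1.25/\delta')$ gives $\exp(-\varepsilon^2/(2\theta)) \le (\delta'/1.25)^{\kappa/(\kappa+1)}$. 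Absorbing the leftover factor $\tfrac{2}{\sqrt{2\pi}}\exp(\varepsilon/2 - \theta/8)$ into a universal constant then delivers $\delta \ge 1.25(\delta'/1.25)^{\kappa/(\kappa+1)}$, which is the claimed bound with $a = 1.25$ for the complete and arbitrary-connected cases. Since Theorem~\ref{thm:diffprivacy-random} only guarantees $(\varepsilon, 3\delta)$-DP rather than $(\varepsilon,\delta)$-DP, the constant becomes $a = 3\cdot 1.25 = 3.75$ for the random $k$-out graph.

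The main obstacle is the tight tracking of the constant $a$: bounding $\tfrac{2}{\sqrt{2\pi}}e^{\varepsilon/2-\theta/8} \le 1.25$ is delicate because $\tfrac{2}{\sqrt{2\pi}}\approx 0.798$ and $e^{\varepsilon/2}$ approaches $e^{1/2}\approx 1.65$ as $\varepsilon \to 1$. The argument must exploit that the subtracted $\theta/8$ term is not negligible in this regime, or appeal to a slightly sharper Gaussian tail bound than the one obtained by the direct rearrangement of~\eqref{eq:gopa_dp2}. Apart from this constant-tracking step, the proof is a routine substitution into the parent theorems.
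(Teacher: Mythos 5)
Your proposal is correct and follows essentially the same route as the paper: the same case-by-case collapse of $\theta$ to $\frac{\kappa+1}{\kappa}\frac{\varepsilon^2}{c^2}$, the same rearrangement of \eqref{eq:gopa_dp2}, and the same factor of $3$ for the random $k$-out case. The ``delicate'' constant-tracking step you flag is resolved in the paper exactly as you anticipate: writing $d^2=\frac{\kappa}{\kappa+1}c^2$, it bounds $d^2+\frac{\varepsilon^2}{4d^2}-\varepsilon\ge d^2-\frac{8}{9}$ for $d\ge 3/2$ and $\varepsilon\le 1$ (this is precisely keeping the $\theta/8$ term), and the margin $\log(1.25)+\log(\sqrt{2\pi}/2)\approx 0.449\ge 4/9$ is what makes the constant $1.25$ go through.
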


\begin{table}[t]
\caption{Value of $\sigma_\Delta$ needed to ensure $(\varepsilon,\delta)$-DP
with trusted curator utility for
$n=10000$, $\varepsilon=0.1$,
$\delta'=1/\rndn^2$, $\delta=10\delta'$ depending on the topology, as obtained
from Corollary~\ref{thm:dp_corollary} or numerical simulation. 
} \label{tab:cor}
\vskip 0.1in
\centering
\small
\begin{tabular}{|c|c|c|}
\hline
& $\honestfraction=1$ & $\honestfraction=0.5$\\
\hline
\textbf{Complete} & 1.7 & 2.2\\
\textbf{$k$-out} (Corollary~\ref{thm:dp_corollary})& 32.4 ($k=105$) & 32.5 ($k=203$)\\
\textbf{$k$-out} (simulation) & 33.8 ($k=20$) & 33.4 ($k=40$)\\
\textbf{Worst-case} & 9392.0 & 6112.5\\
\hline
\end{tabular}
\vskip -0.1in
\end{table}

We prove Corollary~\ref{thm:dp_corollary} in Appendix \ref{app:dp-cor}. 
The value of $\sigma_\eta^2$ is set 
such that after all noisy values are aggregated, the variance of the
residual noise
matches that required by the Gaussian mechanism 
\cite{Dwork2014a}
to achieve $(\varepsilon, \delta')$-DP for an average of
$\rndn$ values in the \emph{centralized} setting. The privacy-utility
trade-off achieved by \gopa is thus the same as in the trusted curator model
up to a small constant in $\delta$, as long as the pairwise
variance $\sigma_\Delta^2$ is large enough. As expected, we see that as
$\sigma_\Delta^2\rightarrow +\infty$ (that is, as $\kappa\rightarrow+\infty$),
we have $\delta\rightarrow\delta'$ for worst case and complete graphs, or $\delta\rightarrow3\delta'$ for $k$-out graphs.
Given the desired $\delta\geq \delta'$, we can use
Corollary~\ref{thm:dp_corollary}
to determine a value for $\sigma_\Delta^2$ that is sufficient for \gopa to
achieve $(\varepsilon,\delta)$-DP. Table~\ref{tab:cor} shows a numerical illustration with $\delta$ only a factor $10$ larger than $\delta'$. For random $k$-out graphs, we report the values of $\sigma_\Delta$ and $k$ given by Theorem~\ref{thm:diffprivacy-random}, as well as smaller (yet admissible) values obtained by numerical simulation (see Appendix~\ref{app:dp_numerical_sim}). Although the conditions of
Theorem~\ref{thm:diffprivacy-random} are a bit conservative (constants can
likely be improved), they still lead to
practical values. Clearly, random $k$-out graphs provide a
useful trade-off in terms of scalability and robustness.
Note that in practice, one often does not know in advance the
exact proportion $\honestfraction$ of users who are honest and will not drop out, so a
lower bound can be used instead.

\begin{remark}
\label{rmk:hbc_privacy}
For clarity of presentation, our privacy guarantees protect against
an adversary that consists of colluding malicious users. To 
\emph{simultaneously}
protect against each single honest-but-curious user (who knows his own
independent
noise term), we can simply replace $n_H$ by $n_H'=n_H-1$ in our results. This
introduces a factor $n_H/(n_H-1)$ in the variance, which is negligible for large $n_H$.  Note that the same applies to other approaches which distribute noise-generation over data-providing users, e.g., \cite{Dwork2006ourselves}.
\end{remark}


\section{Correctness Against Malicious Users}
\label{sec:verif}

While the privacy guarantees of Section~\ref{sec:privacy} hold regardless of
the behavior of the (bounded number of) malicious users, the utility guarantees discussed in Section~\ref{sec:gopa} are not valid if malicious users tamper with the protocol.
In this section, we add to our protocol the capability of being audited to
ensure the correctness of the computations while preserving privacy
guarantees.
  In particular, while Section~\ref{sec:privacy} guarantees privacy and prevents inference attacks where attackers infer sensitive information illegally,
  tampering with the protocol can be part of poisoning attacks which aim to change the result of the computation.  We argue that we can detect all attacks to poison the output which can also be detected in the centralized setting.  As we will explain we don't assume prior knowledge about data distributions or patterns, and in such conditions neither in the centralized setting nor in our setting one can detect behavior which could be legal but may be unlikely.

We present here the main ideas.  In \ref{app:crypto} and 
\ref{app:supplementary}, we review some cryptographic background required to
understand and verify the details of our construction.\\

\noindent\emph{Objective.} 
Our goal is to (i) verify that all calculations are performed correctly even though they are encrypted, and (ii) identify any malicious behavior.
As a result, we guarantee that given the input vector $X$ a truthfully computed $\hat{X}^{avg}$ is generated which excludes any faulty contributions.

Concretely, users will be able to prove the following properties:
\begin{flalign}
    \quad\quad &\label{eq:verif1}\hat{X}_u = X_u + \textstyle\sum_{v\in N(u)}
    \Delta_{u,v} + \eta_u,& \forall u \in U,\; \\
    &\label{eq:verif2}\Delta_{u,v} = -\Delta_{v,u},&\forall \{u,v\} \in E,\; \\ 
    &\label{eq:verif3}\eta_u \sim \mathcal{N}(0,
    \sigma_\eta^2),&\forall u \in U,\; \\
    &\label{eq:verif4}X_u \text{ is a valid input,}&\forall u \in U.\;
\end{flalign}
It is easy to see that the correctness of the computation is guaranteed if
Properties
\eqref{eq:verif1}-\eqref{eq:verif4} are satisfied. Note that, as long as they are self-canceling
and not excessively large (avoiding overflows and additional costs if a user
drops out, see 
Appendix \ref{app:crypto.drop-out}), we do not need to ensure that pairwise noise terms $\Delta_{u,v}$ have been
drawn from the prescribed distribution, as these terms do not influence
the final result and only those involving honest users affect the privacy
guarantees of Section~\ref{sec:privacy}. In contrast, Properties 
\eqref{eq:verif3} and \eqref{eq:verif4} are necessary
to prevent a malicious user from biasing the outcome of the computation.
Indeed, \eqref{eq:verif3} ensures that the independent noise is
generated correctly, while \eqref{eq:verif4} ensures that input values are in the allowed domain.
Moreover, we can force users to commit to input data so that they
consistently use the same values for data over multiple computations.

We first explain the involved tools to verify computations in Section \ref{sec:verif.crypto} and 
we present our verification protocol in Section \ref{sec:verif.protocol}. 

\subsection{Tools for verifying computations.}
\label{sec:verif.crypto}

Our approach consists in publishing an encrypted log of the computation using 
\textit{cryptographic commitments} and proving that it is performed correctly without revealing any additional
information using \textit{zero knowledge proofs}.
These techniques are
popular in a number of applications such as privacy-friendly financial systems such as \cite{narula_zkledger_2018,ben_sasson_zerocash_2014}. 
We explain below different tools to robustly verify our computations. Namely,
a structure to post the encrypted log of our computations, hash functions to generate secure random numbers, commitments and zero knowledge proofs.\\ 

\noindent\emph{Public bulletin board.} We implement the publication of commitments
and proofs using a public \emph{bulletin board} so that any party can verify the validity of the protocol, avoiding the need for a trusted verification entity. Users sign their messages so they cannot deny them.
More general purpose distributed ledger technology 
\cite{Nakamoto2008a} 
could be used here, but we aim at an application-specific, light-weight and
hence more scalable solution.\\

\noindent\emph{Representations.}  We will represent numbers by elements of cyclic
groups isomorphic to $\Z_q$ for some large prime $q$.  To be able to work with
signed fixed-precision values, we encode them in $\Z_q$ by multiplying them by a constant $1/\gdprec$ and using the upper half of $\Z_q$, i.e.,  $\{ x \in \Z_q : x \ge \lceil q/2 \rceil  \}$ to represent negative values.  Unless we explicitly state otherwise, properties (such as linear relationships) we establish for the $\Z_q$ elements translate straightforwardly to the fixed-precision values they represent.  We choose the precision so that the error of approximating real numbers up to a multiple of $\gdprec$ does not impact our results. \\

\noindent\emph{Cryptographic hash functions.} We also use hash functions ${H: \Z \rightarrow  \Z_{2^T}}$ for an integer $T$ such that $2^T$ is a few orders of magnitude bigger than $q$, so that numbers uniformly distributed over $\Z_{2^T}$ modulo  $q$  are indistinguishable from numbers uniformly distributed over $\Z_q$.  
Such function is easy to evaluate, but predicting its outcome or distinguishing it from random numbers is intractable for polynomially bounded algorithms \cite{yao_theory_1982}. Practical instances of $H$ can be found in \cite{barker2007recommendation,bertoni2010sponge}. 
\\

\noindent\emph{Pedersen commitments.}
Commitments, first introduced by \cite{blum_coin_1983}, allow users to commit to values while keeping them hidden from others. After a commitment is performed, the committer cannot change its value, but can later prove properties of it or reveal it.
For our protocol we use the Pedersen commitment scheme \cite{pedersen1991non}. Pedersen commitments have as public parameters $\comschpp = (q, \G, g, h)$ where $\G$ is a cyclic multiplicative group of prime order $q$, and $g$ and $h$ are two generators of $\G$ chosen at random. A commitment is computed by applying the function $Com_\comschpp : \Z_q \times \Z_q \rightarrow \mathbb{G}$, defined as
\begin{eqnarray} 
	Com_{\comschpp} (x, r) &=& g^x \cdot h^r, 
\end{eqnarray}
where $(\cdot)$ is the product operation of $\G$, $x \in \Z_q$ is the committed value, and $r \in \Z_q$ is a random number to ensure that $Com_{\comschpp} (x, r)$ \emph{perfectly hides} $x$.  When $r$ is not relevant, we simply denote by $Com_\comschpp(x)$ a commitment of $x$ and assume $r$ is drawn appropriately.  Under the Discrete Logarithm Assumption (DLA) $Com_{\comschpp}$ is \emph{binding} as long as $g$ and $h$ are picked at random such that no one knows the discrete logarithm base $g$ of $h$. That is, no computationally efficient algorithm can find $x_1,x_2, r_1, r_2 \in \Z_q$ such that $x_1 \neq x_2$ and $Com_{\comschpp}(x_1, r_1) = Com_{\comschpp}(x_2, r_2)$. As the parameters $\comschpp$ are public, many parties can share the same scheme, but parameters must be sampled with unbiased randomness to ensure the binding property. Pedersen commitments are \emph{homomorphic}, as $Com_{\comschpp}(x+y,r+s)=Com_{\comschpp}(x,r)\cdot Com_{\comschpp}(y,s)$.  This is already enough to verify linear relations, such as the ones in Properties (\ref{eq:verif1}) and (\ref{eq:verif2}). We describe commitments in more detail in Appendix \ref{app:supp.cmt}.

It is sometimes needed to let users prove that they know the values underlying
commitments.  In our discussion we will implicitly assume proofs of knowledge \cite{cramer_zero-knowledge_1998} (see also below) are inserted where needed.\\

\noindent\emph{Zero Knowledge Proofs.}
\label{app:crypto_framework:zkps_bblocks}
To verify other than linear relationships,
we use a family of techniques called Zero Knowledge Proofs (ZKPs), first proposed in \cite{goldwasser_knowledge_1989}. In these proofs, a party called the
\emph{prover} convinces another party, the \emph{verifier}, about a statement over committed values. For our
scope and informally speaking, ZKPs\footnote{Strictly speaking, the proofs we will use are called \textit{arguments}, as the soundness property relies on the computational boundedness of the Prover $P$ through the DLA described above, but as for general reference to the family of techniques we use the term \textit{proofs}.}
\begin{itemize}
	\item  allow the prover to successfully prove true a statement (completeness), 
	\item allow the verifier to discover with arbitrarily large probability any attempt to prove a false statement 
	(soundness), 
	\item guarantee that by performing the proof, no information about the knowledge of the prover other than the proven statement is revealed (zero knowledge). 
\end{itemize}
Importantly, the zero knowledge property of our proofs does not rely on any
computational hardness assumption.\\

\newcommand{\cst}{C_{cst}}
\newcommand{\zproof}[1]{\boldsymbol{\pi}_{#1}}
\noindent\emph{$\Sigma$-protocols.} We use a family of ZKPs called \emph{$\Sigma$-protocols} and introduced in \cite{cramer_modular_1997}. They allow to prove the knowledge of committed values, relations between them that involve arithmetic circuits in $\Z_q$ \cite{cramer_zero-knowledge_1998} (i.e. functions that only contain additions and products in $\Z_q$), and the disjunction of statements involving this kind of relations \cite{cramer_proofs_1994}. Let $\cst$ be the cost of computing an arithmetic circuit $C: \Z_q^m \rightarrow \Z_q^s$, then the computational cost of proving the correct computation of $C$ requires $O(\cst)$ cryptographic computations (mainly exponentiations in $\G$) and a transfer of $O(\cst)$ elements of $\G$. 
Proving the disjunction $S_1 \lor S_2$ of two statements $S_1$ and $S_2$ costs the sum of proving $S_1$ and $S_2$ separately. For simplicity, we say that a proof has cost $c$ if the cost of generating a proof and its verification is at most 
$c$ cryptographic computations each, and the 
communication cost is at most 
$c$ elements of $\G$.
We denote by $\groupelsize$  the size in bits of an element of $\G$. 

Proving that a commitment to a number $a \in \Z_q$ is in a certain range $[0, 2^k-1]$ for some integer $k$ can be derived from circuit proofs with the following folklore protocol: commit to each bit of $b_1, \dots , b_k$ of $a$ and prove that they are indeed bits, for example by proving that $b_i (1 - b_i ) = 0$ for all $i \in \{1, \dots, k\}$, then prove that $a = \sum_{i=1}^k 2^{i-1} b_i$. This proof has a cost of $5k$. The homomorphic property of commitments allows one to easily generalize this proof to any range $[a,b] \subset \Z_p$  with a cost of $10\lceil \log_2(b-a) \rceil$. 

$\Sigma$-protocols require that the prover interacts with a honest verifier.
This is not applicable to our setting where verifiers can be malicious. We can
turn our proofs into non-interactive ones with negligible additional cost with
the strong Fiat-Shamir heuristic \cite{bernhard_how_2012}. In that way,
for a statement $S$ each user generates a proof transcript $\zproof{S}$ together with the involved commitments and publish it in the bulletin board. Any party can later verify offline the correctness of $\zproof{S}$. The transcript size  is equal to the amount of exchanged elements in the original protocol. We provide further details about ZKPs and $\Sigma$-protocols in Appendix \ref{app:supp.zkp}.

\newcommand{\comD}{\mathbf{c_D}}
\newcommand{\pseed}{r}
\subsection{Verification Protocol}
\label{sec:verif.protocol}

Our verification protocol, based on the primitives described in Section \ref{sec:verif.crypto}, consists of four phases: 
\begin{enumerate}
\item \label{enum:verif.phase.privcommit}
  \textit{Private data commit.} At the start of our protocol, we assume users
  have committed to their private data.  In particular, for every user $u$ a
  commitment is available, either directly published by $u$ or available
  through a distributed ledger or other suitable mechanism.  This
  attenuates data poisoning, as it forces users to use the same value for
  $X_u$
  in each computation where it is needed.
\item \label{enum:verif.phase.setup}
\textit{Setup.} In a setup phase at the start of our protocol, users generate Pedersen commitment parameters $\comschpp$ and private random seeds that will be used to prove Property \eqref{eq:verif3}. Details are discussed in Appendix \ref{app:crypto.setup}.
  \item \label{enum:verif.phase.verif}
\textit{Verification.} During our protocol, users can prove that execution is
performed correctly and verify logs containing such proofs by others.  If
during the protocol one detects a user has cheated he/she is added to a
cheater list.  After the protocol, one can verify that all steps were
performed correctly and that the protocol has been completed. We give
details on this key step below.
\item \label{enum:verif.phase.mitigate}
\textit{Mitigation.}  Cheaters and drop-out users (who got off-line for a too
long period of time) detected during the protocol are excluded from the
computation, and their contributions are rolled back.  Details are provided in
Appendix \ref{app:crypto.drop-out}.
\end{enumerate}

\noindent\emph{Verification phase.}
First, we use the homomorphic property of Pedersen commitments to prove Properties \eqref{eq:verif1} and \eqref{eq:verif2}. 
Note that Property \eqref{eq:verif2} involves secrets of two different users $u$ and $v$. This is not a problem as these pairwise noise terms are known by both involved users, so they can use negated randomnesses $r_{\Delta_{u,v}}=-r_{\Delta_{v,u}}$ in their commitments of $\Delta_{u,v}$ and $\Delta_{v,u}$ such that everybody can verify that $Com_{\comschpp}(\Delta_{u,v},r_{\Delta_{u,v}}) \cdot Com_{\comschpp} (\Delta_{v,u},r_{\Delta_{v,u}})=Com_{\comschpp}(0,0)$.
Users can choose how they generate pairwise Gaussian noise (e.g., by convention, the user that initiates the exchange can
generate the noise). We just require that each user holds a message on the
agreed noise terms signed by the other user before publishing commitments, so
that if one of them cheats, it can be easily discovered. 

Verifying the correct drawing of Gaussian numbers is more involved and requires private seeds $\pseed_1, \dots, \pseed_n$ generated in Phase \ref{enum:verif.phase.setup}. We explain the procedure step by step in Appendix \ref{app:supp.gauss}. The proof generates a transcript $\zproof{\eta_u}$ for each user $u$. 

To verify Property \eqref{eq:verif4}, we verify its domain and its consistency.
For the domain, we prove that $X_u \in [0,1]$ with the range proof outlined in Section \ref{sec:verif.crypto}. %
For the consistency, users $u$ publish a Pedersen commitment $\mathbf{c_{X_u}}=Com_{\comschpp}(X_u)$ and prove its consistency with private data committed to in Phase \ref{enum:verif.phase.privcommit} denoted as $\comD$.  Such proof depends on the nature of the commitment in Phase \ref{enum:verif.phase.privcommit}: if the same Pedersen commitment scheme is used nothing needs to be done, but users could also prove consistency with a record in a blockchain (which is also used for other applications) or they may need to prove more complex consistency relationships.  We denote the entire proof transcript as $\zproof{X_u}$.
As an illustration, consider ridge regression in Example~\ref{ex:linearreg}. Every user $u$ can publish commitments  $\mathbf{c_{y_u}} = Com_
{\comschpp}(y_u)$, $\mathbf{c_{\phi_u^i}} = Com_{\comschpp}(\phi_u^i)$ for $i \in \{1, \dots ,d\}$ (computed with the appropriately drawn
randomness), and additionally commit to  $\phi_u^i y_u$ and $\phi_u^i \phi_u^j$, for $i,j \in \{1, \dots, d\}$. Then it can be verified that all these commitments are
computed coherently, i.e, that the commitment of $\phi_u^i y_u$
is the product of secrets committed in $\mathbf{c_{y_u}}$ and $\mathbf{c_{\phi_u^i}}$ for $i \in
\{1, \dots, d\}$, and analogously for the commitment of $\phi_u^i
\phi_u^j$ in relation with $\mathbf{c_{\phi_u^i}}$ and $\mathbf{c_{\phi_u^j}}$, for $i,j \in \{1, \dots,  d \}$.
 
We note that if poisoned private values are used consistently after committing to them, 
this will remain undetected. However, if our verification methodology is applied in the training of many different models 
over time, it could be required that users prove consistency over values that
have been committed long time ago. 
Therefore, cheating is discouraged and these attacks are attenuated by the
impossibility to adapt corrupted 
contributions to specific computations.

Compared to the central setting with a trusted curator,
encrypting the input does not make the verification of input more problematic. 
Both in the central setting and in our setting one can perform domain tests,
ask certification of values from independent sources, and require consistency
of the inputs over multiple computations, even though in some cases both the
central curator and our setting may be unable to verify the correctness of some input.

Algorithm~\ref{alg:verif} gives a high level overview of the 4 verification steps described above. 
By composition of ZKPs, these steps allow
each user to prove the correctness of their computations and preserve completeness, soundness and zero knowledge properties, thereby leading to our security guarantees: 

\begin{theorem}[Security guarantees of \gopa]
	Under the DLA, a user $u \in U$ that passes 
	the verification protocol proves that $\hat{X}_u$ was computed correctly. Additionally, 
	$u$ does not reveal any additional information about $X_u$ by running the verification, even if the DLA does not hold.
	\label{thm:gopa_security}
\end{theorem}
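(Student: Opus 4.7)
The plan is to split the theorem into its two claims—computational soundness (correctness under DLA) and unconditional zero knowledge (privacy)—and prove each by tracing how guarantees of the individual cryptographic building blocks propagate through the verification protocol of Section~\ref{sec:verif}. Throughout, I would rely on the explicit construction detailed in Appendix~\ref{app:security_guarantees:verif}, which decomposes the verification into (i) published Pedersen commitments of $X_u$, $\eta_u$, the $\Delta_{u,v}$'s and $\hat{X}_u$, (ii) homomorphic equality checks on these commitments covering Properties \eqref{eq:verif1}--\eqref{eq:verif2}, and (iii) ZKPs covering Properties \eqref{eq:verif3}--\eqref{eq:verif4}.

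For correctness, I would first invoke the computational binding of the Pedersen commitment scheme, which holds under DLA: once $u$ has published $Com(X_u)$, $Com(\eta_u)$ and the $Com(\Delta_{u,v})$, $u$ can no longer equivocate on the underlying values without solving a discrete logarithm. The homomorphic property then lets any verifier check Property \eqref{eq:verif1} by testing that $Com(\hat{X}_u)$ equals $Com(X_u) + \sum_{v\in N(u)} Com(\Delta_{u,v}) + Com(\eta_u)$, and Property \eqref{eq:verif2} by testing that $Com(\Delta_{u,v}) + Com(\Delta_{v,u}) = Com(0)$; a successful check forces the committed equalities to hold, again by binding. Properties \eqref{eq:verif3}--\eqref{eq:verif4} are enforced by the ZKPs built on \cite{chaum_wallet_1993,mao_guaranteed_1998,bunz_bulletproofs_2018}; by the computational soundness of each such ZKP under DLA, a malicious $u$ causes an honest verifier to accept a false statement only with negligible probability. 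A union bound over the finite number of building-block invocations in $u$'s verification then yields the first claim.

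For privacy, I would exploit that Pedersen commitments are \emph{perfectly hiding}, independently of any computational assumption: the commitment $Com(v)$ is distributed uniformly over the group regardless of $v$, thanks to the fresh randomizer in the scheme. Thus publishing commitments leaks nothing about $X_u$ in an information-theoretic sense. For the ZKPs, I would invoke their (honest-verifier, or, via Fiat--Shamir, non-interactive) zero-knowledge property, which also holds unconditionally: for each proof one exhibits a polynomial-time simulator producing a transcript statistically indistinguishable from the real one using only public data. Composing these simulators over all ZKPs performed by $u$ yields a full simulator of $u$'s verification transcript that depends only on the publicly revealed $\hat{X}_u$ and the public graph; hence the verification reveals nothing beyond what is already public, even when DLA fails.

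The main obstacle I expect is the ZKP for Property \eqref{eq:verif3}, i.e.\ certifying that $\eta_u \sim \mathcal{N}(0,\sigma_\eta^2)$. Unlike the range proof for Property \eqref{eq:verif4} (handled by Bulletproofs \cite{bunz_bulletproofs_2018}) or the linear/equality proofs used for the summation checks, a Gaussian law cannot be imposed on a committed scalar by a single algebraic ZKP; one must instead prove that $\eta_u$ was produced by the prescribed discretized sampling procedure described in Appendix~\ref{app:security_guarantees:zkps_distr}. The delicate step is thus to show that this specific composite ZKP is simultaneously sound under DLA and unconditionally zero knowledge, and that the induced distribution is close enough to $\mathcal{N}(0,\sigma_\eta^2)$ that the DP analysis of Section~\ref{sec:privacy} still applies; once this is established, combining it with the generic soundness/ZK arguments above closes both halves of the theorem.
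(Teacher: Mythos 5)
Your proposal is correct and follows essentially the same route as the paper's own (much terser) argument in Appendix~B.4: soundness of each committed relation via the computational binding of Pedersen commitments and the soundness of the constituent ZKPs under DLA, unconditional privacy via the perfect hiding of the commitments and the statistical zero-knowledge of the proofs, and preservation of all three properties under composition. Your flagged obstacle---the composite proof that $\eta_u$ follows the prescribed discretized Gaussian and that the rounding/approximation error does not invalidate the DP analysis---is exactly the part the paper devotes Appendix~B.3 to, so nothing essential is missing.
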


To reduce the verification load, we note that it is possible to perform the
verification for only a subset of users picked at random (for
example, sampled using public unbiased randomness generated in Phase \ref{enum:verif.phase.setup})
after users have published the involved commitments.
In this case, we obtain \emph{probabilistic} security guarantees, which may be
sufficient for some applications.

We can conclude that  \gopa{} is an auditable protocol that, through
existing efficient cryptographic primitives, can offer guarantees similar to
the automated auditing which is possible for data shared with a central party.

\begin{algorithm*}[t]
	\floatname{algorithm}{Algorithm}
	\caption{Verification of \gopa}
	\begin{algorithmic}[1]
		\STATE{\emph{(1) Input. } Import any previous commitments to private data $\comD$}
		\STATE{\emph{(2) Setup. } All users jointly run Phase \ref{enum:verif.phase.setup} Setup to generate Pedersen parameters $\comschpp$ and private seeds $\pseed_1, \dots, \pseed_n$.  Each user $u$ publishes $\mathbf{c_{\pseed_u}} = Com_{\comschpp}(\pseed_u)$}
		\STATE{\emph{(3a) Verification - commits.}}
		\STATE{\hspace*{.5cm}\textbf{for all} user $u \in \userset$, \textbf{publish} as soon as available:}
		\STATE{\qquad\quad$-$ $\mathbf{c_{X_u}}= Com_{\comschpp}(X_u)$ and proof $\zproof{X_u}$ that $X_u$ is valid 
                }
		\STATE{\qquad\quad$-$ $\mathbf{c_{\eta_u}} = Com_{\comschpp}(\eta_u)$ and proof $\zproof{\eta_u}$ that $\eta_u$ is drawn from Gaussian distribution} 
		\STATE{\qquad\quad$-$ $\mathbf{c_{\Delta_{u,v}}} = Com_{\comschpp}(\Delta_{u,v})$
                }
		\STATE{\qquad\quad$-$ $\hat{X}_u$ and randomness to compute its commitment 
                }
		\STATE{\emph{(3b) Verification - checks.}} 
		\STATE{\hspace*{.5cm}\textbf{for all} $u \in \userset$ \textbf{verify} when commitments/proofs are available: }
		\STATE{\qquad\quad$-$ $(\zproof{X_u}, \mathbf{c_{X_u}}, \comD)$ is correct, } 
		\STATE{\qquad\quad$-$ $\mathbf{c_{X_u}} \cdot \left( \prod_{v \in N(u)} \mathbf{c_{\Delta_{u,v}}} \right) \cdot \mathbf{c_{\eta_u}} = Com_\comschpp(\hat{X}_u)$}\label{algl:sum}, 
		\STATE{\qquad\quad$-$  $(\zproof{\eta_u},\mathbf{c_{\pseed_u}}, \mathbf{c_{\eta_u}})$ is correct.}
		\STATE{\hspace*{.5cm}If a check is incorrect, add $u$ to cheaters list.} 
		\STATE{\qquad\quad\textbf{for all} user $v \in N(u)$ \textbf{do}:} 
			\STATE{\qquad\qquad $-$ \textbf{if} $\mathbf{c_{\Delta_{u,v}}} \cdot \mathbf{c_{\Delta_{v,u}}} \neq Com_{\comschpp}(0,0)$\textbf{:} add $u$ and/or $v$ as cheater} \label{algl:zero_sum} 
		\STATE{\emph{(4) Mitigation.}} 
		\STATE{\hspace*{.5cm}$-$Roll back contributions of drop-outs and exchange more noise if necessary} 
		\STATE{\hspace*{.5cm}$-$If a harmless amount of non-canceled pairwise noise remains,\\
                  \quad\quad\quad declare the computation successful, otherwise abort.}
	\end{algorithmic}
	\label{alg:verif}
\end{algorithm*}


\section{Computation and Communication Costs}
\label{sec:complexity}

Our cost analysis considers user-centered
costs, which is natural as most operations can be performed asynchronously
and in parallel.
The following statement summarizes our results (concrete non-asymptotic
costs are in \ref{app:complexity}).
\begin{theorem}[Complexity of \gopa]
  \label{thm:cryptocost}
  Let $\gdprec>0$ be the desired fixed precision such that the number $1$ would be represented as  $1/\gdprec$. Let $B > 0$ be such that the $\eta_u$'s are drawn
  from a Gaussian distribution approximated with $1/B$ equiprobable bins. 
  Then, each user $u$, to perform and prove its contribution, requires 
  $O(|N(u)|+\log(1/\gdprec)\log(1/B) + \log(1/B) + \log(1/\gdprec))$ computations and transferred bits. 
  The verification of its contribution requires the same cost.
\end{theorem}
  
Unlike other frameworks like fully homomorphic encryption and secure
multiparty computation, our cryptographic primitives \cite{franck_efficient_2017} scale well to large data.


\section{Experiments}
\label{sec:exp}

\noindent\emph{Private averaging.} We present some numerical simulations to study
the empirical utility of \gopa and in particular the influence of malicious
and dropped out users. We consider $n=10000$,
$\varepsilon=0.1$,
$\delta=1/n^2$ and set the values of $k$, $\sigma^2_\eta$ and $\sigma^2_\Delta$
using Corollary~\ref{thm:dp_corollary} so that \gopa satisfies $
(\varepsilon,\delta)$-DP.
Figure~\ref{fig:exp} (left) shows the utility of 
\gopa when executed with $k$-out graphs as a function of $\rho$ , which is the (lower bound on) the proportion of
users who are honest and do not drop out.
The curves in the figure are closed form formulas given by Corollary~\ref{thm:dp_corollary} (for \gopa{}) and Appendix A of \cite{Dwork2014a} (for local DP and central DP). 
As long as the value of $k$ is admissible, it does not change $\sigma_\eta$. The utility of \gopa is shown for  different values of $\kappa$. This parameter allows to obtain different trade-offs between magnitudes of $\sigma_\eta$ and $\sigma_\Delta$. While a very small  $\kappa$  degrades the utility, this impact quickly becomes negligible as $\kappa$ reaches $10$ (which also has a minor effect in $\sigma_\Delta$).  With $\kappa = 10$ and even for reasonably small $\rho$, \gopa{} already approaches 
a utility of the same order as a trusted curator that would average
the values of \emph{all} $n$ users. Further increasing $\kappa$ would not be of any use as
this will not have a significant impact in utility and will simply increase
$\sigma_\Delta$.

While the values of $\varepsilon$ and
$\delta$ obviously impact the utility, we stress the fact that they only have
a uniform scaling effect
which does not affect the relative distance between the utility of \gopa and that
of a trusted curator. Regarding the communication graph $G^H$, it greatly
influences the
communication cost and $\sigma_\Delta$, but only affects $\sigma_\eta$ via
parameter $a$ of Corollary~\ref{thm:dp_corollary} which has a negligible impact in utility.

In Appendix~\ref{app:crypto.drop-out}, we further describe the
ability of \gopa to tolerate a small number of residual
pairwise noise terms of variance $\sigma_\Delta^2$ in the final result.
We note that this feature
is rather unique to \gopa and is not possible with secure
aggregation \cite{Bonawitz2017a,bell_secure_2020}.\\

\noindent\emph{Application to federated SGD.}
We present some experiments on training a logistic regression
model in a federated learning setting.
We use a binarized version of UCI Housing dataset
with standardized features and points
normalized to unit L2 norm to ensure a gradient sensitivity bounded by $2$. We
set aside $20\%$ of the points as test set
and split the rest uniformly at random across $n=10000$ users
so that each user $u$ has a local dataset $D_u$ composed of 1 or 2 points.

We use the Federated SGD algorithm, which corresponds to FedAvg with a single
local update \cite{mcmahan2016communication}. At each iteration, each user
computes a stochastic gradient using one sample of their local dataset;
these gradients are then averaged and used to update the model parameters.
To privately average the gradients, we compare \gopa (using $k$-out graphs with $\rho=0.5$ and $\kappa = 10$) to (i)
a
\emph{trusted} aggregator that averages all $n$ gradients in the clear and adds
Gaussian noise to the result as per central DP, and (ii) local DP.
We fix the total privacy budget to $\varepsilon=1$ and
$\delta=1/(\rho n)^2$ and use advanced composition (in Section 3.5.2 of \cite{Dwork2014a}) to compute the budget
allocated to each iteration. Specifically,  we use Corollary
3.21 of
\cite{Dwork2014a} by  requiring that each  update is  $(\varepsilon_s,
\delta_s)$-DP for $\varepsilon_s = \varepsilon / 2\sqrt{2 T \ln(1/\delta_s)}$,
$\delta_s = \delta/T+1$ and $T$ equal to the total number of iterations. This
ensures $(\varepsilon, \delta)$-DP for the overall algorithm.
The step size is tuned for each approach,
selecting the value with the highest accuracy after a predefined number
$T$ of iterations. 

Figure~\ref{fig:exp} (middle) shows a typical run of the algorithm for
$T=50$ iterations. Local DP is not shown as it diverges unless the learning
rate
is overly small. On the other hand, \gopa is able to decrease the objective
function steadily, although we see some difference with the trusted
aggregator (this is
expected since $\rho=0.5$).
Figure~\ref{fig:exp} (right) shows the final test accuracy (averaged over
10 runs) for different numbers of iterations $T$. Despite
the small gap in objective function, \gopa
nearly matches the accuracy achieved by the trusted aggregator, while
local DP is unable to learn useful models.

We provide the code to reproduce the experimental results
presented in Figures \ref{fig:exp} and \ref{fig:nonroll_dropout} (see Appendix
\ref{app:crypto.drop-out}) and in Tables \ref{tab:cor} (see Section 
\ref{sec:priv.summary}) and \ref{tab:cor2} (see Appendix 
\ref{app:dp_numerical_sim}) in a public repository.\footnote{
\url{https://gitlab.inria.fr/cesabate/mlj2022-gopa}} 

\begin{figure*}[t] 
	\subfigure[Utility with respect to $\rho$]{
		\centering
		\includegraphics[width=0.31\linewidth]{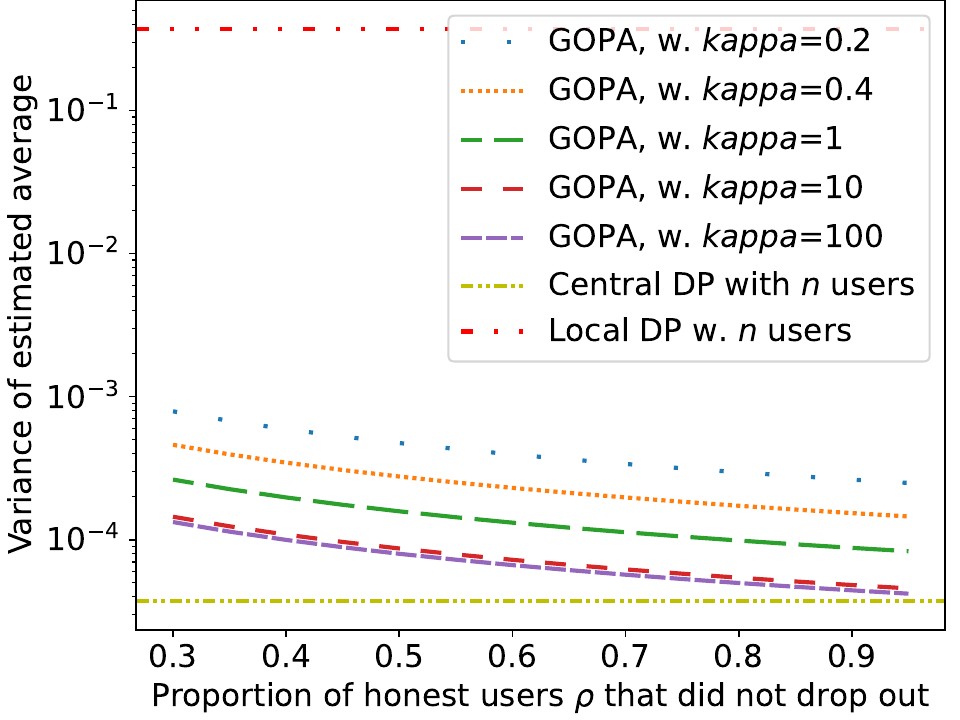}
		\label{fig:utility_rho}
	}
	\subfigure[A typical run of FedSGD]{
		\centering
		\includegraphics[width=.31\linewidth]{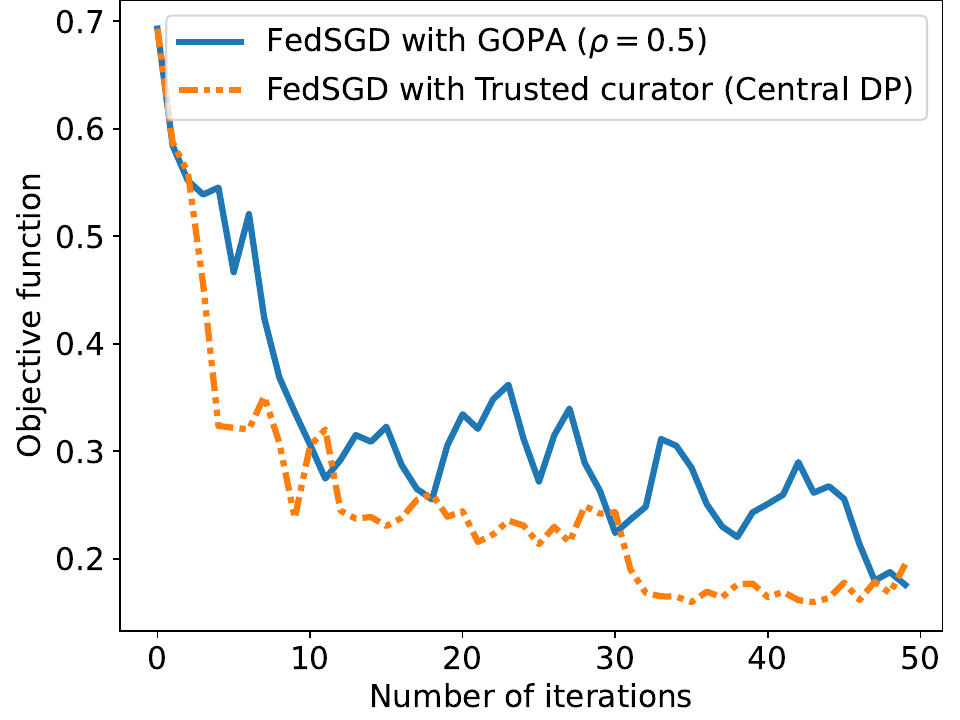}
		\label{fig:sgd_run}
	}
	\subfigure[Test accuracy of FedSGD]{
		\centering
		\includegraphics[width=.31\linewidth]{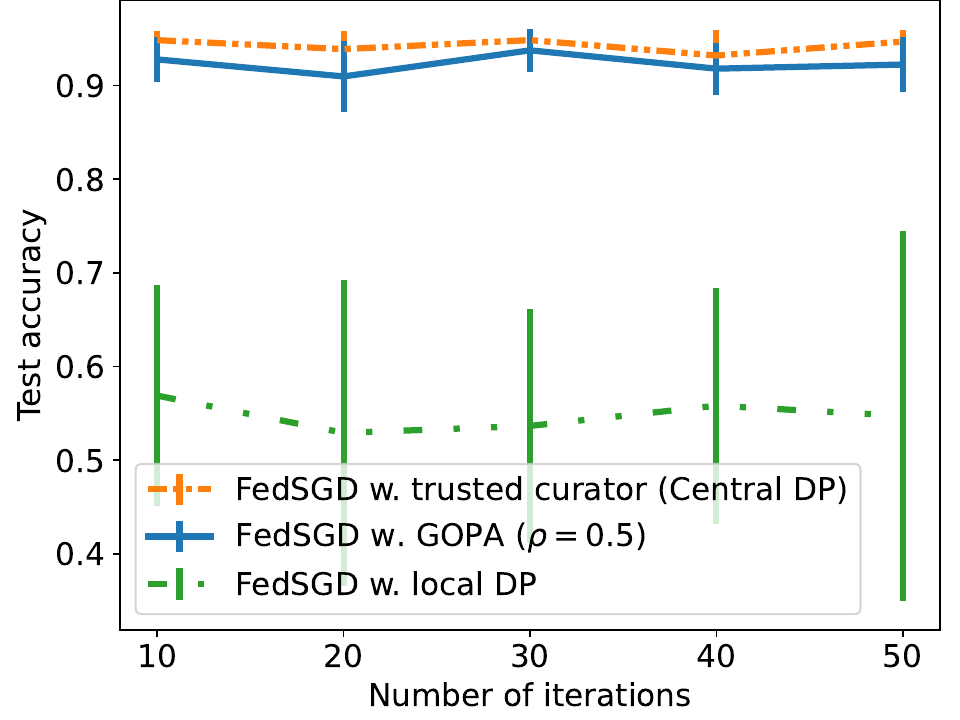}
		\label{fig:sgd_acc}
	}
	\caption{Comparing \gopa to central and local DP. \emph{Left}: Utility of
		$\gopa$ (measured by the variance of the
		estimated average) w.r.t. $\rho$.
		\emph{Middle:} Evolution of the
		objective for a typical run of FedSGD. \emph{Right:} Test accuracy
		of
		models learned with FedSGD. See text for details.}
	\label{fig:exp}
\end{figure*}


\section{Conclusion}
\label{sec:conclu}

We proposed \gopa, a protocol to privately compute averages over the
values of many users. \gopa satisfies DP, can nearly
match the utility of a trusted curator, and is robust to
malicious parties. It can be used in distributed and federated ML 
\cite{Jayaraman2018,kairouz2019advances} in place of more
costly
secure aggregation schemes.
In future work, we plan to provide efficient implementations,
to integrate our approach in complete ML systems, and to exploit scaling
to reduce the cost per average.
We think that our work is also relevant
beyond averaging, e.g. in the context of
robust aggregation for distributed SGD \cite{Blanchard2017} and for
computing pairwise statistics \cite{Bell2020a}.

%
%



%
%

\section*{Declarations}
This work was partially supported by ANR project ANR-20-CE23-0013 'PMR',
ANR-20-CE23-0015 'PRIDE', and the 'Chair TIP' project funded by ANR, I-SITE,
MEL, ULille and INRIA.
We thank Pierre Dellenbach and Alexandre Huat for the fruitful discussions.
There are no conflicts of interest.
No ethical approval was needed.
As there were no participants no consent was needed to participate nor to publish.
Code and data can be accessed by following the links in the text.
The authors made approximately equal contributions.

\bibliographystyle{spmpsci}      
\bibliography{main}   


\newpage
\onecolumn
\appendix
\renewcommand{\thesection}{Appendix~\Alph{section}}
\renewcommand{\thesubsection}{\Alph{section}.\arabic{subsection}}


\section{Proofs of Differential Privacy Guarantees}
\label{app:dp-proof}

In this appendix, we provide derivations for our differential privacy
guarantees.

\subsection{Proof of Theorem \ref{thm:diffpriv.etadelta}}
\label{app:proof.thm.diffpriv.etadelta}

\textbf{Theorem \ref{thm:diffpriv.etadelta}.}
\thmDiffprivEtadeltaStatement

\begin{proof}
  We adapt ideas from \cite{Dwork2014a} to our setting.
  First, we show that it is sufficient to prove that

\begin{equation}
  P((\eta,\Delta)) \le P((\eta,\Delta)+t) e^\varepsilon + \delta.
  \label{eq:propboundeps}
\end{equation}
In particular,  if Eq \eqref{eq:propboundeps} holds we have that 
\begin{align*}  
P(\hat{X} \mid X^A ) &= \int_{(\eta, \Delta) \in Z^A} P((\eta,\Delta)) \text{d}\eta \text{d} \Delta \\
&\leq \int_{(\eta, \Delta) \in Z^A} \left(P((\eta,\Delta) +t) e^\varepsilon + \delta \right) \text{d} \eta  \text{d} \Delta  \\
&= \int_{(\eta, \Delta) -t \in Z^A } \left(P((\eta,\Delta)) e^\varepsilon + \delta \right) \text{d} \eta  \text{d} \Delta \\
&= \int_{(\eta, \Delta) \in Z^B } \left(P((\eta,\Delta)) e^\varepsilon + \delta \right) \text{d} \eta  \text{d} \Delta \\
&= P(\hat{X} \mid X^B ) e^\varepsilon + \delta,
\end{align*}
which proves the required bound.
Hence, it is sufficient to prove Eq \eqref{eq:propboundeps},
  or else to prove that $P((\eta,\Delta))\le e^\varepsilon P((\eta,\Delta)+t)$ with probability at least $1-\delta$.
\newcommand{\etadelta}{\gamma}
Denoting $\etadelta=(\eta,\Delta)$ for convenience, we need to prove that with
probability $1-\delta$ it holds that $|\log(P(\etadelta)/P(\etadelta+t))| \le
e^\varepsilon$.
We have
\begin{eqnarray*}
\Big|\log \frac{P(\etadelta)}{P(\etadelta+t)}\Big|
&=& \Big|-\frac{1}{2}\etadelta^\top \Sigmaginv\etadelta + \frac{1}{2}
(\etadelta+t)^\top \Sigmaginv(\etadelta+t)\Big|\\
&=& \Big|\frac{1}{2}(2\etadelta+t)^\top\Sigmaginv t\Big|.
\end{eqnarray*}
To ensure that $|\log(P(\etadelta)/P(\etadelta+t))| \le
e^\varepsilon$ holds with probability at least $1-\delta$, since we are interested in the absolute value, we will show that
\[
P\Big(\frac{1}{2}(2\etadelta+t)^\top\Sigmaginv t \ge \varepsilon\Big) \le
\delta/2,
\]
the proof of the other direction is analogous.
This is equivalent to
\begin{equation}
  \label{eq:etadeltabound1}
P(\etadelta\Sigmaginv t \ge \varepsilon - t^\top\Sigmaginv t/2) \le \delta/2.
\end{equation}
The variance of $\etadelta\Sigmaginv t$ is
\begin{eqnarray*}
  \text{var}(\etadelta\Sigmaginv t) &=&
  \sum_v \text{var}\left(\eta_v \sigma_\eta^{-2} t_v\right) + \sum_e \text{var}\left(\Delta_e \sigma_\Delta^{-2} t_e \right) \\
  &=& \sum_v \text{var}\left(\eta_v\right) \sigma_\eta^{-4} t_v^2 + \sum_e 
  \text{var}\left(\Delta_e\right) \sigma_\Delta^{-4} t_e^2 \\
  &=& \sum_v \sigma_\eta^2 \sigma_\eta^{-4} t_v^2 + \sum_e \sigma_\Delta^2 \sigma_\Delta^{-4} t_e^2  \\
  &=& \sum_v \sigma_\eta^{-2} t_v^2 + \sum_e \sigma_\Delta^{-2} t_e^2 \\
  &=& t^\top \Sigmaginv t.
  \end{eqnarray*}
For any centered Gaussian random variable $Y$ with variance
$\sigma_Y^2$, we have that
\begin{equation}
  P(Y\ge \lambda) \le \frac{\sigma_Y}{\lambda\sqrt{2\pi}}\exp\left(-\lambda^2/2\sigma_Y^2\right).
  \label{eq:guassTailBound}
\end{equation}
Let $Y=\etadelta\Sigmaginv t$, $\sigma_Y^2=t^\top \Sigmaginv t$ and $\lambda=\varepsilon - t^\top\Sigmaginv t/2$, then satisfying
\begin{equation}
  \frac{\sigma_Y}{\lambda\sqrt{2\pi}}\exp\left(-\lambda^2/2\sigma_Y^2\right)\le \delta/2
  \label{eq:gaussTailLeDelta}
\end{equation}
implies \eqref{eq:etadeltabound1}.
Equation \eqref{eq:gaussTailLeDelta} is equivalent to
\[
\frac{\lambda}{\sigma_Y}\exp\left(\lambda^2/2\sigma_Y^2\right)\ge
2/\delta\sqrt{2\pi},
\]
or, after taking logarithms on both sides, to
\[
\log\left(\frac{\lambda}{\sigma_Y}\right)+\frac{1}{2}\left(\frac{\lambda}{\sigma_Y}\right)^2\ge
\log\left(\frac{2}{\delta\sqrt{2\pi}}\right).
\]
To make this inequality hold, we require
\begin{equation}
  \log\left(\frac{\lambda}{\sigma_Y}\right)\ge 0
  \label{eq:etadeltabound2}
\end{equation}
and 
\begin{equation}
  \frac{1}{2}\left(\frac{\lambda}{\sigma_Y}\right)^2\ge
\log\left(\frac{2}{\delta\sqrt{2\pi}}\right).
  \label{eq:etadeltabound3}
\end{equation}
Equation \eqref{eq:etadeltabound2} is equivalent to
$
\lambda\ge \sigma_Y.
$
Substituting $\lambda$ and $\sigma_Y$ we get
\[
\varepsilon - t^\top\Sigmaginv t/2 \ge 
(t^\top \Sigmaginv t)^{1/2},
\]
which is equivalent to \eqref{eq:etadeltabound4}.
Substituting $\lambda$ and $\sigma_Y$ in Equation \eqref{eq:etadeltabound3}
gives \eqref{eq:etadeltabound5}.  Hence, if Equations \eqref{eq:etadeltabound4}
 and \eqref{eq:etadeltabound5} are satisfied the desired differential privacy follows.\qed
\end{proof}

\subsection{Proofs for Section \ref{sec:generic-dp.discuss}}
\label{app:dp.discussion.proofs}

\textbf{Lemma \ref{lm:sum.tu.is.1}.}
\lemmaSumTuIsOneStatement{\tag{\ref{eq:sumtu1}}\nonumber}

\begin{proof}
  Due to the properties of the incidence matrix $K$, i.e., $\forall u,v: K_{u,\{u,v\}}=-K_{v,\{u,v\}}$, the sum of the components of the vector $K\Delta$ is zero, i.e.,
\[
\sum_{u\in U^H} \left(K\Delta\right)_u = 
\sum_{u\in U^H} \left(\sum_{\{u,v\}\in E^H} K_{u,\{u,v\}}\Delta_{\{u,v\}}\right)_u = 0
\]
Combining this with the fact that $t_\eta+Kt_\Delta=X^A-X^B$ with $\sum_{u\in U^H} (X^A-X^B)_u=1$ we obtain Equation \eqref{eq:sumtu1}.
\qed
\end{proof}

\textbf{Lemma \ref{lm:cond.GH.connected}.}
\lemmaCondGHConnectedStatement

\begin{proof}
  Suppose $G^H$ is not connected, then there is a connected component $C\subseteq U^H\setminus \{v_1\}$.
  Let $t_C=\left(t_u\right)_{u\in C}$ and $\Delta_C=(\Delta_e)_{e\in\dirEH\cap(C\times C)}$. Let $K_C = \left(K_{u,e}\right)_{u\in C,e\in \dirEH\cap (C\times C)}$ be the incidence matrix of $G^H[C]$, the subgraph of $G^H$ induced by $C$.  Due to the properties of the incidence matrix of a graph there would hold $\sum_{u\in C} (K_C \Delta_C)_u = 0$.  As there would be no edges between vertices in $C$ and vertices outside $C$, we would have $\sum_{u\in C} (K \Delta)_u = 0$. There would follow $\sum_{u\in C}t_u=\sum_{u\in C}(X^A-X^B-K\Delta)_u = 0$ which would contradict with $t_\eta=\mathbbm{1}_{n_H}/n_H$.  In conclusion, $G^H$ must be connected.
\qed  \end{proof}

\subsection{Random $k$-out Graphs}
\label{app:random_graph}

In this section, we will study the differential privacy properties for the
case where all users select $k$ neighbors randomly, leading to a proof of
Theorem \ref{thm:diffprivacy-random}.  We will start by analyzing the
properties of $G^H$ (Section \ref{sec:dp.rnd.gh}). Section 
\ref{sec:dp.rnd.setup} consists of preparations for embedding a suitable
spanning tree in $G^H$.  Next, in Section \ref{sec:dp.rnd.embed} we will prove
a number of lemmas showing that such suitable spanning tree can be embedded
almost surely in $G^H$.  Finally, we will apply these results to proving
differential privacy guarantees for \gopa{} when communicating over such a random
$k$-out graph $G$ in Section \ref{sec:dp.rnd.gopa}, proving Theorem \ref{thm:diffprivacy-random}.

In this section, all newly introduced notations and definitions are local and
will not be used elsewhere. At the same time, to follow more closely existing
conventions in random graph theory, we may reuse in this section some variable names used elsewhere and give them a different meaning.

\subsubsection{The Random Graph $G^H$}
\label{sec:dp.rnd.gh}

Recall that the communication graph $G^H$ is generated as follows:
\begin{itemize}
\item We start with $n=|U|$ vertices where $U$ is the set of agents. 
\item All (honest) agents randomly select $k$ neighbors to obtain a $k$-out
graph $G$.
\item We consider the subgraph $G^H$ induced by the set $U^H$ of honest users
who did not drop out.
Recall that $\rndn=|U^H|$ and that a fraction $\honestfraction$ of the users
is honest and did not drop out, hence $\rndn=\honestfraction n$. \end{itemize}

Let $\rndk=\honestfraction k$. 
The graph $G^H$ is a subsample of a $k$-out-graph, which for larger $\rndn$
and $\rndk$ follows a distribution very close to that of Erd\H{o}s-R\'enyi
random graphs $G_p(\rndn,2\rndk/\rndn)$.  To simplify our argument, in the
sequel we will assume $G^H$ is such random graph as this does not affect the
obtained result.
In fact, the random $k$-out model concentrates the degree of
vertices more narrowly around the expected value than Erd\H{o}s-R\'enyi
random graphs, so any tail bound our proofs will rely on that holds for Erd\H{o}s-R\'enyi random graphs also holds for the graph $G^H$ we are considering.
In particular, for $v\in U^H$, the degree of $v$ is a random variable which we will approximate for sufficiently large $\rndn$ and $\rndk$ by a binomial $B(\rndn,2\rndk/\rndn)$ with expected value $2\rndk$ and variance $2\rndk(1-2\rndk/\rndn)\approx 2\rndk$.

\subsubsection{The Shape of the Spanning Tree}
\label{sec:dp.rnd.setup}

Remember that our general strategy to prove differential privacy results is to
find a spanning tree in $G^H$ and then to compute the norm of the vector
$t_\Delta$ that will ``spread'' the difference between $X^A$ and $X^B$ over
all vertices (so as to get a $\sigma_\eta$ of the same order as in the trusted
curator setting).  Here, we will first define the shape of a rooted tree and
then prove that with high probability this tree is isomorphic to a spanning
tree of $G^H$.  Of course, we make a crude approximation here, as in the (unlikely) case that our predefined tree cannot be embedded in $G^H$ it is still possible that other trees could be embedded in $G^H$ and would yield similarly good differentially privacy guarantees.  While our bound on the risk that our privacy guarantee does not hold will not be tight, we will focus on proving our result for reasonably-sized $U$ and $k$, and on obtaining interesting bounds on the norm of $t_\Delta$.

Let $\rndG=([\rndn],\rndEG)$ be a random graph where between every pair of vertices there is an edge with probability $2\rndk/\rndn$.
The average degree of $\rndG$ is $2\rndk$.

Let $\rndk\ge 4$. Let $\rndq\ge 3$ be an integer.
Let $\rndD_1$, $\rndD_2$ \ldots $\rndD_\rndq$ be a sequence of positive integers such that
\begin{equation}
  \label{eq:rnd.Dvsq}
\left(\sum_{i=1}^{\rndq} \prod_{j=1}^i \rndD_j\right) - (\rndD_\rndq+1) \prod_{j=1}^{\rndq-2} \rndD_j
< \rndn \le \sum_{i=1}^{\rndq} \prod_{j=1}^i \rndD_j.
\end{equation}

Let $\rndT=([\rndn],\rndET)$ be a balanced rooted tree with $\rndn$ vertices, constructed as follows.
First, we define for each level $l$ a variable $\rndlev_l$ representing the number of vertices at that level, and a variable $\rndclev_l$ representing the total number of vertices in that and previous levels.  In particular:
at the root $\rndclev_{-1}=0$, $\rndclev_0=\rndlev_0=1$ and for $l\in
[\rndq-2]$ by induction $\rndlev_l = \rndlev_{l-1}\rndD_l$ and
$\rndclev_l=\rndclev_{l-1}+\rndlev_l$.
Then, $\rndlev_{\rndq-1} = \lceil(\rndn-\rndclev_{\rndq-2})/(\rndD_\rndq+1)\rceil$,
$\rndclev_{\rndq-1}=\rndclev_{\rndq-2}+\rndlev_{\rndq-1}$, $\rndlev_\rndq=\rndn-\rndclev_{q-1}$ and $\rndclev_q = \rndn$.
Next, we define the set of edges of $\rndT$:
\[
\rndET=\{\{\rndclev_{l-2}+i,\rndclev_{l-1}+\rndlev_{l-1}j+i\} \mid l\in[\rndq]
\wedge i\in[\rndlev_{l-1}] \wedge \rndlev_{l-1}j+i\in[\rndlev_l] \}.
\]
So the tree consists of three parts: in the first $\rndq-2$ levels, every vertex has a fixed, level-dependent number of children, the last level is organized such that a maximum of parents has $\rndD_\rndq$ children, and in level $\rndq-1$ parent nodes have in general $\rndD_{\rndq-1}-1$ or $\rndD_{\rndq-1}$ children.
Moreover, for $0\le l\le \rndq-2$, the difference between the number of vertices in the subtrees rooted by two vertices in level $l$ is at most $\rndD_q+2$.
We also define the set of children of a vertex, i.e., for $l\in[\rndq]$ and $i\in[\rndlev_{l-1}]$, 
\[
ch(\rndclev_{l-2}+i) = \{\rndclev_{l-1}+\rndlev_{l-1}j+i \mid \rndlev_{l-1}j+i\in[\rndlev_l] \}.
\]

In Section \ref{sec:dp.rnd.embed}, we will show conditions on $\rndn$, $\rndD_1 \ldots \rndD_q$, $\rndk$ and $\rnddelta$ such that for a random graph $\rndG$ on $\rndn$ vertices and a vertex $\rndroot$ of $\rndG$, with high probability (at least $1-\rnddelta$) $\rndG$ contains a subgraph isomorphic to $\rndT$ whose root is at $\rndroot$.

\subsubsection{Random Graphs Almost Surely Embed a Balanced Spanning Tree}
\label{sec:dp.rnd.embed}

The results below are inspired by \cite{Krivelevich2010}.
We specialize this result to our specific problem, obtaining proofs which are also valid for graphs smaller than $10^{10}$ vertices, even if the bounds get slightly weaker when we drop terms of order $O(\log(\log(\rndn)))$ for the simplicity of our derivation.

Let $\rndF$ be the subgraph of $\rndT$ induced by all its non-leafs, i.e.,
$\rndF=([\rndclev_{\rndq-1}],\rndEF)$  with $\rndEF=\{\{i,j\}\in\rndET\mid i,j\le\rndclev_{\rndq-1} \}$.

\begin{lemma}
  \label{lm:rnd.22}
  Let $\rndG$ and $\rndF$ be defined as above.  Let $\rndroot$ be a vertex of $\rndG$.
  Let $\rndn\ge \rndk \ge \rndD_i\ge 3$ for $i\in [l]$. 
  Let $\rndgamma = \max_{l=1}^{q-1} \rndD_l/\rndk$  and let $\rndgamma +4(\rndD_\rndq+2)^{-1}+2\rndn^{-1} \le 1$. 
  Let $\rndk\ge 4\log(2\rndn/\rnddeltaa(\rndD_\rndq+2))$.  
  Then, with probability at least $1-\rnddeltaa$, there is an isomorphism $\rndphi$ from $\rndF$ to a subgraph of $\rndG$, mapping the root $1$ of $\rndF$ on $\rndroot$.
\end{lemma}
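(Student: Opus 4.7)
The plan is to embed the tree $\rndF$ into $\rndG$ greedily in breadth-first order, starting from mapping the root of $\rndF$ to the prescribed vertex $\rndroot$. The edges of $\rndG$ will be revealed lazily so that, at each step, the relevant edges are fresh independent Bernoullis with parameter $2\rndk/\rndn$, which will allow a clean concentration argument.

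First I would set up the induction: suppose the isomorphism $\rndphi$ has been defined on all vertices of $\rndF$ in levels $0,\dots,l-1$, with image $U_{l-1}\subseteq[\rndn]$ of size $\rndclev_{l-1}$. To extend $\rndphi$ to level $l$, I iterate over the parents $p$ at level $l-1$ in some fixed order. For each such $p$ with image $v=\rndphi(p)$, I need $|\mathrm{ch}(p)|\le \rndD_l$ neighbors of $v$ in $\rndG$ lying outside the currently-used set $U$. I reveal only the edges from $v$ to vertices of $[\rndn]\setminus U$ that have not yet been queried; since $v$ was only used as an endpoint when we placed $p$'s parent (a single edge), all but at most one of these edges are fresh, so the number of available neighbors stochastically dominates a binomial random variable with parameters $(\rndn-|U|-1,2\rndk/\rndn)$.

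Next I would obtain the quantitative bound. Throughout the embedding of $\rndF$ we have $|U|\le \rndclev_{q-1}$, and using the tree construction together with the hypothesis $\rndgamma+4(\rndD+2)^{-1}+2\rndn^{-1}\le 1$, one shows that the binomial mean $\mu=2\rndk(1-|U|/\rndn)-2\rndk/\rndn$ satisfies $\mu\ge 2\rndD_l$ for every level $l\le q-1$ (with some slack). A standard multiplicative Chernoff bound then gives that the number of available neighbors is less than $\rndD_l$ with probability at most $\exp(-\rndk/4)$. Taking a union bound over at most $\rndclev_{q-1}\le \rndn/(\rndD_q+1)$ parent vertices to embed, and inserting the hypothesis $\rndk\ge 4\log(2\rndn/\rnddeltaa(\rndD_\rndq+2))$, yields total failure probability at most $\rnddeltaa$. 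Whenever all these events succeed, I can pick the required children deterministically from the revealed available neighbors and update $U$, completing the induction.

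The main obstacle is the careful bookkeeping of lazy edge revelation together with the level-dependent quantities $\rndlev_l$, $\rndclev_l$, and $\rndD_l$: one must verify that the size of $U$ at the moment we embed a parent at level $l$ is small enough (relative to $\rndn$) that $\mu\ge 2\rndD_l$ continues to hold for every $l$, including the critical last internal level $q-1$ where the available budget is tightest. This is exactly the content of the three-term hypothesis $\rndgamma+4(\rndD+2)^{-1}+2\rndn^{-1}\le 1$, which isolates the contributions from (i) the branching $\rndD_l/\rndk$, (ii) the fraction of vertices consumed by higher levels, and (iii) the lower-order correction from the single previously-queried edge at $v$. Once this inequality is in place, the Chernoff plus union bound argument closes the proof with the claimed probability $1-\rnddeltaa$.
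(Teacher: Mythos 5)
Your proposal is correct and follows essentially the same route as the paper's proof: a greedy level-by-level embedding of $\rndF$ rooted at $\rndroot$, a binomial tail bound showing each extension step fails with probability at most $\exp(-\rndk/4)$ (using the hypothesis $\rndgamma+4(\rndD+2)^{-1}+2\rndn^{-1}\le 1$ to control the mean via $\rndn-\rndclev_l\ge \rndlev_\rndq$), and a union bound over the at most $O(\rndn/(\rndD_\rndq+2))$ internal vertices combined with the lower bound on $\rndk$. Your explicit lazy-revelation/stochastic-domination bookkeeping is a slightly more careful justification of the independence the paper uses implicitly, but the argument is the same.
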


\begin{proof}
  We will construct $\rndphi$ by selecting images for the children of vertices of $\rndF$ in increasing order, i.e., we first select $\rndphi(1)=\rndroot$, then map children $\{2 \ldots \rndD_1+1\}$ of $1$ to vertices adjacent to $\rndroot$, then map all children of $2$ to vertices adjacent to $\phi(2)$, etc.
  Suppose we are processing level $l\in[\rndq-1]$ and have selected $\rndphi(j)$ for all $j\in ch(i')$ for all $i'<i$ for some $\rndclev_{l-1}<i\le\rndclev_l$.  We now need to select images for the $\rndD_l$ children $j\in ch(i)$ of vertex $i$ (or in case $l=q-1$ possibly only $\rndD_l-1$ children).  This means we need to find $\rndD_l$ neighbors of $i$ not already assigned as image to another vertex (i.e., not belonging to $\cup_{0\le i'<i} \rndphi(ch(i'))$).  We compute the probability that this fails.  For any vertex $j\in[\rndn]$ with $i\neq j$, the probability that there is an edge between $i$ and $j$ in $\rndG$ is $2\rndk/\rndn$.  
  Therefore, the probability that we fail to find $\rndD_l$ free neighbors of $i$ can be upper bounded as
  \begin{eqnarray}
    Pr\left[\rndfailF(i)\right]=
    Pr\left[\text{Bin}\left(\rndn-\rndclev_l, \frac{2\rndk}{\rndn}\right)<\rndD_l\right]
    \le\exp\left(\frac{-\left((\rndn-\rndclev_l)\frac{2\rndk}{\rndn}-\rndD_l\right)^2}{2(\rndn-\rndclev_l)\frac{2\rndk}{\rndn}}\right).\label{eq:rnd.22.1}
  \end{eqnarray}
  We know that $\rndn-\rndclev_l \ge \rndlev_\rndq$.
  Moreover, $(\rndlev_\rndq + \rndD_\rndq-1)/\rndD_\rndq \ge \rndlev_{\rndq-1}$
  and $\rndclev_{\rndq-2} + 1 \le \rndlev_{\rndq-1}$,
  hence $2(\rndlev_\rndq + \rndD_\rndq -1)/\rndD_\rndq  \ge \rndclev_{\rndq-2}+\rndlev_{\rndq-1} +1 = \rndclev_{\rndq-1}+1$  and $2(\rndlev_\rndq + \rndD_\rndq -1)/\rndD_\rndq +\rndlev_\rndq \ge \rndn+1$.
  There follows \[
  \rndlev_\rndq(2+\rndD_\rndq)
  \ge \rndn+1 - 2(\rndD_\rndq -1)/\rndD_\rndq \ge \rndn -1.
  \]
  Therefore,
  \begin{equation}
    \label{eq:rnd.22.levq}
  \rndn-\rndclev_l \ge \rndlev_\rndq \ge \rndn(1-2(\rndD_\rndq+2)^{-1}-\rndn^{-1}).
  \end{equation}
  Substituting this and $\rndD_l \ge \gamma \rndk$ in Equation 
  \eqref{eq:rnd.22.1},
  we
  get
  \begin{eqnarray*}
    Pr\left[\rndfailF(i)\right]&\le& \exp\left(\frac{-\left(\rndn(1-2(\rndD_\rndq+2)^{-1}-\rndn^{-1})\frac{2\rndk}{\rndn}-\rndk\rndgamma\right)^2}{2\rndn(1-2(\rndD_\rndq+2)^{-1}-\rndn^{-1})\frac{2\rndk}{\rndn}}\right)\\
    &\le& \exp\left(\frac{-\rndk^2\left(2(1-2(\rndD_\rndq+2)^{-1}-\rndn^{-1})-\rndgamma\right)^2}{4\rndk}\right)\\
    &\le& \exp\left(\frac{-\rndk^2}{4\rndk}\right)
    = \exp\left(\frac{-\rndk}{4}\right),
  \end{eqnarray*}
  where the latter inequality holds as $\gamma + 4(\rndD_\rndq+2)^{-1}+2\rndn^{-1}\le 1$.
  As $\rndk\ge 4\log(2\rndn/\rnddeltaa(\rndD_\rndq+2))$ 
  we can conclude that   
  \[
  Pr\left[\rndfailF(i)\right] \le \frac{\rnddeltaa(\rndD_\rndq+2)}{2\rndn}. \]
  The total probability of failure to embed $\rndF$ in $\rndG$ is therefore
  given by
  \begin{eqnarray*}
    \sum_{i=2}^{\rndclev_{\rndq-1}} \rndfailF(i) &\le& 
    (\rndclev_{\rndq-1}-1) \frac{\rnddeltaa(\rndD_\rndq+2)}{2\rndn} \\
    &\le& \left(\rndn(2(\rndD_\rndq+2)^{-1}+\rndn^{-1})-1\right) 
    \frac{\rnddeltaa(\rndD_\rndq+2)}{2\rndn}
    = \frac{2\rndn}{\rndD_\rndq+2} \frac{\rnddeltaa(\rndD_\rndq+2)}{2\rndn}
    =\rnddeltaa,
  \end{eqnarray*}
  where we again applied (\ref{eq:rnd.22.levq}).\qed
\end{proof}

Now that we can embed $\rndF$ in $\rndG$, we still need to embed the leaves of
$\rndT$.  Before doing so, we review a result on matchings in random graphs. 
The next lemma mostly follows \cite{Bollobas2001a} (Theorem 7.11 therein),
we mainly adapt to our notation, introduce a confidence parameter and make a
few less crude approximations\footnote{In particular, even though Bollobas's
proof is asymptotically tight, its last line uses the fact that $(e\log n)^
{3a} n^{1-a+a^2/n}=o(1)$ for all $a\le n/2$.  This expression is only lower
than $1$ for $n\ge 5.6\cdot10^{10}$, and as the sum of this expression over
all possible values of $a$ needs to be smaller than $\rnddeltaa$, we do not
expect this proof applies to graphs representing current real-life datasets.}.

\begin{lemma}
  \label{lm:rndB}
  Let $\rndm\ge 27$ (in our construction, $\rndm=\rndlev_\rndq$)   
  and $\rndzeta \ge 4$.
  Consider a random bipartite graph with vertex partitions $A=\{a_1 \ldots a_\rndm\}$ and $B=\{b_1 \ldots b_\rndm\}$, where for every $i,j\in[\rndm]$ the probability of an edge $\{a_i,b_j\}$ is $\rndp=\rndzeta(\log(\rndm)\rndlll{-\log \log \log(\rndm)}{})/\rndm$.  Then, the probability of a complete matching between $A$ and $B$ is higher than \[
1-\frac{e\rndm^{-2(\rndzeta-1)/3}}{1-\rndm^{-(\rndzeta-1)/3}}.
  \]
\end{lemma}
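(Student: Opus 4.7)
The proof proceeds via Hall's marriage theorem combined with a first-moment (union bound) argument. Recall that a complete matching exists iff for every $S \subseteq A$, $|N(S)| \geq |S|$. If no complete matching exists, Hall's condition is violated by some $S \subseteq A$, and by the symmetry between $A$ and $B$ one may assume $|S| \leq m/2$; the cost of this reduction is a factor of $2$. Writing $P_a$ for the probability that some $S \subseteq A$ of size $a$ satisfies $|N(S)| \leq a-1$, we obtain
\[
\Pr(\text{no complete matching}) \;\leq\; 2\sum_{a=1}^{\lfloor m/2\rfloor} P_a.
\]

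For a fixed $S$ of size $a$, the event $|N(S)| \leq a-1$ asserts the existence of $T \subseteq B$ with $|T| = a-1$ such that none of the $a(m-a+1)$ potential edges between $S$ and $B \setminus T$ are present. A union bound over $(S,T)$ therefore yields
\[
P_a \;\leq\; \binom{m}{a}\binom{m}{a-1}(1-p)^{a(m-a+1)}.
\]
Substituting $p = \zeta \log(m)/m$, using $(1-p)^{a(m-a+1)} \leq m^{-\zeta a(m-a+1)/m}$, and $\binom{m}{a} \leq (em/a)^a$, the aim is to establish the per-term estimate
\[
P_a \;\leq\; m^{-(a+1)(\zeta-1)/3}.
\]
Granting this, summing the resulting geometric series in $a$ and absorbing the prefactor $2$ into the constant $e$ gives
\[
2\sum_{a=1}^{\infty} m^{-(a+1)(\zeta-1)/3} \;=\; \frac{2\,m^{-2(\zeta-1)/3}}{1-m^{-(\zeta-1)/3}} \;\leq\; \frac{e\,m^{-2(\zeta-1)/3}}{1-m^{-(\zeta-1)/3}},
\]
which is exactly the claimed bound on the failure probability.

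The technical heart of the argument, and the step I expect to be the main obstacle, is establishing the per-term bound $P_a \leq m^{-(a+1)(\zeta-1)/3}$ uniformly over $1 \leq a \leq m/2$. It is natural to split $a$ into ranges. The case $a=1$ is immediate: $P_1 \leq m(1-p)^m \leq m^{1-\zeta} \leq m^{-2(\zeta-1)/3}$ when $\zeta \geq 4$. For $2 \leq a \leq m/3$ the factor $(m-a+1)/m \geq 2/3$ yields $\zeta a(m-a+1)/m \geq 2\zeta a/3$, which after subtracting the binomial contribution $2a\log(em/a)/\log m$ still leaves an exponent bounded by $-(a+1)(\zeta-1)/3$, using $\zeta \geq 4$ and $m \geq 27$ to absorb the constant and $-\log a$ terms. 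For $m/3 < a \leq m/2$ the binomial factor $(em/a)^{2a}$ contributes only a bounded power of $m$, while $(1-p)^{a(m-a+1)}$ contributes at least $m^{-\zeta a/2}$, easily giving the required bound. The careful tracking of these coefficients, and the fact that the exponent $2(\zeta-1)/3$ (rather than the looser $\zeta-2$ produced by naive Stirling) must come out exactly, is what forces the threshold $m \geq 27$ in the hypothesis; this tight bookkeeping is where any extra care is required relative to the asymptotic argument in Bollob\'as.
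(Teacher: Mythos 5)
Your skeleton (Hall's condition, union bound over violating sets, geometric series in the set size $a$) is sound, and it is genuinely different from the paper's argument: the paper takes the \emph{smallest} Hall-violating set $X$, uses minimality to conclude that the subgraph induced by $X\cup\Gamma(X)$ is connected and therefore contains at least $|X|+|\Gamma(X)|-1$ edges, and pays for the presence of those edges with an extra factor $\binom{ij}{i+j-1}p^{i+j-1}$ in each term of the union bound. That factor is what makes the paper's per-pair estimates comfortably summable. By discarding this structure and charging only for the absence of the $a(m-a+1)$ edges from $S$ to $B\setminus T$, you make your per-term target $P_a\le m^{-(a+1)(\zeta-1)/3}$ razor-thin at the stated parameters $\zeta=4$, $m=27$.

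That per-term bound --- which you yourself flag as the technical heart --- is not established, and the case analysis you sketch for it does not go through with the estimates you invoke. For $2\le a\le m/3$ you replace $(m-a+1)/m$ by $2/3$: at $a=2$, $m=27$, $\zeta=4$ this yields $P_2\le\binom{27}{2}\binom{27}{1}\,27^{-16/3}\approx 27^{2.78-5.33}=27^{-2.55}$, which exceeds the required $27^{-3}$ (the case is only rescued because $(m-1)/m\approx 1$ rather than $2/3$ for small $a$, which your uniform split throws away). For $m/3<a\le m/2$ the claim that ``the binomial factor $(em/a)^{2a}$ contributes only a bounded power of $m$'' is false: at $a=m/2$ it equals $(2e)^{m}=m^{m\ln(2e)/\ln m}\approx m^{1.03a}$ for $m=27$, against a budget of $m^{\zeta a/2-(a+1)}=m^{a-1}$ left over from your $(1-p)$ estimate, so the accounting as written produces a \emph{deficit}; the inequality survives only barely (for $27\le m\le 29$ it hinges on the $+1$ in $m-a+1$ and on using $\binom{m}{a}\binom{m}{a-1}$ rather than $\binom{m}{a}^2$), and for $m\ge 30$ only after the exact condition $\ln m\ge 2\ln(2e)$ is verified. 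In short, the step you defer is exactly where the proof lives, and the tools you propose for it are insufficient: either carry out that bookkeeping exactly over the full range of $a$, or import the minimal-set/connectivity device that the paper (following Bollob\'as) uses precisely to avoid this knife-edge.
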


\begin{proof}
  For a set $X$ of vertices, let $\Gamma(X)$ be the set of all vertices adjacent to at least one member of $X$.
  Then, if there is no complete matching between $A$ and $B$, there is some set $X$, with either $X\subset A$ or $X\subset B$, which violates Hall's condition, i.e., $|\Gamma(X)|<|X|$.  Let $X$ be the smallest set satisfying this property (so the subgraph induced by $X\cup \Gamma(X)$ is connected).
  The probability that such sets $X$ and $\Gamma(X)$ of respective sizes $i$ and $j=|\Gamma(X)|$ exist is upper bounded by
 appropriately combining:
  \begin{itemize}
  \item the number of choices for $X$, i.e., $2$ (for selecting $A$ or $B$) times $\comb{\rndm}{i}$,
  \item the number of choices for $\Gamma(X)$, i.e., $\comb{\rndm}{i-1}$ (considering that $j\le i-1$),
  \item an upper bound for the probability that under these choices of $X$ and $\Gamma(X)$ there are at least $2i-2$ edges (as the subgraph induced by $X\cup\Gamma(X)$ is connected), i.e., $\comb{ij}{i+j-1}$ possible choices of the vertex pairs and $\rndp^{i+j-1}$ the probability that these vertex pairs all form edges, and
  \item the probability that there is no edge between any of $X\cup\Gamma(X)$ and the other vertices, i.e., $(1-\rndp)^{i(\rndm-j)+j(\rndm-i)}=(1-\rndp)^{\rndm(i+j)-2ij} $.
  \end{itemize}
  
  Thus, we upper bound the probability of observing such sets $X$ and $\Gamma(X)$ of sizes $i$ and $j$ as follows:
  \begin{eqnarray}
    \rndfailB(i,j) &\le& 
      \comb{\rndm}{i} \comb{\rndm}{j} \comb{ij}{i+j-1} \rndp^{i+j-1} (1-\rndp)^{\rndm(i+j)-2ij} \nonumber\\
      &\le& \left(\frac{\rndm e}{i}\right)^i \left(\frac{\rndm e}{j}\right)^{j}   \left(\frac{ij e}{i+j-1}\right)^{i+j-1} \rndp^{i+j-1} (1-\rndp)^{\rndm(i+j)-2ij}. \nonumber
  \end{eqnarray}
  Here, in the second line the classic upper bound for combinations is used: $\comb{\rndm}{i}<\left(\frac{\rndm e}{i}\right)^i$.
  As $2j\le i+j-1$, we get 
  \begin{eqnarray}
    \rndfailB(i,j) 
    &\le& \left(\frac{\rndm e}{i}\right)^i \left(\frac{\rndm e}{j}\right)^{j}   \left(\frac{i e}{2}\right)^{i+j-1} \rndp^{i+j-1} (1-\rndp)^{\rndm(i+j)-2ij} \nonumber\\
  &\le&  \frac{\rndm^{i+j}e^{2i+2j-1}i^{j-1}}{j^j 2^{i+j-1}} \rndp^{i+j-1} (1-\rndp)^{\rndm(i+j)-2ij}.  \label{eq:rnd.73.1}
  \end{eqnarray}
  As $0<\rndp<1$, there also holds
  \[
  (1-\rndp)^{1/\rndp} < 1/e,
  \]
  and therefore
  \[
  (1-\rndp)^{\rndm(i+j)-2ij} = (1-\rndp)^{\frac{1}{\rndp}\rndp (\rndm(i+j)-2ij)}
  < (1/e)^{\rndp (\rndm(i+j)-2ij)}.
  \]
  We can substitute $\rndp=\rndzeta(\log(\rndm)\rndlll{-\log \log \log(\rndm)}{})/\rndm$ to obtain
  \begin{eqnarray*}
    (1-\rndp)^{\rndm(i+j)-2ij} &<& (1/e)^{(\rndzeta(\log(\rndm)\rndlll{-\log \log \log(\rndm)}{})/\rndm) (\rndm(i+j)-2ij)}
      = \left(\frac{\rndlll{\log\log(\rndm)}{1}}{\rndm}\right)^{\rndzeta (\rndm(i+j)-2ij)/\rndm}.
  \end{eqnarray*}
  Substituting this into Equation
  \eqref{eq:rnd.73.1}, we get
  \begin{eqnarray}
    \rndfailB(i,j) &\le&
    \frac{\rndm^{i+j}e^{2i+2j-1}i^{j-1}}{j^j 2^{i+j-1}}
    \left(\frac{\log(\rndm)\rndzeta}{\rndm}\right)^{i+j-1}
    \left(\frac{\rndlll{\log\log(\rndm)}{1}}{\rndm}\right)^{\rndzeta (\rndm(i+j)-2ij)/ \rndm} \nonumber\\
    &=&
  \frac{\rndm e i^{j-1}}{j^j}
    \left(\frac{\log(\rndm)\rndzeta e^2}{2}\right)^{i+j-1}
    \left(\frac{\rndlll{\log\log(\rndm)}{1}}{\rndm}\right)^{\rndzeta ((i+j)-2ij/\rndm)}.  \nonumber
  \end{eqnarray}
  Given that $\rndm^{\rndzeta/3}\ge \rndzeta\log(\rndm)\rndlll{\log\log(\rndm)}{} e^2/2$ holds for $\rndm\ge 27$ 
  and $\rndzeta \ge 4$,  
  we get 
    \begin{eqnarray}
    \rndfailB(i,j)
    &\le&
  \frac{\rndm e i^{j-1}}{j^j}
    \left(\frac{\log(\rndm)\rndlll{\log\log(\rndm)}{}\rndzeta e^2}{2\rndm^{\rndzeta/3}}\right)^{i+j-1}
    \rndm^{-\rndzeta ((i+j)-2ij/\rndm)+\rndzeta(i+j-1)/3} \nonumber\\
    &\le&
  \frac{e i^{j-1}}{j^j}
  \rndm^{-\rndzeta (\frac{2}{3}(i+j)+1/3-2ij/\rndm)+1} \nonumber\\
    &\le&
  \frac{e i^{j-1}}{j^j}
  \rndm^{-\rndzeta (\frac{1}{3}i+\frac{1}{3})+1}. \nonumber
    \end{eqnarray}
    As $\rndzeta \ge 4$, 
    this implies    
    \begin{eqnarray}
    \rndfailB(i,j)
  &\le&
  \frac{e}{i}\left(\frac{i}{j}\right)^j
  \rndm^{-\frac{\rndzeta i}{3}-\frac{1}{3}}. \label{eq:rnd.73.3}
  \end{eqnarray}
There holds:
    \begin{eqnarray*}
      \sum_{j=1}^{i-1} \left(\frac{i}{j}\right)^j
      &=& \sum_{j=1}^{\lfloor i/3\rfloor} \left(\frac{i}{j}\right)^j
      + \sum_{j=\lfloor i/3\rfloor+1}^i \left(\frac{i}{j}\right)^j \\
      &\le & \sum_{j=1}^{\lfloor i/3\rfloor} \left(\frac{i}{j}\right)^j
      + \sum_{j=\lfloor i/3\rfloor+1}^i 3^j 
      = \sum_{j=1}^{\lfloor i/3\rfloor} \left(\frac{i}{j}\right)^j
      + 3^{\lfloor i/3 \rfloor+1}
      \frac{3^{i-\lfloor i/3\rfloor}-1}{3-1} \\
     &<& \sum_{j=1}^{\lfloor i/3\rfloor} \left(\frac{i}{j}\right)^j
      + \frac{3^{i+1}}{2} 
     < \sum_{j=1}^{\lfloor i/3\rfloor} i^{i/3}
      + \frac{3^{i+1}}{2} \\
      & \le & \frac{i}{3} i^{i/3}
      + \frac{3^{i+1}}{2}. \\
      \end{eqnarray*}
        Substituting in Equation (\ref{eq:rnd.73.3}) gives
    \begin{eqnarray*}      
      \rndfailB&=&\sum_{i=2}^{m/2} \sum_{j=1}^{i-1}\rndfailB(i,j)\\
      &<& \sum_{i=2}^{m/2} \sum_{j=1}^{i-1} \frac{e}{i}\left(\frac{i}{j}\right)^j
      \rndm^{-\frac{\rndzeta i}{3}-\frac{1}{3}} 
      < \sum_{i=2}^{m/2} \left(\frac{i^{i/3+1}}{3}
      + \frac{3^{i+1}}{2}\right)\frac{e}{i}
      \rndm^{-\frac{\rndzeta i}{3}-\frac{1}{3}} \\
      &<& \sum_{i=2}^{m/2} \left(i^{i/3}
      + 3^{i+1}\right) \frac{e}{2}
      \rndm^{-\frac{\rndzeta i}{3}-\frac{1}{3}}
      < \sum_{i=2}^{m/2} i^{i/3}\frac{e}{2} \rndm^{-\frac{\rndzeta i}{3}-\frac{1}{3}} 
      + 3^{i+1} \frac{e}{2} \rndm^{-\frac{\rndzeta i}{3}-\frac{1}{3}}.
    \end{eqnarray*}
    As $\rndm\ge 27 = 3^3$ we can now write 
    \begin{eqnarray*}
       \rndfailB &<& \frac{e}{2}\sum_{i=2}^{m/2}  \rndm^{-\frac{(\rndzeta-1) i}{3}-\frac{1}{3}}
      + \rndm^{(i+1)/3} 
      \rndm^{-\frac{\rndzeta i}{3}-\frac{1}{3}}
      < \frac{e}{2}\sum_{i=2}^{m/2}  \rndm^{-\frac{(\rndzeta-1) i}{3}-\frac{1}
      {3}}
      +\rndm^{-\frac{(\rndzeta-1) i}{3}}\\
      &<& e\sum_{i=2}^{m/2} \rndm^{-\frac{(\rndzeta-1) i}{3}}
      = e \rndm^{-\frac{2(\rndzeta-1) }{3}} 
      \sum_{i=0}^{m/2-2} \left(\rndm^{-\frac{\rndzeta-1}{3}}\right)^i 
      \\ &=& e \rndm^{-\frac{2(\rndzeta-1) }{3}} 
      \frac{1-\left(\rndm^{-\frac{\rndzeta-1}{3}}\right)^{m/2-1}}{1-\left(\rndm^{-\frac{\rndzeta-1}{3}}\right)} 
                    < e \rndm^{-\frac{2(\rndzeta-1) }{3}} 
      \frac{1}{1-\left(\rndm^{-\frac{\rndzeta-1}{3}}\right)}.
    \end{eqnarray*}
    This concludes the proof.\qed
\end{proof}
    
\begin{lemma}
  \label{lm:Bollo.73.delta}
  Let $\rndm\ge 27$    
  and $\rnddeltab>0$.
  Let \[
  \rndzeta = \max\left(4, 1+\frac{3\log(2e/\rnddeltab)}{2\log(\rndm)}\right).
  \] 
  Consider a random bipartite graph as described in Lemma \ref{lm:rndB} above.
Then, with probability at least $1-\rnddeltab$ there is a complete matching between $A$ and $B$.
\end{lemma}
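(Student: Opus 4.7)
The plan is to obtain Lemma~\ref{lm:Bollo.73.delta} as a direct corollary of Lemma~\ref{lm:rndB} by algebraic manipulation of the failure bound. Since $\rndzeta$ is chosen to be at least $4$ and $\rndm \geq 27$, all hypotheses of Lemma~\ref{lm:rndB} are satisfied, and the probability that no complete matching exists between $A$ and $B$ is bounded above by
\[
\frac{e\,\rndm^{-2(\rndzeta-1)/3}}{1-\rndm^{-(\rndzeta-1)/3}}.
\]
It therefore suffices to show that with the given choice of $\rndzeta$, this expression is at most $\rnddeltab$.

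First I would handle the denominator. Because $\rndzeta \ge 4$ we have $(\rndzeta-1)/3 \ge 1$, so $\rndm^{-(\rndzeta-1)/3} \le \rndm^{-1} \le 1/27$, which yields $1 - \rndm^{-(\rndzeta-1)/3} \ge 26/27 > 1/2$. Consequently the failure bound is at most $2e\,\rndm^{-2(\rndzeta-1)/3}$, and it is enough to establish $2e\,\rndm^{-2(\rndzeta-1)/3} \le \rnddeltab$. Taking logarithms, this is equivalent to
\[
\tfrac{2(\rndzeta-1)}{3}\log(\rndm) \;\ge\; \log(2e/\rnddeltab),
\]
i.e., $\rndzeta \ge 1 + \frac{3\log(2e/\rnddeltab)}{2\log(\rndm)}$.

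The second branch of the $\max$ in the definition of $\rndzeta$ in the statement is precisely this quantity, so one of the two branches realizing the maximum forces the inequality to hold; the first branch ($\rndzeta \ge 4$) simply guarantees applicability of Lemma~\ref{lm:rndB}. There is no real obstacle in this argument: the hard work was done in Lemma~\ref{lm:rndB}, and the present lemma is essentially an inversion of that probability bound, solving for the value of $\rndzeta$ needed to drive the right-hand side below a prescribed confidence level $\rnddeltab$. The only minor care required is checking that the loss from replacing $1 - \rndm^{-(\rndzeta-1)/3}$ by $1/2$ is harmless, which is what gives rise to the extra factor of $2$ inside $\log(2e/\rnddeltab)$.
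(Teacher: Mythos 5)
Your proposal is correct and follows essentially the same route as the paper's own proof: apply Lemma~\ref{lm:rndB}, bound the denominator $1-\rndm^{-(\rndzeta-1)/3}$ below by $1/2$ using $\rndzeta\ge 4$ and $\rndm\ge 27$, and observe that the second branch of the $\max$ defining $\rndzeta$ is exactly what is needed to make $e\,\rndm^{-2(\rndzeta-1)/3}\le \rnddeltab/2$. No gaps.
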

\begin{proof}
  From the given $\rndzeta$, we can infer that
  \begin{eqnarray*}
    &\rndzeta-1 \ge \frac{3\log(2e/\rnddeltab)}{2\log(\rndm)}&\\
    &\rndzeta-1 \ge \frac{-3\log\left(\rnddeltab/2e\right)}{2\log(\rndm)}&\\
    &-\frac{2}{3}(\rndzeta-1)\log(\rndm) \le \log\left(\rnddeltab/2e\right) & \\
    &\rndm^{-2(\rndzeta-1)/3} \le \rnddeltab/2e & 
  \end{eqnarray*}
  We also know that $\rndzeta\ge 4$ and $\rndm\ge 27$, hence
  $(\rndzeta-1)/3\ge 1$ and $1-\rndm^{-(\rndzeta-1)/3}\ge 26/27 \ge 1/2$.
  We know from Lemma \ref{lm:rndB} that the probability of having a complete matching is at least
  \[
  1-\frac{e\rndm^{-2(\rndzeta-1)/3}}{1-\rndm^{-(\rndzeta-1)/3}}\ge 1-\frac{e
  (\rnddeltab/2e)}{1/2} = 1-\rnddeltab.
\]
\end{proof}

  \begin{lemma}
    Let $\rnddeltab>0$, $\rndD\ge 1$ (in our construction, $\rndD=\rndD_\rndq$) and $\rnddsum\ge 27$.
    Let $\rndd_1 \ldots \rndd_\rnddcnt$ be positive numbers, with $\rndd_i=\rndD$ for $i\in [\rnddcnt-1]$, 
    $\rndd_\rnddcnt \in [\rndD]$ and $\sum_{i=1}^\rnddcnt \rndd_i=\rnddsum$.  Let $A=\{a_1 \ldots a_\rnddcnt \}$ and $B=\{b_1 \ldots b_\rnddsum\}$ be disjoint sets of vertices in a random graph $\rndG$ where the probability to have an edge $\{a_i,b_j\}$ is $p=2\rndk/\rndn$ for any $i$ and $j$.
  Let
  \begin{equation}
    \label{eq:lm.23.rndp}
  \rndp \ge 
  1-\left(1-\max\left(4, 1+\frac{3\log(2e/\rnddeltab)}{2\log(\rnddsum)}\right)\frac{\log(\rnddsum)}{\rnddsum}\right)^\rndD.
  \end{equation}
  Then with probability at least $1-\rnddeltab$, $\rndG$ contains a collection
  of disjoint $d_i$-stars with centers $a_i$ and leaves in $B$. 
\end{lemma}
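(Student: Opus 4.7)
The strategy is to reduce the existence of disjoint stars to a complete bipartite matching problem so that Lemma~\ref{lm:Bollo.73.delta} can be invoked. First I would perform a \emph{vertex-splitting} step: replace each center $a_i$ by $\rndd_i$ formal copies $a_i^{(1)},\dots,a_i^{(\rndd_i)}$ to obtain a set $A'$ of exactly $\rnddsum=\sum_i\rndd_i$ vertices. A family of disjoint $\rndd_i$-stars with centers $a_i$ and leaves in $B$ is in bijection with a perfect matching between $A'$ and $B$ such that each matched pair $\{a_i^{(k)},b_j\}$ corresponds to a genuine edge $\{a_i,b_j\}$ of $\rndG$.

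Next I would build an auxiliary random bipartite graph $\rndGbis$ on $(A',B)$ in which each potential edge $\{a_i^{(k)},b_j\}$ is present independently with probability
\[
\rndpbis \;=\; 1-(1-\rndzeta\log(\rnddsum)/\rnddsum),
\]
no wait — more cleanly, set $\rndpbis = \rndzeta\log(\rnddsum)/\rnddsum$ with $\rndzeta=\max\bigl(4,\,1+3\log(2e/\rnddeltab)/(2\log(\rnddsum))\bigr)$. Under the hypothesis~\eqref{eq:lm.23.rndp}, one checks $1-(1-\rndpbis)^\rndD \le \rndp$, so by a monotone coupling of Bernoulli variables we may realise $\rndGbis$ on the same probability space as $\rndG$ in such a way that, whenever $\{a_i^{(k)},b_j\}$ is present in $\rndGbis$ for at least one $k\in[\rndd_i]$, the edge $\{a_i,b_j\}$ is present in $\rndG$. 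The slack coming from $\rndd_\rnddcnt\le\rndD$ is absorbed in the inequality $1-(1-\rndpbis)^{\rndd_i}\le 1-(1-\rndpbis)^\rndD$.

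Since $\rnddsum\ge 27$ and $\rndzeta$ was chosen exactly as in Lemma~\ref{lm:Bollo.73.delta}, that lemma applies to $\rndGbis$ and yields, with probability at least $1-\rnddeltab$, a perfect matching $M\subseteq A'\times B$. By construction of the coupling, every matched pair $\{a_i^{(k)},b_j\}\in M$ is witnessed by an actual edge of $\rndG$, and disjointness of the matching guarantees disjointness of the stars obtained by grouping the matched leaves of $a_i^{(1)},\dots,a_i^{(\rndd_i)}$ around the center $a_i$. This produces the required family of disjoint $\rndd_i$-stars with the claimed probability.

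The main technical care lies in the coupling step: one must verify that $\rndpbis$ defined from the hypothesis is small enough that $1-(1-\rndpbis)^\rndD\le \rndp$ holds for every $i$ (uniformly in $\rndd_i\le\rndD$), and that the family $\{a_i^{(k)}\}$ of copies injects into a genuine bipartite Erd\H{o}s--R\'enyi model on $(A',B)$ despite the fact that the original $2\rndk/\rndn$-edges of $\rndG$ are shared across copies of the same $a_i$. Both points are routine once the coupling is written explicitly edge by edge. Everything else — the counting in $A'$, the translation matching-to-stars, and the application of Lemma~\ref{lm:Bollo.73.delta} — is bookkeeping.
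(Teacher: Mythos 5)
Your proposal is correct and follows essentially the same route as the paper: both split each center $a_i$ into $\rndd_i$ copies to form an auxiliary random bipartite graph on $(A',B)$ with per-copy edge probability $\rndzeta\log(\rnddsum)/\rnddsum$, couple it monotonically to $\rndG$ via $1-(1-\rndpbis)^{\rndd_i}\le\rndp$, and invoke Lemma~\ref{lm:Bollo.73.delta} to extract a complete matching that translates back into the disjoint stars. The only cosmetic difference is that the paper defines $\rndpbis=1-(1-\rndp)^{1/\rndD}$ and then lower-bounds it, whereas you fix $\rndpbis=\rndzeta\log(\rnddsum)/\rnddsum$ and verify the reverse inequality; these are equivalent.
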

\begin{proof}
  Define an auxiliary random bipartite graph $\rndGbis$ with sides $A^\prime=\{a^\prime_1 \ldots a^\prime_{\rnddsum} \}$ and $B=\{b_1 \ldots b_\rnddsum \}$.  For every $i,j\in[\rnddsum]$, the probability of having an edge between $a_i$ and $b_j$ in $\rndGbis$ is $\rndpbis=1-(1-\rndp)^{1/\rndD}$.  We relate the distributions on the edges of $\rndG$ and $\rndGbis$ by requiring there is an edge between $a_i$ and $b_j$ if and only if there is an edge between $a^\prime_{\rndD (i-1) + i^\prime}$ and $b_j$ for all $i^\prime \in [\rndD]$.

  From Equation (\ref{eq:lm.23.rndp}) we can derive
  \begin{equation}
    \label{eq:lm.24.rndp}
  \rndpbis \ge 
  \max\left(4, 1+\frac{3\log(2e/\rnddeltab)}{2\log(\rnddsum)}\right)\frac{\log(\rnddsum)}{\rnddsum}.
  \end{equation}
  
  Setting  \[
   \rndzeta = \max\left(4, 1+\frac{3\log(2e/\rnddeltab)}{2\log(\rnddsum)}\right),
   \]
   this ensures $\rndpbis$ satisfies the constraints of Lemma \ref{lm:Bollo.73.delta}:
   \[\rndpbis = \rndzeta \log(\rnddsum)/\rnddsum.\]
  
   As a result, there is a complete matching in $\rndGbis$ with probability at least $1-\rnddeltab$, and hence the required stars can be found in $\rndG$ with probability at least $1-\rnddeltab$.
   \qed
\end{proof}

\begin{lemma}
  \label{lm:rnd.embedT}
  Let $\rnddeltaa>0$ and $\rnddeltab>0$.
  Let $\rndG$ and $\rndT$ and their associated variables be as defined above.
  Assume that the following conditions are satisfied:
  \begin{eqnarray*}
  &(a)& \rndn\ge 27(\rndD_\rndq+2)/\rndD_\rndq, \\
  &(b)& \rndgamma +2(\rndD_\rndq+2)^{-1}+\rndn^{-1} \le 1, \\    
  &(c)& \rndk\ge 4\log(2\rndn/\rnddeltaa(\rndD_\rndq+2)), \\
    &(d)& \rndk \ge  \max\left(4, 1+\frac{3\log(2e/\rnddeltab)}{2\log(\rndn\rndD_\rndq/(\rndD_\rndq+2))}\right)\frac{\rndD_\rndq+2}{2} \log\left(\frac{\rndn\rndD_\rndq}{\rndD_\rndq+2}\right), \\
    &(e)& \rndgamma = \max{}_{l=1}^{\rndq-1} \rndD_l/\rndk.
    \end{eqnarray*}

  Let $\rndG$ be a random graph where there is an edge between any two vertices with probability $\rndp$.  Let $\rndroot$ be a vertex of $\rndG$.
  Then, with probability at least $1-\rnddeltaa-\rnddeltab$, there is a subgraph isomorphism between the tree $\rndT$ defined above and $\rndG$ such that the root of $\rndT$ is mapped on $\rndroot$.
\end{lemma}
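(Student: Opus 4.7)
The plan is to embed $\rndT$ into $\rndG$ in two stages and then take a union bound over the two failure events. In the first stage I would apply Lemma~\ref{lm:rnd.22} to the non-leaf subtree $\rndF=([\rndclev_{\rndq-1}],\rndEF)$ (taking the lemma's generic ``$\rndD$'' to be $\rndD_\rndq$), producing an isomorphism $\rndphi$ from $\rndF$ into $\rndG$ with $\rndphi(1)=\rndroot$. Conditions (b), (c) and (e) of the current lemma are precisely the hypotheses of Lemma~\ref{lm:rnd.22} (note that $\rndk \ge \rndD_\rndq$ follows from (e) with $l = \rndq-1$, so the chain of inequalities $\rndn \ge \rndk \ge \rndD_\rndq \ge 3$ is in force). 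Hence this stage succeeds with probability at least $1-\rnddeltaa$.

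In the second stage I would attach the $\rndlev_\rndq$ leaves of $\rndT$ to the images of their parents. Let $A=\{\rndphi(i):\rndclev_{\rndq-2}<i\le\rndclev_{\rndq-1}\}$, so $|A|=\rndlev_{\rndq-1}$, and let $B=[\rndn]\setminus\rndphi([\rndclev_{\rndq-1}])$, so $|B|=\rndn-\rndclev_{\rndq-1}=\rndlev_\rndq$. By the construction of $\rndT$, the number of leaves $\rndd_i$ that each parent in $A$ needs satisfies $\rndd_i\in[\rndD_\rndq]$ with $\rndd_i=\rndD_\rndq$ for all but at most one parent, and $\sum_i \rndd_i=\rndlev_\rndq$. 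I would then invoke the disjoint-stars lemma preceding Lemma~\ref{lm:rnd.embedT} with $\rnddcnt=\rndlev_{\rndq-1}$, $\rnddsum=\rndlev_\rndq$, $\rndD=\rndD_\rndq$ and edge probability $\rndp=2\rndk/\rndn$. The conclusion yields a collection of disjoint $\rndd_i$-stars with centres in $A$ and leaves in $B$ with failure probability at most $\rnddeltab$; combined with $\rndphi$ this extends to a full subgraph isomorphism from $\rndT$ into $\rndG$ with root mapped to $\rndroot$.

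To discharge the hypotheses of the stars lemma, two checks are needed. First, $\rnddsum \ge 27$: using the bound $\rndlev_\rndq \ge \rndn(1-2(\rndD_\rndq+2)^{-1}-\rndn^{-1}) = \rndn\rndD_\rndq/(\rndD_\rndq+2) - 1$ established during the proof of Lemma~\ref{lm:rnd.22}, condition (a) gives $\rnddsum=\rndlev_\rndq\ge 27$. Second, the edge-probability bound~\eqref{eq:lm.23.rndp}: using Bernoulli's inequality $1-(1-x)^{\rndD_\rndq}\le \rndD_\rndq x$, it suffices that $2\rndk/\rndn \ge \rndD_\rndq \max(4,\,1+\tfrac{3\log(2e/\rnddeltab)}{2\log\rnddsum})\log(\rnddsum)/\rnddsum$. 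Substituting $\rnddsum\ge \rndn\rndD_\rndq/(\rndD_\rndq+2)-1$ and simplifying yields exactly condition (d) (with $\log(\rndn\rndD_\rndq/(\rndD_\rndq+2))$ appearing as claimed).

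Finally, a union bound over the two stages gives a total failure probability of at most $\rnddeltaa+\rnddeltab$, proving the lemma. The main obstacle will be this last bookkeeping step, namely the careful verification that conditions (a) and (d), which are phrased in terms of $\rndn$ and $\rndD_\rndq$, correctly translate into the $\rnddsum\ge 27$ and~\eqref{eq:lm.23.rndp} hypotheses needed by the stars lemma when $\rnddsum=\rndlev_\rndq$ and $\rnddcnt=\rndlev_{\rndq-1}$; the argument is straightforward but requires using the tight relationship between $\rndlev_\rndq$, $\rndn$ and $\rndD_\rndq$ derived from the tree construction.
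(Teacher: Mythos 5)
Your two-stage argument (embed the non-leaf subtree $\rndF$ via Lemma~\ref{lm:rnd.22}, then attach the leaves via the disjoint-stars lemma, and union bound the two failure probabilities) is exactly the structure of the paper's proof, which likewise verifies condition (d) against the edge-probability hypothesis with $\rndm=\rndn\rndD_\rndq/(\rndD_\rndq+2)$. Your version is in fact slightly more explicit than the paper's (e.g., the Bernoulli-inequality step and the $\rnddsum\ge 27$ check), but it is the same proof.
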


\begin{proof}

  The conditions of Lemma \ref{lm:rnd.22} are clearly satisfied, so with
  probability $1-\rnddeltaa$ there is a tree isomorphic to $\rndF$ in $\rndG$.
  Then, from condition (d) above and knowing that the edge probability is $\rndp=2\rndk/\rndn$, we obtain
  \[
  \rndp \ge  \max\left(4, 1+\frac{3\log(2e/\rnddeltab)}{2\log(\rndn\rndD_\rndq/(\rndD_\rndq+2))}\right)\frac{1}{\rndn\rndD_\rndq/(\rndD_\rndq+2)} \log\left(\frac{\rndn\rndD_\rndq}{\rndD_\rndq+2}\right)\rndD_\rndq.\]
  Taking into account that $\rndm=\rndn\rndD_\rndq/(\rndD_\rndq+2)$, we get
  \[
  \rndp \ge  \max\left(4, 1+\frac{3\log(2e/\rnddeltab)}{2\log(\rndm)}\right)\frac{1}{\rndm} \log\left(\rndm\right)\rndD_\rndq,\]
which implies the condition on $\rndp$ in Lemma \ref{lm:Bollo.73.delta}.  The other conditions of that lemma can be easily verified.  
As a result, with probability at least $1-\rnddeltab$ there is a set of stars in $\rndG$ linking the leaves of $\rndF$ to the leaves of $\rndT$, so we can embed $\rndT$ completely in $\rndG$.
\qed
\end{proof}

\subsubsection{Running \gopa{} on Random Graphs}
\label{sec:dp.rnd.gopa}

Assume we run \gopa{} on a random graph satisfying the properties above, what can
we say about the differential privacy guarantees?  According to Theorem~
\ref{thm:diffpriv.etadelta}, it is sufficient that there exists a spanning tree
and vectors $t_\eta$ and $t_\Delta$ such that $t_\eta + K t_\Delta = X^A-X^B$.
We fix $t_\eta$ in the same way as for the other discussed topologies (see
sections \ref{sec:priv.worst-dp}  and \ref{sec:priv.completeGraph}) in order to achieve the desired $\sigma_\eta$ and focus our
attention on $t_\Delta$.
According to Lemma~\ref{lm:rnd.embedT}, with high probability there exists in
$G^H$ a spanning tree rooted at the vertex where $X^A$ and $X^B$ differ and a branching factor $\rndD_l$ specified per level.  So given a random graph on $\rndn$ vertices with edge density $2\rndk/\rndn$, if the conditions of Lemma~\ref{lm:rnd.embedT} are satisfied we can find such a tree isomorphic to $\rndT$ in the communication graph between honest users $G^H$.  
In many cases (reasonably large $\rndn$ and $\rndk$), this means that the lemma guarantees a spanning tree with branching factor as high as $O(\rndk)$, even though it may be desirable to select a small value for the branching factor of the last level in order to more easily satisfy condition (d) of Lemma \ref{lm:rnd.embedT}, e.g., $\rndD_\rndq=2$ or even $\rndD_\rndq=1$.

\begin{lemma}
  \label{lm:rnd.tdeltaBound}
  Under the conditions described above,
  \begin{eqnarray}
    t_\Delta^\top t_\Delta &\le& \frac{1}{\rndD_1}\left(1+\frac{1}{\rndD_2}\left(1+\frac{1}{\rndD_3}\left(\ldots \frac{1}{\rndD_\rndq}\right)\right)\right) +\frac{(\rndD_\rndq+2)(\rndD_\rndq+2+2\rndq)}{\rndn}
    \label{eq:rnd.dp.1}\\
    &\le& \frac{1}{\rndD_1}\left(1+\frac{2}{\rndD_2}\right)+O(\rndn^{-1}).\nonumber
  \end{eqnarray}
\end{lemma}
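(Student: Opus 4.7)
The plan is to fix a specific choice of $(t_\eta,t_\Delta)$ adapted to the random topology, then bound $t_\Delta^\top t_\Delta$ by reading off each $t_e$ from the tree structure and summing level by level.

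First, I would invoke Lemma~\ref{lm:rnd.embedT} to obtain a copy of $\rndT$ as a subtree of $G^H$ rooted at the distinguished vertex $v_1$. Following the strategy used in the proof of Theorem~\ref{thm:diffprivacy}, I set $t_v=1/\rndn$ for every honest vertex, $t_e=0$ for every edge of $G^H$ outside the subtree, and let the remaining coordinates of $t_\Delta$ on the tree edges be determined uniquely by the constraint $t_\eta+Kt_\Delta=X^A-X^B$.

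Second, I would derive a closed form for each $t_e$ in terms of subtree sizes. For any edge $e\in E^T$, removing $e$ splits $\rndT$ into two components; let $s_e$ denote the size of the one not containing $v_1$. Summing $(Kt_\Delta)_u=(X^A-X^B-t_\eta)_u$ over the vertices of that subtree, the only contribution from $Kt_\Delta$ that does not cancel internally is the $\pm t_e$ coming across the cut, while the right-hand side sums to $-s_e/\rndn$. Hence $|t_e|=s_e/\rndn$ and
\[
t_\Delta^\top t_\Delta \;=\; \frac{1}{\rndn^2}\sum_{e\in E^T} s_e^2.
\]

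Third, I would partition the sum by level and invoke the construction of $\rndT$. For $l\le \rndq-2$, every subtree rooted at a level-$l$ vertex has size within $\rndD_\rndq+2$ of the common mean $\bar{s}_l=(\rndn-\rndclev_{l-1})/\rndlev_l$, so the level-$l$ contribution is at most $\rndlev_l\bar{s}_l^2/\rndn^2\le 1/\rndlev_l$ (since $\rndlev_l\bar{s}_l\le\rndn$), plus a cross term of order $(\rndD_\rndq+2)/\rndn$ and a quadratic remainder of order $\rndlev_l(\rndD_\rndq+2)^2/\rndn^2$. The two irregular bottom levels contribute a main term $1/\rndlev_{\rndq-1}+1/\rndlev_\rndq$ plus an additional $O((\rndD_\rndq+2)^2/\rndn)$ from their imbalance. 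Telescoping the main terms gives exactly $\sum_{l=1}^{\rndq}1/\rndlev_l$, which equals the nested expression in the lemma, while the errors sum to at most $2\rndq(\rndD_\rndq+2)/\rndn+(\rndD_\rndq+2)^2/\rndn=(\rndD_\rndq+2)(\rndD_\rndq+2+2\rndq)/\rndn$. The second, looser inequality then follows by dominating the tail $\sum_{l\ge 3}1/\rndlev_l$ by a geometric series bounded by $O(1/(\rndD_1\rndD_2))$ and absorbing the remainder into the $O(\rndn^{-1})$ term.

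The main obstacle will be the bookkeeping in the third step: the tree is only approximately balanced near the leaves, so I must carefully track the $\pm(\rndD_\rndq+2)$ per-vertex imbalance at every level and handle levels $\rndq-1$ and $\rndq$ separately, all while keeping the accumulated error under the stated bound. The linearity of the cross term over the $\rndq$ levels is what produces the factor $2\rndq$ in the correction. Apart from this accounting, no conceptual difficulty is expected, since the tree reduction in the second step makes $t_\Delta$ completely explicit.
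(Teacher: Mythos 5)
Your proposal is correct and follows essentially the same route as the paper: identify $|t_e|$ with (subtree size)$/\rndn$, note that balancedness keeps each subtree size within $\rndD_\rndq+2$ of its ideal value, expand the square into a main term that telescopes to $\sum_{l=1}^{\rndq}\bigl(\prod_{i=1}^{l}\rndD_i\bigr)^{-1}$ plus cross and quadratic remainders, and sum the latter over the $\rndq$ levels to get the $(\rndD_\rndq+2)(\rndD_\rndq+2+2\rndq)/\rndn$ correction. The only differences are presentational (you make the cut argument for $|t_e|=s_e/\rndn$ explicit and center at the per-level mean rather than at $\prod_{i}\rndD_i^{-1}$), and your anticipated bookkeeping matches the paper's.
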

\begin{proof}
  Let $q$ be the depth of the tree $\rndT$.
  The tree is balanced, so in every node the number of vertices in the subtrees of its children differs in at most $\rndD_\rndq+2$.
  For edges $e$ incident with the root (at level 0 node), $|t_e-\rndD_1^{-1}|\le \rndn^{-1}(\rndD_\rndq+2)$.
  In general, for a  node at level $l$ (except leaves or parents of leaves),
  there are $\prod_{i=1}^l \rndD_i$ vertices, each of which have $\rndD_{l+1}$ children, and for every edge $e$ connecting such a node with a child, \[
  \left|t_e-\prod_{i=1}^{l+1} \rndD_i^{-1}\right|\le (\rndD_\rndq+2)/\rndn.
  \]
  For a complete tree (of $1+\rndD+\ldots +\rndD^q$ vertices), we would have
  \begin{eqnarray*}
    t_\Delta^\top t_\Delta &=& \sum_{l=1}^{\rndq} \prod_{i=1}^l \rndD_i \left(\prod_{i=1}^l \rndD_i^{-1}\right)^2 
    = \sum_{l=1}^{q} \left(\prod_{i=1}^l \rndD_i\right)^{-1},
  \end{eqnarray*}
  which corresponds to the first term in Equation (\ref{eq:rnd.dp.1}).
  As the tree may not be complete, i.e., there may be less than $\prod_{i=1}^q \rndD_i$ leaves, we analyze how much off the above estimate is.
  For an edge $e$ connecting a vertex of level $l$ with one of its children,
  \[\left|t_e -\prod_{i=1}^{l+1}\rndD_i\right|\le (\rndD_\rndq+2)/\rndn,\] and hence

  \begin{eqnarray*}
    \left|t_e^2 - \left(\prod_{i=1}^{l+1}\rndD_i\right)^2\right|
    &\le& \left|t_e - \prod_{i=1}^{l+1}\rndD_i\right|\left(t_e + \prod_{i=1}^{l+1}\rndD_i\right) \\
    &\le& \frac{\rndD_\rndq+2}{\rndn} \left(t_e - \prod_{i=1}^{l+1}\rndD_i+ 2\prod_{i=1}^{l+1}\rndD_i\right) \\
    &\le & \left(\frac{\rndD_\rndq+2}{\rndn}\right)^2 + 2 \frac{\rndD_\rndq+2}{\rndn} \prod_{i=1}^{l+1}\rndD_i.
  \end{eqnarray*}
  Summing over all edges gives
  \begin{eqnarray*}
    t_\Delta^\top t_\Delta -\sum_{l=1}^{q} \left(\prod_{i=1}^l \rndD_i\right)^{-1}  &\le& \sum_{l=1}^\rndq \rndlev_l\left( (\rndD_\rndq+2)/\rndn^2 + 2\left(\prod_{i=1}^{l+1}\rndD_i\right)^{-1}(\rndD_\rndq+2)/\rndn\right) \\
    &= & (\rndD_\rndq+2)^2/\rndn + \sum_{l=1}^\rndq 2\rndlev_l\left(\prod_{i=1}^{l+1}\rndD_i\right)^{-1}(\rndD_\rndq+2)/\rndn \\
    &\le & (\rndD_\rndq+2)^2/\rndn + \sum_{l=1}^\rndq 2(\rndD_\rndq+2)/\rndn \\
    &=& \frac{(\rndD_\rndq+2)(\rndD_\rndq+2+2\rndq)}{\rndn}.\smartqed
  \end{eqnarray*}
  \qed
\end{proof}

So if we choose parameters $\rndD$ for the tree $\rndT$, the above lemmas provide values $\rnddeltaa$ and $\rnddeltab$  such that $\rndT$ can be embedded in $G^H$ with probability at least $1-\rnddeltaa-\rnddeltab$ and an upper bound for $t_\Delta^\top t_\Delta$ that can be obtained with the resulting spanning tree in $G^H$.

Theorem~\ref{thm:diffprivacy-random} in the main text summarizes these results, simplifying the conditions by assuming that $\rndD_i=\lfloor(k-1)\honestfraction/2\rfloor$ for $i\le \rndq-1$ and $\rndD_\rndq = 2$.

\begin{proof}[Proof of Theorem~\ref{thm:diffprivacy-random}]
  Let us choose $\rndD_i=\lfloor(k-1)\honestfraction/3\rfloor$ for $i\in
  [\rndq-1]$ and $\rndD_\rndq=1$ for some appropriate $\rndq$ such that
  Equation (\ref{eq:rnd.Dvsq}) is satisfied. We also set $\delta=\rnddeltaa=\rnddeltab$.

  Then, the conditions of Lemma~\ref{lm:rnd.embedT} are satisfied.  In
  particular, condition (a) holds as $\rndn=\honestfraction n \ge 81 = 27
  (\rndD_\rndq + 2) /\rndD_\rndq$.  Condition (e) implies that
  \begin{eqnarray*}
    \rndgamma &=& \max_{i=1}^{\rndq-1} \rndD_i/\rndk 
    = \frac{1}{\rndk} \left\lfloor\frac{(k-1)\honestfraction}{3}\right\rfloor.
    \end{eqnarray*}
  Condition (b) holds as
  \begin{eqnarray*}
    \rndgamma + 2(\rndD_\rndq + 2)^{-1} + \rndn^{-1}
    &=&  \frac{1}{\rndk} \left\lfloor\frac{(k-1)\honestfraction}{3}\right\rfloor + \frac{2}{3} + \rndn^{-1} \\
    & \le & \frac{1}{\rndk} \frac{(k-1)\honestfraction}{3} + \frac{2}{3} + \rndn^{-1} 
    \le  \frac{1}{3} - \frac{\honestfraction}{3\rndk} + \frac{2}{3} + \rndn^{-1} 
     =   \frac{1}{3} - \frac{1}{3k} + \frac{2}{3} + \rndn^{-1} \\
    & \le &  \frac{1}{3} - \frac{1}{\rndn} + \frac{2}{3} + \rndn^{-1} 
    = 1.
  \end{eqnarray*}
  Condition (d) holds because
  we know that $\honestfraction k \ge 6\log(\honestfraction n/3)$,
  which is equivalent to 
  \[
  \rndk \ge 4 \frac{\rndD+2}{\rndD} \log\left(\frac{\rndn\rndD_\rndq}{\rndD_\rndq + 2}\right),
    \]
    and we know that  $\honestfraction k \ge \frac{3}{2} + \frac{9}{4}\log
    (2e/\delta)$, which is equivalent to 
    \[
    \rndk\ge \left(1+\frac{3\log(2e/\rnddeltab)}{2\log(\rndn\rndD_\rndq/(\rndD_\rndq+2))}
    \right)\frac{\rndD+2}{\rndD} \log\left(\frac{\rndn\rndD_\rndq}{\rndD_\rndq + 2}\right).
      \]
      Finally, condition (c) is satisfied as we know that $\honestfraction k
      \ge 4\log(\honestfraction n/3\delta)$.
      Therefore, applying the lemma, we can with probability at least $1-
      2\delta$ find a spanning tree isomorphic to $\rndT$. If we find one, Lemma 
      \ref{lm:rnd.tdeltaBound} implies that
      \begin{eqnarray*}
        t_\Delta^\top t_\Delta &\le&
        \sum_{l=1}^\rndq \left(\prod_{i=1}^l \rndD_i\right)^{-1} + \frac{(\rndD_q+2)(\rndD_q+2+2\rndq)}{\rndn} \\
        &=& \sum_{l=1}^{\rndq-1} \rndD_1^{-l}   + \rndD_1^{1-\rndq}\rndD_\rndq^{-1} + \frac{3(3+2\rndq)}{\rndn}     
        = \rndD_1^{-1}\frac{1-\rndD_1^{1-\rndq}}{1-\rndD_1^{-1}} + \rndD_1^{1-\rndq}\rndD_\rndq^{-1} + \frac{9+6\rndq}{\rndn}  \\
        &\le& \frac{1}{\rndD_1-1} + \frac{3}{\rndn} + \frac{9+6\rndq}{\rndn}  
        = \frac{1}{\lfloor (k-1)\honestfraction/3 \rfloor-1} + 
        \frac{12+6\rndq}{\rndn}   \\
        &=& \frac{1}{\lfloor (k-1)\honestfraction/3 \rfloor-1} + \frac{12+6\log(\rndn)}{\rndn}  
      \end{eqnarray*}
      This implies the conditions related to $\sigma_\Delta$ and $t_\Delta$
      are satisfied.  From Theorem~\ref{thm:diffpriv.etadelta}, it follows that
      with probability $1-2\delta$ \gopa{} is $
      (\varepsilon,\delta)$-differentially private, or in short \gopa{} is $
      (\varepsilon,3\delta)$-differentially private.      
\end{proof}

\subsection{Matching the Utility of the Centralized Gaussian Mechanism} 
\label{app:dp-cor}

From the above theorems, we can now obtain a simple corollary which
precisely quantifies the amount of independent and pairwise noise needed
to achieve a desired privacy guarantee depending on the topology. 

\begin{proof}[Proof of Corollary~\ref{thm:dp_corollary}]
In the centralized (trusted curator) setting, the standard
centralized Gaussian
mechanism (\cite{Dwork2014a} Theorem A.1 therein) states that in order for the noisy average $
(\frac{1}{n}\sum_{u\in U}
 X_u) +
\eta$ to
be $
(\varepsilon',\delta')$-DP for some $\varepsilon',\delta'\in(0,1)$, the
variance of
$\eta$ needs to be:
\begin{eqnarray} 
\sigma_{gm}^2  = \frac{c^2}{(\varepsilon'
n )^2}.
\label{eq:var_cdp}
\end{eqnarray}
where $c^2 >  2 \log(1.25/\delta')$.

Based on this, we let the independent noise $\eta_u$ added by each user in
\gopa{} to have variance
\begin{eqnarray} 
\sigma_\eta^2 &=& \frac{n^2}{\rndn}\sigma_{gm}^2 = \frac{c^2}{
(\varepsilon')^2\rndn},
\label{eq:var_cdp_ind2}
\end{eqnarray}
which, for the approximate average $\hat{X}^{avg}$, gives a total variance of:
\begin{equation} 
\label{eq:var_cdp_ind1}
Var\Big(\frac{1}{\rndn}\sum_{u\in U^H}\eta_u\Big) =
\frac{1}{\rndn^2}
\rndn\sigma_\eta^2 = \frac{c^2}{(\varepsilon'\rndn)^2}.
\end{equation}
We can see that when $\rndn=n$ (no malicious user, no dropout), Equation 
\eqref{eq:var_cdp_ind1}
exactly corresponds to the variance required
by the centralized Gaussian mechanism in Equation \eqref{eq:var_cdp}, hence
\gopa
will
achieve the same utility. When there are malicious users and/or dropouts, each
honest user
needs to add a factor
$n/\rndn$ more noise to compensate.
for the fact that drop out users do not participate and malicious users can
subtract their own inputs and independent noise terms from $\hat{X}^{avg}$.
This is consistent with
previous work on distributed noise
generation under malicious parties \cite{Shi2011}.

Now, given some $\kappa >
0$, let $\sigma_\Delta^2= \kappa\sigma_\eta^2$ if $G$ is the complete graph,
$\sigma_\Delta^2 = \kappa\sigma_\eta^2\rndn(\frac{1}{\lfloor (k-1)\honestfraction/3 \rfloor-1} + 
  (12+6\log(\rndn))/\rndn)$
  for the random $k$-out graph, and $\sigma_\Delta^2 = \kappa
  \rndn^2\sigma_\eta^2/3$ for an arbitrary connected $G^H$.
In all cases, the value of $\theta$ in Theorems \ref{thm:diffprivacy},
\ref{thm:diffprivacy-complete} and \ref{thm:diffprivacy-random} after plugging
$\sigma_\Delta^2$ gives
  \begin{align*}
  \theta &= \frac{\varepsilon^2}{c^2} + \frac{\varepsilon^2}{\kappa c^2} 
  = \frac{ (\kappa+1) \varepsilon^2}{ \kappa c^2}.
  \end{align*}
  We set $\varepsilon=\varepsilon'$ and require that  $\theta \leq \thetamax(\varepsilon, \delta)$ as in conditions of Theorems \ref{thm:diffprivacy},
  \ref{thm:diffprivacy-complete} and \ref{thm:diffprivacy-random}. Then, by Equation \eqref{eq:etadeltabound4} we have 
  \begin{align*}
  \varepsilon &\geq \frac{ (\kappa+1) \varepsilon^2}{2 \kappa c^2} + \sqrt{\frac{ 
  (\kappa+1)}{\kappa}}\frac{\varepsilon}{c}.
  \end{align*} 
  For $d^2 = \frac{\kappa}{\kappa+1}c^2$ we can rewrite the above as
 $\varepsilon \geq \frac{\varepsilon^2}{2d^2} + \frac{\varepsilon}{d}$.
  Since $\varepsilon\leq 1$, this is satisfied if $d - \frac{\varepsilon}
  {2d} \geq 1$ and in turn when $d \geq 3/2$, or equivalently
  when $c
  \geq \frac{3}{2} \sqrt{\frac{\kappa+1}{\kappa}}$. Now analyzing the
  inequality in Equation
  \eqref{eq:etadeltabound5}
  we have:
  \begin{align*}
  \Big(\varepsilon - \frac{ (k+1) \varepsilon^2}{2 \kappa c^2} \Big)^2 & \geq 2\log(2/\delta\sqrt{2\pi})
  \Big(\frac{\varepsilon^2}{c^2} + \frac{\varepsilon^2}{\kappa c^2}\Big)\\
  \varepsilon^2 + \frac{(\kappa+1)^2\varepsilon^4}{4\kappa^2c^4} - \frac{(\kappa+1)\varepsilon^3}{\kappa c^2} &
  \geq 2\log(2/\delta\sqrt{2\pi})
  \Big(\frac{(\kappa+1)\varepsilon^2}{\kappa c^2}\Big)\\
  \frac{1}{2}\Big(\frac{\kappa c^2}{\kappa+1} + \frac{(\kappa+1)\varepsilon^2}
  {4\kappa c^2} -
  \varepsilon\Big) & \geq \log(2/\delta\sqrt{2\pi}).
  \end{align*}
  Again denoting $d^2=\frac{\kappa}{\kappa+1}c^2$ we can rewrite the above as
  \[\frac{1}{2}\Big(d^2 + \frac{\varepsilon^2}{4d^2} - \varepsilon \Big) \geq \log(2/\delta\sqrt{2\pi}).\]
  For $d \geq 3/2$ and $\varepsilon \leq 1$, the derivative of ${d^2 + \frac{\varepsilon^2}{4d^2} - \varepsilon}$ is positive, so ${d^2 + \frac{\varepsilon^2}{4d^2} - \varepsilon > d^2 - 8/9}$. 
  Thus, we only require $d^2\geq 2\log(1.25/\delta)$.
  Therefore Equation \eqref{eq:etadeltabound5} is satisfied when:
  \begin{align*}
  \frac{\kappa}{\kappa+1}\log(1.25/\delta') & \geq \log(1.25/\delta),
  \end{align*}
  which is equivalent to
  $\delta \geq 1.25\Big( \frac{\delta'}{1.25} \Big)^{\frac{\kappa}
  {\kappa+1}}$.
  The constant $3.75$ instead of $1.25$ for the random $k$-out graph case is
  because Theorem~\ref{thm:diffprivacy-random} guarantees $
  (\varepsilon,3\delta)$-DP instead of $(\varepsilon,\delta)$ in
  Theorems~\ref{thm:diffprivacy} and \ref{thm:diffprivacy-complete}.
  \qed
\end{proof}

\subsection{Smaller $k$ and $\sigma_\Delta^2$ via Numerical Simulation}
\label{app:dp_numerical_sim}

\begin{table}[t]
\caption{Examples of admissible values for $k$ and $\sigma_\Delta$, obtained
by numerical simulation, to ensure $(\varepsilon,\delta)$-DP with trusted curator
utility for $\varepsilon=0.1$, $\delta'=1/\rndn^2$, $\delta=10\delta'$.}
\label{tab:cor2}
\vskip 0.1in
\centering
\begin{tabular}{|l|l|c|c|}
\hline
\multirow{4}{2cm}{$n=100$} & \multirow{2}{1.5cm}{$\rho=1$} 
& $k=3$ & $\sigma_\Delta=55.2$ \\\cline{3-4}
&& $k=5$ & $\sigma_\Delta=38.2$\\\cline{2-4}
& \multirow{2}{1.5cm}{$\rho=0.5$} 
& $k=20$ & $\sigma_\Delta=23.6$ \\\cline{3-4}
&& $k=30$ & $\sigma_\Delta=19.6$\\\hline
\multirow{4}{2cm}{$n=1000$} & \multirow{2}{1.5cm}{$\rho=1$} 
& $k=5$ & $\sigma_\Delta=59.9$ \\\cline{3-4}
&& $k=10$ & $\sigma_\Delta=37.8$\\\cline{2-4}
& \multirow{2}{1.5cm}{$\rho=0.5$} 
& $k=20$ & $\sigma_\Delta=42$ \\\cline{3-4}
&& $k=30$ & $\sigma_\Delta=28.5$\\\hline
\multirow{4}{2cm}{$n=10000$} & \multirow{2}{1.5cm}{$\rho=1$} 
& $k=10$ & $\sigma_\Delta=51.1$ \\\cline{3-4}
&& $k=20$ & $\sigma_\Delta=33.8$\\\cline{2-4}
& \multirow{2}{1.5cm}{$\rho=0.5$} 
& $k=20$ & $\sigma_\Delta=59.3$ \\\cline{3-4}
&& $k=40$ & $\sigma_\Delta=33.4$\\\hline
\end{tabular}
\vskip -0.1in
\end{table}

For random $k$-out graphs, the conditions on $k$ and $\sigma_\Delta^2$ given
by Theorem~\ref{thm:diffprivacy-random} are quite conservative. While we are
confident that they can be refined by resorting to tighter approximations in
our analysis in Section~\ref{app:random_graph}, an alternative option to find
smaller, yet admissible values for $k$ and $\sigma_\Delta^2$ is to resort to
numerical simulation.

Given the number of users $n$, the proportion $\rho$ of nodes who are honest
and do not drop out, and a
value for $k$, we implemented a program that generates a random $k$-out graph,
checks if the subgraph $G^H$ is connected, and if so finds a
suitable spanning tree for $G^H$ and computes
the corresponding value for $t_\Delta^\top t_\Delta$ needed by our
differential privacy analysis (see for instance
sections \ref{sec:priv.worst-dp} and \ref{sec:priv.completeGraph}).
From this, we can in turn deduce a sufficient value for $\sigma_\Delta^2$
using Corollary~\ref{thm:dp_corollary}.

Table~\ref{tab:cor2} gives examples of values obtained by simulations for
various values of $n$, $\rho$ and several choices for $k$. In each case, the
reported $\sigma_\Delta$
corresponds to the \emph{worst-case} value required across $10^5$ random runs,
and the chosen value of $k$ was large enough for $G^H$ to be connected in 
\emph{all} runs. This was the case even for slightly smaller values of $k$.
Therefore, the values reported in Table~\ref{tab:cor2} can be considered safe to use in practice.


\section{Details of Security and Cryptographic Aspects}
\label{app:crypto} 

We detail in this appendix some of components of the Verification Protocol of Section \ref{sec:verif}. We describe
the Phase \ref{enum:verif.phase.setup} Setup  in Appendix \ref{app:crypto.setup},
robustness after dropouts of the Phase \ref{enum:verif.phase.mitigate} Mitigation in Appendix  \ref{app:crypto.drop-out}, 
on measures against attacks on efficiency in Appendix \ref{app:crypto.efficiency} and
of issues that could generate the use finite precision representations in Appendix \ref{app:crypto.finite_precision}.

\subsection{Setup Phase}
\label{app:crypto.setup} 

Our verification protocol requires public unbiased randomness to generate Pedersen commitment parameters $\comschpp$ and private random seeds to generate Gaussian samples for Property \eqref{eq:verif3}. We describe below how to perform these tasks in a Setup phase.

\paragraph{Public randomness}
To generate a public random seed, a simple procedure is the following.  First, all users draw uniformly a random number and publish a commit to it.  When all users have done so, they all reveal their random number.  Then, they sum the random numbers (modulo the order $q$ of the cyclic group) and use the result as public random seed.  If at least one user was honest and drew a random number, this sum is random too, so no user can both claim to be honest and claim that the obtained seed is not random. Finally, the amount of randomness of the seed is expanded by the use of a cryptographic hash function. 
Appendix  \ref{app:supp.rand} provides in more detail a folklore method which evenly distributes the work over users.

\paragraph{Private Seeds}
\label{app.crypto.setup.sds} 
In a second part of Setup, users collaboratively generate samples $\pseed_1, \dots, \pseed_n$ such that, for all $u \in \userset$,  $\pseed_u$ is private to $u$ and has uniform distribution in the interval $[0 ,  M-1]$ for some public integer $M < q/2$, the number of bins to generate Gaussian samples (in Appendix \ref{app:supp.gauss}).  In particular,

\begin{enumerate} 
	\item \textbf{For all $u \in U$:} $u$ draws uniformly $z_u \in [0, M-1]$, and publishes $\mathbf{c_{z_u}} = Com_\comschpp(z)$. 
        \item The users draw a public, uniformly distributed random number $z$ (as above).
	\item \textbf{For all $u \in U$:} $u$ computes  $\pseed_u  \gets z + t_u \mod M$ and publishes $\mathbf{c_{\pseed_u}} \gets Com_\comschpp(\pseed_u)$ together with the proof of the modular sum (see the $\Sigma$-protocol for modular sum in \cite{camenisch_proving_1999}). 
\end{enumerate}
It is important that the $c_{z_u}$ are published before generating the public random $z$ to avoid that users would try to generate several $r_u$ and check which one is most convenient for them.

\subsection{Dealing with Dropout}
\label{app:crypto.drop-out}

In this section, we give additional details on the strategies for dealing with
dropout outlined in Section~\ref{sec:gopa}.
We consider that a
user drops out of the computation if they is off-line for a period which is
too long for the community to wait until their return. This can
happen accidentally to honest users, e.g. due to lost network
connection. Malicious users may
also intentionally drop out to affect the progress
of the computation.  Finally, a user detected as cheater by the verification procedure of Section \ref{sec:verif} and banned from the system may also be seen as a dropout.

Unaddressed drop outs affect the outcome of the computation as we rely on the fact that pairwise noise terms $\Delta_{u,v}$ and $\Delta_{v,u}$ cancel out for the correctness and utility of the computation.

We propose a three-step approach for handling dropout:
\begin{enumerate}
	\item First, as a preventive measure, users should exchange pairwise noise
	with enough users so that the desired privacy guarantees hold even if some
	neighbors drop out. This is what we proposed and theoretically analyzed in
	Section~\ref{sec:graphs}, where the
	number of neighbors $k$ in Theorem~\ref{thm:diffprivacy-random} depends on (a lower bound on) the proportion $\rho$ of honest users who do not
	drop out. It is important to use a safe lower bound on $\rho$ to make sure
	users will have enough safety margin to handle actual dropouts.
	\item \label{step:dropout.rollback} Second, as long as there is time, users attempt to repair as much as possible
	the problems incurred by dropouts. A user $u$ who did
	not publish ${\hat{X}}_u$ yet can just remove the corresponding pairwise noise
	(and possibly exchange noise with another active user instead).
	Second, a user
	$u$ who did publish ${\hat{X}}_u$ already but has still some safety margin
	thanks to step 1 can simply reveal the noise exchanged with the user who
	dropped out, and subtract it from his published ${\hat{X}}_u$.

      \item Third, it is possible that a user $u$ did publish ${\hat{X}}$ and afterwards still so many neighbors drop out that revealing all exchanged pairwise noise would affect the privacy guarantees for $u$.  If that happens it means a significant fraction of the neighbors of $u$ dropped out, while the neighbors of $u$ form a random sample of all users.  In such case, it is likely that also globally many users dropped out.  If caused by a large-scale network failure the best strategy could be to just wait longer than initially planned.  Else, given that $u$ is unable to reveal more pairwise noise without risking their privacy, the only options are either that $u$ discards all his pairwise noise and restarts with a new set of neighbors, or that $u$ doesn't address the problem and his pairwise noise is not compensated by the noise of another active user.
To avoid such problems, in addition to step 1, it can be useful to check which
users went off-line just before publishing ${\hat{X}}$ and to have penalties
for users who (repeatedly) drop out at the most inconvenient times. 
Note that for an adversary
	who wants to remain undetected while performing this attack, the behavior 
	of the involved colluding users would need to be statistically
   indistinguishable from that of incidental dropouts. This would result
	in an attack with no extra impact than the one caused by such dropouts.  
	\item Finally, when circumstances require and allow it, we can ignore the
	remaining problems and proceed with the algorithm,
	which will then output an estimated average with slightly larger error.
	This can be the case for instance when only a few drop outs have not yet been
	resolved, there is not much time available, and the
	corresponding pairwise terms $\Delta_{u,v}$ are known to be not too large 
	(e.g., by proving that they were drawn from $\mathcal{N}(0,\sigma_\Delta^2)$
	where $\sigma_\Delta^2$ is small enough, or by the use of range proofs).
\end{enumerate}

\begin{figure}[htb] 
	\centering
	\includegraphics[width=.45\linewidth]{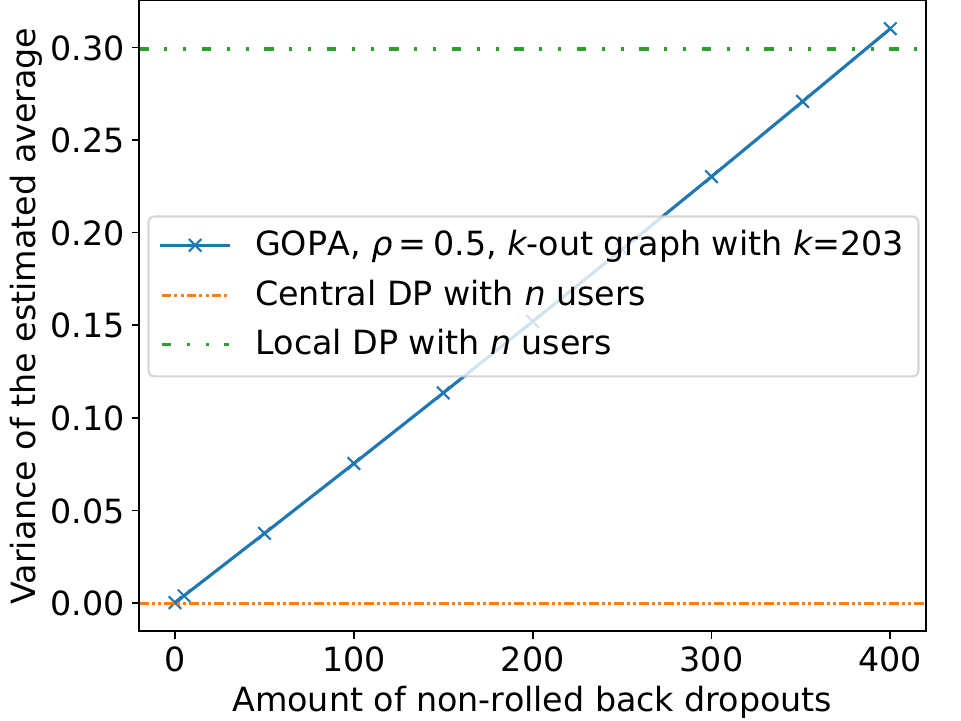}
	\caption{Impact of non-rolled back dropouts on the utility of \gopa. See
		text for details.}
	\label{fig:nonroll_dropout}
\end{figure}

Figure~\ref{fig:nonroll_dropout} illustrates with a simple simulation the impact of a small number of residual pairwise noise terms of variance $\sigma_\Delta^2$ in the final result, which may happen in the rare circumstances that the pairwise noise terms of some users who dropped out are not ``rolled
back'' by their neighbors (step \ref{step:dropout.rollback} above).
We consider $n=10000$, $\rho=0.5$, $\varepsilon=0.1$, $\delta=10/(\rho n)^2$, $\kappa = 0.3$ and set the values of $k$, $\sigma^2_\eta$ and $\sigma^2_\Delta$ using Corollary~\ref{thm:dp_corollary} so that \gopa satisfies $(\varepsilon,\delta)$-DP (in particular we set them in the same way as in Table~\ref{tab:cor}).
We simulate this by drawing a random $k$-out graph, selecting a certain number of dropout users at random, marking all their exchanged noise as not rolled back (in practice its is also possible that part of their noise gets rolled back) and computing the variance of the estimated average. The simulation is averaged over $100$ runs (even though the standard deviation across random runs is negligible).
We see that \gopa can tolerate a number of ``catastrophic drop-outs''
while remaining more accurate than local DP.
This ability to retain a useful estimate despite
residual pairwise noise terms is rather unique to \gopa, and not possible with
secure
aggregation methods which typically use uniformly random pairwise masks 
\cite{Bonawitz2017a}. 
We note that this robustness can be optimized by choosing smaller
$\sigma^2_\Delta$ (i.e., smaller $\kappa$)
and compensating by adding a bit more independent noise according to Corollary~\ref{thm:dp_corollary}.

\subsection{Robustness Against Attacks on Efficiency}
\label{app:crypto.efficiency} 

In this section, we study several attacks, their impact and \gopa{}'s defense against them.

\paragraph{Dropout}

A malicious user could drop out of the computation intentionally, with the impact described in Section~\ref{app:crypto.drop-out}.
However, dropping out is bad for the reputation of a user, and users dropping out more often than agreed could be banned from future participation. One can use techniques from identity management (a separate branch in the field of security) to ensure that creating new accounts is not free, and that possibilities to create new accounts are limited, e.g., by requiring to link accounts to unique persons, unique bank account or similar. This also ensures that the risk of being banned outweighs the incentive of the (bounded) delay of the system one could cause by intentionally dropping out.

\paragraph{Flooding with Neighbor Requests}

In Section \ref{sec:privacy}, we discuss privacy guarantees for complete graphs, path
graphs and random communication graphs.
In the case of a complete communication graph, all users exchange noise with all other users.  This is slow, but there is no opportunity for malicious users to interfere with the selection of neighbors as the set of neighbors is fixed.
In other cases, e.g., when the number of users is too large and every user selects $k$ neighbors randomly as in Section \ref{sec:priv.random}, one could wonder
whether colluding malicious users could select neighbors in such a way that
the good working of the algorithm is disturbed, e.g., a single honest user is
flooded with neighbor requests and ends up exchanging noise with $O(n)$ others.

We first stress the fact that detecting such attacks is easy.  If
all agents randomly select neighbors uniformly at random as they should, then
every agent expects to receive $k$ neighbor invitations, perhaps plus a few
standard deviations of order $O(\sqrt{k})$.
As soon as a user receives a sufficiently unlikely number of neighbor invitations, we know that with overwhelming probability the user is targeted by malicious users.

To avoid this, we can let all users select neighbors in a deterministic way starting from a public random seed (e.g., take the public randomness generated in Section \ref{app:crypto.setup}, add the ID of the user to it, apply a hash function to the sum, and use the result to start selecting neighbors).  In this way, neighbor selection is public and can't be tampered with.  It is possible some neighbors of a user $u$ were off-line and $u$ skipped them, but unless so many users are off-line that the community should have noticed severe problems $u$ should be able to find enough neighbors among the first $ck$ in his random sequence for a small constant $c$.

\paragraph{Other Common Attacks}

We assume the algorithm is implemented on a system which is secure according to classic network-related attacks, such as denial-of-service (DoS) attacks. 
Such attacks are located at the network level rather than at the algorithm level. As such, they apply similarly to any distributed algorithm requiring communication over a network.  To the extent such attacks can be mitigated, the solutions are on the network level, including (among others) a correct organization of the network and its routers.

Similarly, we assume that all (honest) communication is secure and properly
encrypted, referring the reader to the state-of-the-art literature for details
on possible implementations.

\subsection{Further Discussion on the Impact of Finite Precision }
\label{app:crypto.finite_precision}

In practice, we cannot work with real numbers but only with finite precision
approximations, see Section \ref{sec:verif.crypto}. We provide
a brief discussion of the impact of this on the guarantees offered by the protocol.
There is already a large body of work addressing issues which could arise
because of finite precision. Here are the main points:
\begin{enumerate} 
	\item Finite precision can be an issue for differential privacy in general,
	(see e.g. \cite{balcer_et_al:LIPIcs:2018:8353} for a study of the effect
	of floating point representations). Issues can be
	overcome with some care, and our additional encryption
	does not make the problem worse (in fact we can argue that encryption typically uses more bits and in our setting this may help).
      \item
        The issue of finite precision has been studied in cryptography. While some operations such as multiplication can cause difficulties in the context of homomorphic encryption, in our work we use a partially homomorphic scheme with only addition.   As a result, we can just represent our numbers with as many bits (after the decimal dot) as in plaintext memory. 
      \item If the number of users is high (and hence also the sum of the $X_u$ and the $\Delta_{u,v}$ variables), working up to the needed precision doesn't cause a cost which is high compared to the cost of the digits before the decimal dot.
\end{enumerate}


\section{Complexity of \gopa{}}
\label{app:complexity}

The cost of the protocol in terms of computation and communication was
summarized in Theorem~\ref{thm:cryptocost}.
This section summarizes each component of this cost, relying in
particular on the cost of proving computations in  \ref{sec:verif}.

Dominant computations are exponentiations in the cryptographic group $\G$ defined in Section \ref{sec:verif.crypto}. The cost of signing messages is negligible.
We describe costs centered on any user $u \in \userset$ . For simplicity, when we say a task costs $c$, it means that costs at most $c$ computations 
for proving a computation, $c$ computations for another party verifying this computation and it requires the exchange of 
$c\groupelsize$ bits, where $\groupelsize$ is the size in bits of an element of $\G$.
Some computations depend on fixed-precision parameter $\gdprec$ to represent numbers in $\Z_q$ (see Section \ref{sec:verif.crypto}) and the amount $1/B$ of equiprobable bins used to sample independent noise $\eta_u$ (see Appendix \ref{app:supp.gauss}).

The costs break down as follows:
\begin{itemize}
\item The Phase \ref{enum:verif.phase.setup} Setup requires generating public randomness which has constant cost (except for $O(1)$ parties that perform some extra computations, see Algorithm \ref{alg:genrand}) and private seeds that require 2 range proofs in the interval $[0, 1/B\gdprec]$, with a final cost of $20\log_2 (1/B) + 20\log_2 (1/\gdprec)$.

\item Validity of input at  Phase \ref{enum:verif.phase.verif}  Verification requires a range proof in the 
interval $[0,1/\gdprec]$, with a cost of $10\log_2(1/\gdprec)$. Extra computations of consistency cannot be accounted here
as they depend on the nature of external computations.

\item Correctness of computations of properties \eqref{eq:verif1} and \eqref{eq:verif2} at  Phase \ref{enum:verif.phase.verif} Verification cost at most $5|N(u)| + 4$, accounting the computations over commitments and proofs of knowledge 
of terms of each property.

\item The verification of Property \eqref{eq:verif3} at  Phase \ref{enum:verif.phase.verif} Verification costs $\ln(1/B) \cdot (34.1\ln(1/\gdprec) + 7.22\ln(q))$ (see Appendix \ref{app:supp.gauss}). 
\end{itemize} 

The overall cost of a protocol for user $u$ is at most of 

\[ 5|N_u| + 20\log_2 (1/B) + 30\log_2 (1/\gdprec) + \ln(1/B) \cdot (34.1\ln(1/\gdprec) + 7.22\ln(q)) + 4.\]

\newpage	

\section*{Supplementary Material}

This supplementary materials file contains background, often a summary of what can be found elsewhere in literature, and further details, which are not essential for the elaboration of the contribution.


\section{Further Details on Cryptography and Security}
\label{app:supplementary}

\subsection{Commitment Schemes }
\label{app:supp.cmt}

We start by defining formally  a commitment and its properties.

\begin{definition}[Commitment Scheme] 
	\label{def:commitment_scheme}
	A commitment scheme consists of a pair of (computationally efficient) algorithms $(Setup, Com)$. The setup algorithm $Setup$ is executed once, with randomness $t$ as input, and outputs a tuple $\comschpp \gets Setup(t)$, which is called the set of parameters of the scheme.
	The algorithm $Com$ with parameters $\comschpp$, denoted $Com_{\comschpp}$, is a function ${Com_{\comschpp}:\mathcal{M}_{\comschpp} \times \mathcal{R}_{\comschpp} \rightarrow \mathcal{C}_{\comschpp}}$, where $\mathcal{M}_{\comschpp}$ is called the message space, $\mathcal{R}_{\comschpp}$ the randomness space, and $\mathcal{C}_{\comschpp}$ the commitment space. For a message $m \in \mathcal{M}_{\comschpp}$, the algorithm draws $r \in \mathcal{R}_{\comschpp}$ uniformly at random and computes commitment $\mathbf{c} \gets Com_{\comschpp}(m,r)$. 
\end{definition}
The security of a commitment scheme typically depends on $t$ not being biased, in particular, it must be hard to guess non-trivial information about $t$.

We now define some key properties of commitments.
\begin{property}[Hiding Property]
	A commitment scheme is \emph{hiding} if, for all secrets  $x \in 
	\mathcal{M}_{\comschpp}$ and given that $r$ is chosen uniformly at random from $\mathcal{R}_{\comschpp}$, the commitment $\mathbf{c_x} = Com_{\comschpp}(x,r)$ does not reveal any information about $x$.
\end{property}

\begin{property}[Binding Property]
	A commitment is \emph{binding} if there exists no computationally efficient
	algorithm $\mathcal{A}$ that can find $x_1,x_2 \in \mathcal{M}_{\comschpp}$, $r_1, r_2 \in \mathcal{R}_{\comschpp}$ such that $x_1 \neq x_2$ and ${Com_{\comschpp}(x_1,r_1) = Com_{\comschpp}(x_2,r_2)}$. 
\end{property}

\begin{property}[Homomorphic Property] 
	\label{prop:homomorphic} 
	A \emph{homomorphic} commitment scheme is a commitment scheme such that $
	\mathcal{M}_{\comschpp}$, $\mathcal{R}_{\comschpp}$ and $\mathcal{C}_{\comschpp}$ are abelian groups, and for all $x_1, x_2 \in \mathcal{M}_{\comschpp}$, $r_1, r_2 \in \mathcal{R}_{\comschpp}$ we have
	\[
	Com_{\comschpp}(x_1,r_1) + Com_{\comschpp}(x_2,r_2) = Com_{\comschpp}(x_1 + x_2, r_1 + r_2).
	\] 
\end{property}
Please note that the three occurrences of the '$+$' sign in the above
definition are operations in three difference spaces, and hence may have
different definitions and do not necessarily correspond to normal addition of
numbers.

\paragraph{Pedersen Commitments} 

Let $p$ and $q$ be two large primes such that $q$ divides $p-1$,  and a let  $
\mathbb{G}$ be cyclic subgroup of order $q$ of the multiplicative group $\Z^*_p$. For such group, we have that  $\mathbb{G} = \{ a^i \mod p \mid 0 \leq i < q \}$ for any $a \in \mathbb{G}$ distinct to $1$. In the Pedersen scheme, $Setup$ is a function that samples at random parameters  $\comschpp = (\mathbb{G}, g,h)$ where  $g, h$ are two generators of $\G$  
Additionally, $
\mathcal{M}_\comschpp = \mathcal{R}_\comschpp = \Z_q$, and $\mathcal{C}_\comschpp = \mathbb{G}$.
We will refer to $g$ and $h$ as \emph{bases}.
The commitment function
$Com_{\comschpp}$  is defined as 
\begin{eqnarray} 
	Com_{\comschpp} &:& \Z_q \times \Z_q \rightarrow \mathbb{G} \nonumber \\
	Com_{\comschpp} (x, r) &=& g^x \cdot h^r, 
\end{eqnarray}
where $(\cdot)$ is the  product modulo $p$ of group $\mathbb{G}$, $x$ is the
secret and $r$ is the randomness. For simplification and when $r$ is not relevant, we relax the notation $Com_\comschpp(x,r)$ to $Com_\comschpp(x)$ and assume $r$ is drawn appropriately.  Pedersen commitments are homomorphic (in
particular, $Com_{\comschpp}(x+y,r+s)=Com_{\comschpp}(x,r)\cdot Com_
{\comschpp}(y,s)$), unconditionally hiding, and  computationally binding under
the Discrete Logarithm Assumption, which we succinctly describe below.  For more details, see Chapter 7 of \cite{katz_introduction_2014}.

\begin{assumption}[Discrete Logarithm Assumption]
	\label{assump:dla}
	Let $\mathbb{G}$ be a cyclic multiplicative group of large prime order $q$, and $g$ and $h$ two elements chosen independently and uniformly at random from $\mathbb{G}$. 
	Then, there exists no probabilistic polynomial time  algorithm $\mathcal{A}$
	that takes as input the tuple $(\mathbb{G}, q, g, h)$ and outputs a value
	$b$ such that $P(g^b = h)$ is significant on the size of
	$q$.
\end{assumption} 
An optimized implementation of Pedersen commitments can be found in  \cite{franck_efficient_2017}.
To generate a common Pedersen scheme in our adversary model, the  generation of the unbiased random input $t$ for $Setup$ can be done as described in Algorithm \ref{alg:genrand} of Appendix \ref{app:supp.rand}. 

\subsection{Zero Knowledge Proofs} 
\label{app:supp.zkp}

\emph{Zero Knowledge Proofs} (ZKP) \cite{goldwasser_knowledge_1989}
allow  to prove statements about private committed values.  
We use  $\Sigma$-protocols \cite{cramer_modular_1997}, a type of ZKP which can prove arithmetic relations, range membership and certain types of logical formulas. We explain basic concepts about ZKP in Section \ref{ch:background.zkp.def} and $\Sigma$-protocols in Section \ref{ch:background.zkp.sigmap}.

\subsubsection{Zero Knowledge Proofs and Arguments} 
\label{ch:background.zkp.def} 
ZKP are a special case of \emph{interactive Proofs of Knowledge} (PoK). We first define PoK.  
For a NP relation $\rel$, a PoK is a
protocol between a prover $\Prov$ and a verifier $\Verif$
in which $\Prov$ tries to prove to $\Verif$ that he knows a \emph{witness} $\wit$
such that $(\stat, \wit) \in  \rel$ for a public statement $\stat$. At the end of the protocol, $\Verif$ either \emph{accepts} or \emph{rejects} the proof.  We denote by  $(\stat;\wit)$ to a member of a relation or an input of a protocol, using a semicolon to separate the public statement $\stat$ from the private witness $\wit$.  
The tuple of all messages in a proof is called the \emph{conversation} or  \emph{transcript}.
A ZKP is a PoK that satisfy the following properties: 
\begin{itemize} 
	\item \textbf{Completeness:} a proof is \emph{complete} if $\Verif$ always accepts the proof when $(\stat;  \wit) \in \rel$ and $\Prov$ knows $\wit$.  
	\item \textbf{Soundness:} a proof is \emph{sound} if any prover 
	whose proof for statement $\stat$ is accepted by the verifier knows a valid witness $\wit$ such that $(\stat; \wit) \in \rel$  with overwhelming probability. The notion of soundness we use is called \emph{witness extended emulation}  \cite{lindell_parallel_2003}. Proofs that are sound only if the prover is computationally
	bounded are also called \emph{arguments}. 
	\item \textbf{Zero Knowledge:} a proof is \emph{zero knowledge} if its transcript reveals no or negligible information about the witness other than its validity.
\end{itemize}

\subsubsection{$\Sigma$-Protocols} 

\label{ch:background.zkp.sigmap}

Here we explain the basics to understand classic $\Sigma$-protocols  proposed by  \cite{cramer_modular_1997,cramer_zero-knowledge_1998}. While the proofs defined therein are applicable for a wide family of commitments, 
we instantiate them for the Pedersen scheme. 

A $\Sigma$-protocol is a 3-message PoK with a transcript of the form $(m_1, t, m_2)$,
where $m_1$ and $m_2$ are messages of $\Prov$, and $t$ is a message of $\Verif$ called \emph{challenge},
which is uniformly distributed in some challenge space $S$. The size of $S$ is linear in $q$ (see Appendix \ref{app:supp.cmt}).

The security of these protocols rely on $\Verif$ behaving honestly in the draw of challenges. 
To circumvent this in settings where no trusted verifier is available, $\Sigma$-protocols
can be transformed to non-interactive using the \emph{Fiat-Shamir heuristic}~\cite{fiat_how_1987,bernhard_how_2012}. This  consists on replacing the
challenge of $\Verif$ by a hash function that computes $t$ from $m_1$ and the statement of the proof. 
Then, a prover can generate a transcript by itself and send it to an untrusted verifier. This 
transformation is proven secure when the hash function is modeled as a random oracle \cite{bellare_random_1993}.
We will use this transformation for both classic and compressed $\Sigma$-protocols described in the following sections. 
However, for simplicity, we will describe protocols in their interactive form. 

Below, we describe some existent protocols, explain their complexity and refer to the literature for other aspects such as security proofs. The protocols are secure under the DLA and instantiated for Pedersen commitments, described in Appendix \ref{app:supp.cmt}.  Recall parameters  $\Z^*_p$, $\Z_\pOrd$, $\mathbb{G}$, $g$ and $h$ defined therein.
For some building blocks, we use
extra  pairs of elements of $\G$ as bases of for $\comschpp$, where bases belonging to the same pair
are always assumed
to be chosen
independently at random from $\mathbb{G}$.
Recall that $p$ is the order of $\Z^*_p$, of which $\mathbb{G}$ is subgroup of. 
Products and exponentiations of members of $\mathbb{G}$ are group operations, which means that they are 
implicitly modulo $p$. 
To analyze the computational cost of the building blocks, we count group exponentiations (\gexp{}), which are the dominant operations.
For a ZKP, we call the \textit{size of a proof} to the size of the transcript. To describe sizes of proofs, we use $\groupelsize$ for the size in bits of an element of $\mathbb{G}$.
In that context, $\groupelsize$ is equal to $\lfloor \log_2 p \rfloor + 1$, but in some 
implementations the size can be different so we abstract from details. $a \getsR S$ means that $a$ is sampled uniformly at random from elements of $S$.

\paragraph{Basic proof of knowledge} This proof is also known as the basic protocol. It allows  $\Prov$ to prove that he knows an opening of a commitment $P$. That is, a value $x$ and randomness $r$ such that $P = g^x  h^r$. This is a proof  for the relation \[\{ (P \in \G; x, r \in \Z_\pOrd): P = g^x h^r \}. \] 
The protocol goes as follows: 
\begin{enumerate} 
	\item $\Prov$ draws $a,b \getsR \Z_q$, computes $A \gets g^a h^b$ and sends $A$ to $\Verif$. 
	\item $\Verif$ draws a challenge $t \gets_R \Z_q$ and sends it to $\Prov$. 
	\item $\Prov$ computes $d \gets a + xt \pmod q$ and $e \gets b + rt \pmod q$ and sends $(d,e)$ to $\Verif$. 
	\item $\Verif$: if $g^d h^e = A P^t$ then \textbf{accept},  else \textbf{reject}
\end{enumerate}

The computational cost of the proof is $2$ \gexp{}
for $\Prov$, and $3$ for $\Verif$. The proof size is of $3\groupelsize$ bits. This protocol was first proposed by \cite{schnorr_efficient_1991} for a single base,  and adapted to many bases (in our case, $g$ and $h$) in \cite{chaum_improved_1988}.

We provide an intuition of how ZKP properties are obtained for the basic proof of knowledge.
Completeness follows directly from the homomorphic property. 
For soundness, consider a prover $\Prov^*$ that, by following the protocol, can produce an accepting 
transcript   $(A, t_1, (d_1,e_1))$ with significant probability. 
Then it is shown that also with non-negligible probability, by using $\Prov^*$'s strategy many times, 
another accepting transcript of the form $(A, t_2, (d_2,e_2))$ can be produced,  where the first message is equal in both transcripts and such that $t_1 \neq t_2$.  
Now, since both transcripts are accepting, we have 
$g^{d_1} h^{e_1} = AP^{t_1}$ and 
$g^{d_2} h^{e_2} = AP^{t_2}$
so $\Prov^*$ can efficiently compute the witness  $x = \frac{d_1 - d_2}{t_1 - t_2}$ and $r = \frac{e_1 - e_2}{t_1 - t_2}$ which is valid as $g^x h^r = P$.  Either $\Prov^*$ already knew it or he can efficiently compute the 
discrete logarithms base $g$ of $h$, which is a contradiction due to the DLA. 

Zero knowledge is obtained by showing that all information
seen in the proof is randomness that does not depend on the secrets. 
By only knowing the statement $P$ one can sample 
$d'\getsR \Z_\pOrd$, $e'\getsR \Z_\pOrd$ and $t' \getsR \Z_\pOrd$, compute $A' = P^{-t'} g^{d'} h^{e'}$
and generate the transcript $(A', t', (d',e'))$
has the same distribution as a conversation between an honest prover and 
an honest verifier. Note that, as it is computed in reverse,  $(A', t', (d',e'))$
cannot be efficiently produced by a dishonest prover in the actual protocol, as the order of messages
cannot be changed.

In the following, we will present $\Sigma$-protocols that use similar arguments as the ones described for the basic proof of knowledge to prove their properties. 
In particular, they can all generate accepting transcripts ``in reverse'' even if witnesses are not known, as shown
to prove the zero knowledge property above. 
We will not describe their security proofs, but we will point to the literature where these can be found.

\begin{paragraph}{Proof of equality} This  proof allows  $\Prov$ to
	prove that he committed to the same private value $x$ in  two different
	commitments $P = g_1^{x}  h_1^{r}$
	and $P' = g_2^{x}  h_2^{r'}$. That is, the relation 
	\[ \{ (P, P' \in \G; x, r,r' \in \Z_\pOrd): P = g_1^{x}  h_1^{r} \land P' = g_2^{x}  h_2^{r'} \}.\] 
	The pairs of bases $(g_1, h_1)$ and $(g_2, h_2)$ might be different. This proof is described by \cite{chaum_wallet_1993} for a single base and a generalization for many bases can be found in \cite{chaum_improved_1988}. 
	The protocol goes as follows: 
	\begin{enumerate}
		\item $\Prov$ generates $a, b,c \getsR \Z_q$, computes $A \gets g_1^a h_1^b$ and $B \gets g_2^a h_2^c$ and sends $(A,B)$ to $\Verif$.
		\item $\Verif$ draws a challenge $t \getsR \Z_q$ and sends it to $\Prov$. 
		\item $\Prov$ computes $d \gets a + xt \pmod q$, $e \gets b + rt \pmod q$ and $f \gets c + r't \pmod q$ and sends $(d,e,f)$ to $\Verif$. 
		\item $\Verif$: if $g_1^d h_1^e = A P^t$ and  $g_2^d h_2^f = B (P')^t$, then \textbf{accept} else \textbf{reject}.
	\end{enumerate} 
	The proof requires the computation of $4$ \gexp{}
	for $\Prov$ and $6$
	for $\Verif$. The proof size is of $5\groupelsize$ bits. 
\end{paragraph}

\paragraph{Composition}
Let $\prot_1$ and $\prot_2$ be two $\Sigma$-protocols for ZKP of relations $\rel_1$ and $\rel_2$ respectively. 
A single $\Sigma$-protocol to prove the relation 
\[ 
\rel_\land = \{ (\stat, \stat'; \wit, \wit'): (\stat;\wit) \in \rel_1 \land (\stat'; \wit') \in \rel_2 \}
\]
can be easily constructed. Basically, we just make both proofs to share the same challenge. The protocol is as follows.  The first message is $(m_1, m_1')$, where $m_1$ and $m_1'$ are exactly as generated for the first messages of  $\prot_1$ and $\prot_2$ respectively. Then $\Verif$ generates the challenge $t$ and sends it to $\Prov$. Next, $\Prov$ generates the third
message $(m_2,m_2')$ where $m_2$ and $m_2'$ are generated from $(m_1, t)$ and from $(m_1', t)$ as they will be generated
for the third messages of $\prot_1$ and $\prot_2$ respectively. Finally $\Verif$ verifies $(m_1, t, m_2)$ and $(m_1', t, m_2')$ are accepting 
as in $\prot_1$ and $\prot_2$. 
Using this composition, if we wish to prove statements over many relations, we can do it with a single  $\Sigma$-protocol. 

Another technique allows to construct ZKPs of disjunctions of statements. 
The work by \cite{cramer_proofs_1994} proposed a $\Sigma$-protocol for the relation 
\[
\rel_\lor = \{  (\stat, \stat'; \wit, \wit'): (\stat;\wit) \in \rel_1 \lor (\stat'; \wit') \in \rel_2 \},
\]
where $\prot_1, \prot_2, \rel_1$ and $\rel_2$ are defined as for conjunctions. We briefly describe the protocol below. Without loss of generality, assume that $\Prov$ knows 
$(\stat;\wit) \in \rel_1$, but he does not know $(\stat';\wit') \in \rel_2$ and that $\Verif$ does not know which of the two
witnesses is known by $\Prov$. As $\prot_2$ is zero knowledge, an accepting transcript $(m_1', t_b, m_2')$ 
can be generated in reverse as shown for the proof of knowledge above and  without $\Prov$ knowing a valid witness. 
\begin{enumerate}  
	\item $\Prov$ generates $(m_1', t_b, m_2')$ in reverse and generates $m_1$ as he would normally do for the first message of $\prot_1$.
	\item $\Prov$ sends  $(m_1, m_1')$ to $\Verif$.
	\item $\Verif$ draws a challenge $t \getsR \Z_q$ and sends it to $\Prov$.
	\item $\Prov$ computes $t_a \gets t - t_b \bmod \pOrd$ and computes $m_2$ from $m_1$ and $t_a$ as he would have done in the original protocol $\prot_1$.
	\item $\Prov$ sends $(m_2, m_2')$ to $\Verif$.
	\item $\Verif$ checks that $(m_1, t_a, m_2)$ and $(m_1', t_b, m_2')$ are accepting as in protocols $\prot_1$ and $\prot_2$, and that $t_a + t_b \bmod \pOrd$ is equal to $t$.  If all checks pass, then \textbf{accept}. Otherwise \textbf{reject}. 
\end{enumerate}  
Here, $\Prov$ can generate  $(m_1, t_a, m_2)$ as he knows a  witness for $\rel_1$ and he can simulate 
an accepting transcript for $\rel_2$. As $t_a + t_b \bmod \pOrd= t$, $\Verif$ is ensured that at least
either $t_a$ or $t_b$ is unpredictable for $\Prov$ when generating the first message of the proof, so he would not be able to pass the proof with significant
probability unless he knows at least one witness. We refer to the paper for the security proofs and further details. 

The composition techniques outlined above allow one to prove logical formulas of conjunctions and disjunctions of statements. 
The computational and communication cost for the resulting protocol is similar to the sum of the costs
of the proof of each composed statement. 


%
\begin{paragraph}{Proof of linear relation}
	Let $\vv{x} = (x_i)_{i = 1}^k$ for some $k > 0$ be a vector of private values,  $\vv{P} = (P_i)_{i=1}^k$ a public vector of commitments such that $P_i$ is a commitment of $x_i$, $\vv{a} = (a_i)_{i=1}^k$ a public vector of integer  coefficients, and $b$ a public scalar. The goal of $\Prov$ is to prove the equality $\langle \vv{x}, \vv{a} \rangle = b \pmod q$, where $\langle \cdot, \cdot \rangle$ is the inner product.  The proof follows almost directly from the homomorphic property of commitments. We have that $P_L = \prod_{i = 1}^k P_i^{a_i}$ is a valid commitment of $\langle \vv{x} , \vv{a} \rangle$. Then, $\Prov$ and $\Verif$ prove that values committed with $P_L$ and $P_b = g^b$ are equal. Note that it is not necessary to use the base $h$ to compute the commitment  $P_b$  as $b$ is publicly known and hence we can use randomness equal to $0$.  The proven relation is 
	\begin{align*}
		\Big\{(\vv{P} \in \G^k, \vv{a} \in \Z_q^k, b \in \Z_q ; \vv{x} , \vv{r} = (r_i)_{i=1}^k \in \G^k )   \\
		: \bigwedge_{i = 1}^k P_i = g^{x_i} h^{r_i} \land \langle \vv{a}, \vv{x} \rangle = b \pmod q \Big\}.
	\end{align*}
	where $\vv{r}$ is the randomness for each commitment. 
	For $\Prov$ and $\Verif$, the proof requires $k + 1$ \gexp{} to compute  $P_L$ and $P_b$ plus one proof of equality. This leads to a final cost  of $k+5$ \gexp{} 
	for $\Prov$ and $k+7$ 
	for $\Verif$. Considering that initial commitments $\vv{P}$ are already shared with $\Verif$ and that $\Prov$ has already proven knowledge of them, the proof size is equal to 
	that of the proof of equality, which is $5\groupelsize$ bits.
\end{paragraph}

\begin{paragraph}{Proof of a committed bit}
	A handy special case of the composition using disjunction is a proof that a secret $b$  is a bit, i.e.
	that $\{b=0 \lor b=1 \}$. This instantiation is described by (\cite{mao_guaranteed_1998} Section 3.2
	therein). It requires $3$ \gexp{} 
	for $\Prov$ and $4$ 
	for $\Verif$. The size of the proof is of $4\groupelsize$ bits. 
\end{paragraph} 

\begin{paragraph}{Range proof}
	Here, the statement to prove is that, for a positive integer $M < q-1$ of $B_M$ bits and a
	commitment $P$, $\Prov$ knows an opening $x$ of  $P$ 
	that lies in the range $[0, M]$. This is a folklore proof that can be done by committing to the bits $b_1,  \ldots, b_{B_M }$ 
	of the bit representation of $x$, i.e. that $x = \sum_{i=1}^{B_M} 2^{i-1} b_i$. For $i \in  \{1, \dots,B_M\}$ it is proven that  $b_i$ is a bit with the protocol mentioned above. The validity of the decomposition of $x$ can be proven by a linear proof. This implies that $x \in [0, 2^{B_M}-1]$. For ranges whose upper bound is not of the form $2^{B_M}-1$, $\Prov$ commits to $M - x$ and proves that it also lies in $[0, 2^{B_M}-1]$. This finishes the proof for arbitrary ranges.  

	Committing to every bit $b_i$ has a cost of $B_M$ \gexp{}  
	for $\Prov$. The cost of proofs of bits is dominated by $3B_M$ \gexp{}
	for $\Prov$ and $4B_M$ 
	for $\Verif$. The proof that 
	these bits are the decomposition of $x$ is dominated by $B_M$ \gexp{}  
	for both $\Prov$ and $\Verif$. For arbitrary ranges, the cost duplicates as described above.
	The total cost of the proof is dominated by $10 \log_2(M)$ \gexp{} 
	for $\Prov$ and the same cost  
	for $\Verif$. The size of the proof is dominated by $10\log_2(M)\groupelsize$ bits. 
\end{paragraph}

\begin{paragraph}{Proof of product}
	In this proof, $\Prov$ wants to prove the knowledge of openings $a$, $b$ and $d$
	of commitments $P_a = g^{a} h^{r_1}$, $P_b = g^b h^{r_2}$ and
	$P_d = g^d h^{r_3}$ and that they satisfy $a b = d \pmod q$. This proof can
	be done in a straightforward way with the knowledge and equality proofs. We use the fact that, by
	properties of our cyclic group $\mathbb{G}$,  $P_a$ is not only a commitment but also a base, and that $P_d = (P_a)^b h^{(r_3 - br_1)}$ is a valid commitment for $b$ using bases $(P_a, h)$. For the proof, $\Prov$ first proves the knowledge of an opening of $P_a$, and then it proves the equality of two openings $P_b$ and $P_d$ with the proof of equality, where commitments are computed with the pairs of bases $(g, h)$ and $(P_a, h)$ respectively. If $\Prov$ does not know the commitment of a product of $ab$ hidden in $P_d$ it would not pass the proof. 
	The proof is the instantiation for Pedersen commitments of the protocol by  \cite{cramer_zero-knowledge_1998}, and requires  $6$ \gexp{} 
	for $\Prov$ and $9$ 
	for $\Verif$. The proof size is of $8\groupelsize$ bits. 
\end{paragraph}


\begin{paragraph}{Proof of modular sum}
	Here, $\Prov$ wants to prove that, for a public modulus $M < q/2$, a public value $t \in 
	[0, M-1]$ and two secrets $x, z \in [0,M-1]$, the relation $x = z + t \mod M$ is
	satisfied. Let $P_x$ and $P_z$ be the commitments of $x$ and
	$z$ respectively. First, $\Prov$ does two range proofs to prove that $x$ and $z$ lie in $[0,M-1]$. 
	Then it proves that an auxiliary value $b$ is a bit. Finally it proves that ${x = z + t - bM \pmod q}$ with a linear proof.
	Because of their range, we have that $z + t < q$, hence the modulus $q$ does
	not interfere and the linear equality holds if and only if the modular sum is satisfied. 
	This proof is inspired by the more general proof of modular sum proposed by 
	\cite{camenisch_proving_1999}. 
	Its complexity is given by the composition of the proofs mentioned above,
	where the dominant term comes from  the two range proofs, and amounts to $20 \log_2(M)$ \gexp{} 
	for $\Prov$ and the same
	for $\Verif$. The size of the proof is of $20\log_2(M)\groupelsize$ bits. 
\end{paragraph}

\subsection{Public Randomness}

\label{app:supp.rand}

We provide here an algorithm to generate public randomness which, compared to the sketch provided in Section \ref{app:crypto.setup}, distributes the cost more evenly over the participants. In particular, we provide a decentralized method to generate $n$ public random numbers in $\Z_q$ with a computational effort of $O(1)$ per user, and costs logarithmic in $n$ for a negligible portion of users. In \gopa{} it is used among others to generate private random seeds and to initialize the Pedersen commitment parameters $\comschpp$ shared by all users.

In our procedure, we enumerate users from $1$ to $n = |U|$ and use the cryptographic hash function $H$ described at the beginning of the Section to generate random numbers  over $\Z_{2^T}$. We also make use of a commitment function $Com$, for which a common trusted initialization is not required (see for example \cite{blum_coin_1983}). The procedure is depicted in Algorithm \ref{alg:genrand}.

\begin{algorithm*}[hb]
	\floatname{algorithm}{Algorithm}
	\caption{Generation of Public Randomness}
	\begin{algorithmic}[1]
		\STATE{\textbf{Input:} Hash function $H$ and a commitment function $Com$. }
		\FORALL{\label{algl:rnd_commitreveal.start}$u \in  \{1, \dots, n\}$ }
		
		\STATE{Draw  ${s_u \getsR \Z_q}$, compute ${\mathbf{c_u} \gets Com(s_u)}$ and publish $\mathbf{c_u}$  }  
		\ENDFOR 
		\FORALL{$u \in  \{1, \dots, n\}$ }
		\STATE{\label{algl:rnd_commitreveal.end} Set $s[u,u] \gets s_u$ and publish it }
		\ENDFOR 
		\FOR{\label{algl:rnd_distribsum.start}$j = 1$ \TO  $\lfloor \log_2(n) \rfloor  +1$ sequentially } 
		\FORALL{$i \in \{0, \dots, \lfloor n/2^j \rfloor \}$ }
		\STATE{	Let $u_{min} = 2^{j}i+1$ and $u_{max} = \min(2^j (i+1), n)$} 
		\IF{$s[u_{min} , u_{max}]$ is not already published} 
		\STATE{Let $u_{mid}=  u_{min}  + 2^{j-1}-1$}
		\STATE{A user $u \in \{ u_{min}, \dots, u_{max} \}$ wakes up and:}
		\STATE{$\bullet$ queries $(s[u_{min} , u_{mid}]$,   $s[u_{mid}+1,  u_{max} ])$, if a value is not published, set it to $0$} 
		\STATE{\label{algl:rnd_distribsum.end} $\bullet$ publishes ${s[u_{min} ,  u_{max}] \gets s[u_{min} , u_{mid} ] + [u_{mid}  + 1,   u_{max} ]} \mod q$ }
		\ENDIF 
		\ENDFOR
		\ENDFOR
		\FORALL{\label{algl:rnd_hash.start} $u \in \{1, \dots, n\}$ } 
		\STATE{\label{algl:rnd_hash.end}  Query $S \gets s[1,n]$ and publish $t_u \gets H(S + u)$}
		\ENDFOR
		\STATE{ \textbf{Output:} A set  of unbiased random numbers $t_1, \dots, t_n \in \Z_{2^T}$} 
	\end{algorithmic}
	\label{alg:genrand}
\end{algorithm*}

The ``commit-then-reveal'' protocol from in lines  \ref{algl:rnd_commitreveal.start}-\ref{algl:rnd_commitreveal.end}  is to avoid users from choosing the value $s_u$  depending
on the choice of other users, which could bias the final seed $S = \sum_{u \in U} s_u \mod q$. The value $S$, computed in lines \ref{algl:rnd_distribsum.start}-\ref{algl:rnd_distribsum.end} in a distributed way, requires at most $O(\log(n))$ queries and sums for a user in the worst case, but $O(1)$ for almost all users. Inactive users do not affect the computation, but their $s_u$ terms might be ignored.  As it is computed from modular sums, $S$ is uniformly distributed over $\Z_{q}$ if at least one active user is honest. Finally, we compute our public random numbers $t_1, \dots, t_n$ in lines \ref{algl:rnd_hash.start} and \ref{algl:rnd_hash.end}, which are  unpredictable due to the properties of $H$ and the unpredictability and uniqueness of its input $S + u$.

To verify that a user $u$ participated correctly, one needs to check
that (1) the commitment $\mathbf{c_u}$ was properly computed from $s_u$, (2) all sums $s[u_{min}, u_{max}]$ that $u$ published were computed correctly, and (3)
the challenge $t_u$ was correctly computed from $S + u$ by the application
of $H$. 

The cost of the protocol for a user is dominated in the worst case at by $\log_2(n)$ sums  
and $3\log_2(n)\groupelsize$ bits in total, corresponding to the several queries of  $s[u_{min}, u_{mid}]$ and $s[u_{mid}+1, u_{max}]$ and the publication of its sum.  However, this is cost applies to only $O(1)$ users, while the rest of the users computes one commitment, one evaluation of $H$ and $O(1)$ sums and transfers $O(1)$ bits during the execution. 

\subsection{Private Gaussian Sampling}
\label{app:supp.gauss} 

In our algorithm, every user $u$ needs to generate a Gaussian distributed number $\eta_u$, which he does not publish, but for which we need to verify that it is generated correctly, as otherwise a malicious user could bias the result of the algorithm.

Proving the generation of a Gaussian distributed random number is more
involved than ZKPs such as linear relationships and range proofs we need to verify in the other parts of the computation.

Recall that $\psi$ is the desired fixed precision and $B$ is a precision parameter such that $B^2/\psi$ is an integer.  We want to draw $\eta_u$ from the Gaussian distribution approximated with $1/B$ equiprobable bins and we want to prove the correct drawing up to a precision $B$.

We will start from the private seed $\pseed_u$, generated in the  Phase \ref{enum:verif.phase.setup} Setup (Section \ref{app:crypto.setup}), which is only known to user $u$, for which $u$
has published a commitment $Com_{\comschpp}(\pseed_u)$ and for which the other
users know it has been generated uniformly randomly from an interval $[0,M-1]$ for $M \approx 1/B$.
There are many ways now to exploit a uniformly distributed number to generate a Gaussian distributed one, but studying and comparing their efficiency and numerical quality is out of the scope of the current paper.  Here, we only describe one strategy.

User $u$ will compute $x'$ such that $((2\pseed_u+1)/M)-1=\hbox{erf}(x'/\sqrt{2})$.  We know that $x'$ is normally distributed.  The main task is then to provide a ZKP that $y=\hbox{erf}(x)$ for $y=((2\pseed_u+1)/M)-1$ and $x=x'/\sqrt{2}$. As $\erf$ is symmetric, we do our analysis for positive values of $x$ and $y$,  while the extension for the negative case is straightforward. We want to achieve an approximation where the error on $y$ as a function of $x$ is at most $B$.

\paragraph{Approximating the error function.}
The error function relates its input and output in a way that cannot be
expressed with additive, multiplicative or exponential equations.  We
therefore approximate $\hbox{erf}$ using a converging series.  In particular, we will rely on the series
\begin{equation}
	\label{eq:erf.series}
	\hbox{erf}(x) = \frac{2}{\sqrt{\pi}} \sum_{l=0}^\infty \frac{(-1)^l x^{2l+1}}{l!(2l+1)}.
\end{equation}
As argued by \cite{chevillard:inria-00261360}, this series has two major
advantages. First, it only involves additions and multiplications, while other
known series converging to $\hbox{erf}(x)$ often include multiple $\exp
(-x^2/2)$ factors which would require additional evaluations and proofs.
Second, it is an alternating series, which means we can determine more easily
in advance how many terms we need to evaluate to achieve a given precision.  

Nevertheless, Equation \eqref{eq:erf.series} converges slowly for large $x$.
It is more efficient to prove either that 
\[
y=\hbox{erf}(x) \quad or \quad 1-y = \hbox{erfc}(x),
\]
as for $\hbox{erfc}(x)=1-\hbox{erf}(x)$ there exist good approximations
requiring only a few terms for large $x$. An example is the asymptotic
expansion
\begin{equation}
	\label{eq:erfc.series1}
	\hbox{erfc}(x) = \frac{e^{-x^2}}{x\sqrt{\pi}} S_{\mathrm{erfc}}(x) + R_L(x),
\end{equation}
where 
\begin{equation}
	\label{eq:erfc.series2}
	S_{\mathrm{erfc}}(x) =  \sum_{l=0}^{L-1} \frac{(-1)^l (2l-1)!!}{(2x^2)^l}
\end{equation}
with $l!!=1$ for $l<1$ and $(2l-1)!! = \prod_{i=1}^{l} (2i-1)$.
This series diverges, but if $x$ is sufficiently large then the remainder
\begin{equation}
	\label{eq:erfc.remainder}
	R_L(x) \leq \frac{e^{-x^2}}{x\sqrt{\pi}} \frac{(2L-1)!!}{(2x^2)^L} 
\end{equation}
after the first $L$ terms is sufficiently small to be neglected. 
So $u$ could prove either part of the disjunction 
depending on whether the $\hbox{erf}$ or $\hbox{erfc}$ approximations achieve sufficient precision.

\paragraph{Zero Knowledge Proof of $\erf(x)$.} 
We consider use fixed-precision rounding operations. The implied rounding
does not cause major problems for several reasons.  First, the Gaussian
distribution is symmetric, and hence the probability of rounding up and
rounding down is exactly the same, making the rounding error a zero-mean
random variable. Second, discrete approximations of the Gaussian mechanism
such as binomial mechanisms have been studied and found to give similar guarantees as the Gaussian mechanism \cite{cp-sgd}. Third, we can require the cumulated rounding error to be an order of magnitude
smaller than the standard deviation of the noise we are generating, so that
any deviation due to rounding has negligible impact.
\newcommand{\gdenc}[1]{\langle #1 \rangle}
We will use a fixed precision for all numbers (except for small integer constants), and we will represent numbers as multiples of $\gdprec$.

Let $t_l = \frac{x^{2l+1}}{l!}$, and $\Lerf+1$ be the amount of terms of  the series in Equation \eqref{eq:erf.series} we evaluate. We provide a ZKP of the evaluation of $\erf$ by proving that $t_0 = x$, 
\begin{align} 
	t_{l} &= \frac{t_{l-1}  x^2}{l} \text{\quad} \text{for all } l \in \{1, \dots, \Lerf \}, \label{eq:erf.zkp}
\end{align} 
and $y = \frac{2}{\sqrt{\pi}} \sum_{l=0}^{\Lerf} (-1)^l  \frac{t_l}{2l+1}$. 
The bulk of the ZKP is in Equation \eqref{eq:erf.zkp} and in the divisions by $2l+1$ of the latter relation. For any fixed-precision value $d$, let $\gdenc{d} = \frac{d}{\gdprec}$ be its integer encoding.
We can achieve a ZKP of the fixed precision product $c = ab$ for private $a$, $b$ and $c$  by proving that $\frac{1}{\gdprec}\gdenc{c} - \gdenc{a}\gdenc{b} \in [- 1/2\gdprec, 1/2\gdprec ] $. 
For round-off division, a ZKP that $a/b = c$ for private $a, c$  and a public positive integer $b$ is possible by proving that 
$\gdenc{a} - b\gdenc{c} \in [ -b/2,b/2]$. The above proofs can be achieved using circuit  and range proofs in $\Z_q$. 

Similarly for $\erfc$, let  $m_l = (2l-1)!!/(2x)^l$ and $\Lerfc$ be the amount of terms we compute of the series defined in Equation \eqref{eq:erfc.series2}, we can construct a ZKP of its evaluation by proving $m_0 = 0$, 
\begin{equation} 
	m_l = m_{l-1} \frac{2l-1}{2x^2} \text{\qquad, for all } l \in \{1, \dots, \Lerfc-1 \}
	\label{eq:erfc.zkp} 
\end{equation}  
and $y =  \frac{e^{-x^2}}{x\sqrt{\pi}}\sum_{l=0}^{\Lerfc-1} (-1)^l m_{l}.$  Proving that $\enc{m_{l}}\enc{x^2} - \frac{2l-1}{\gdprec}\enc{m_{l-1}}  \in \left[  -\enc{x^2} , \enc{x^2} \right]$ is equivalent to prove the relation in  Equation \eqref{eq:erfc.zkp}. 
This can be done with $10 B_{x^2}$ cost, where $B_{x^2}$ is maximum number of bits of $\enc{x^2}$. We can assume that $B_{x^2} \leq \log_2(q)$ (where $q$ is the order of the Pedersen group $\G$) which makes the cost for each term not bigger than $10\log_2(q)$. All other non-dominant computations can similarly be proven with circuit proofs. 

To approximate the $\exp(-x^2)$ term, the
common series $\exp(z)=\sum_{i=0}^\infty z^i/i!$ is known to converge quickly.
Even if its terms first go up until $z<i$, for larger $z$ where this could
slow down convergence one can simply divide $z$ by a constant $a$ (maybe
conveniently a power of $2$), approximate $\exp(z/a)$ and then compute $(\exp
(z/a))^a$, which can be done efficiently.  For simplicity, we omit the details here.

\newcommand{\Eerfc}{E_{\mathrm{erfc}}} 

\paragraph{Amount of approximation terms.} 

Now we determine the magnitudes of $\Lerf$ and $\Lerfc$ needed to achieve an error smaller than $B$ in our computation. We then require the approximation and rounding error to be smaller than  $B/2$. 
The amount of terms of the series must not depend on $x$ as the latter is a private value, but we must be able to achieve the expected precision for all possible $x$. 

The $\erf$ series requires more approximation terms as $x$ gets larger. On the other hand, the $\erfc$ series is optimal when $\Lerfc = \lfloor x^2 + 1/2 \rfloor$ terms are evaluated, and the error gets smaller as $x$ increases. Then, we use $\erfc$ only when $x$ is large enough to satisfy the required precision, and use $\erf$ for smaller values. We first compute the lower bound $\xminerfc$ 
for the application of $\erfc$. Then the domain of $\erf$ is restricted to $\left[ 0, \xminerfc \right)$ so can obtain the an upper bound of $\Lerf$. Similarly, the restricted domain of $\erfc$ allows us to upper bound $\Lerfc$. 

Developing Equation \eqref{eq:erfc.remainder}, we can achieve an 
approximation error of $\erfc$ of 
\begin{eqnarray*} 
	\Eerfc(x) &\leq& \frac{1}{e^{x^2}x\sqrt{\pi}} \frac{(2L-1)!!}{(2x^2)^L} 
	= \frac{1}{e^{x^2}x\sqrt{\pi}} \frac{(2L)!}{L!2^L(2x^2)^L} \\
	&\approx& \frac{1}{e^{x^2}x\sqrt{\pi}} \frac{\sqrt{4\pi L}(2L/e)^{2L}}{\sqrt{2\pi L} (L/e)^L 2^L (2x^2)^L} 
	= \frac{\sqrt{2}}{e^{x^2}x\sqrt{\pi}} \frac{(2^{2L}) (L^{2L}) (e^L)}{(e^{2L}) (L^L) (2^L) (2x^2)^L} \\
	&=& \frac{\sqrt{2}}{e^{x^2}x\sqrt{\pi}} \frac{2^{L} L^{L}}{e^{L} (2x^2)^L} 
	= \frac{\sqrt{2}}{e^{x^2}x\sqrt{\pi}} \frac{L^{L}}{e^{L} (x^2)^L}. 
\end{eqnarray*} 
This is minimal in $\lfloor x^2 + 1/2 \rfloor$ 
so given $x$ we can achieve an error of 
\begin{eqnarray*} 
	\frac{\sqrt{2}}{e^{x^2}x\sqrt{\pi}} \frac{L^{L}}{e^{L} (x^2)^L} &\le& \frac{\sqrt{2}}{e^{x^2}x\sqrt{\pi}} \frac{{\lfloor x^2 + 1/2 \rfloor}^{\lfloor x^2 + 1/2 \rfloor}}{ e^{\lfloor x^2 + 1/2 \rfloor} (x^2)^{\lfloor x^2 + 1/2 \rfloor}} 
	\leq \frac{\sqrt{2}}{\sqrt{\pi}} \frac{(1 + 1/2x^2)^{\lfloor x^2 + 1/2 \rfloor}}{x e^{(2x^2 - 1/2 )}}.
\end{eqnarray*}
As we use $\erfc$ for large values of $x$, we assume $x \geq 1$ which will not affect our reasoning,  then and we can simplify the above by
\begin{eqnarray} 
	\Eerfc(x) &\leq&	\frac{\sqrt{2}}{\sqrt{\pi}} \frac{(1 + 1/2x^2)^{\lfloor x^2 + 1/2 \rfloor}}{x e^{(2x^2 - 1/2 )}} 
	\leq \frac{\sqrt{2}}{\sqrt{\pi}} \frac{3\sqrt{6}/4}{x e^{(2x^2 + 3/2 )}} 
	\leq \frac{1}{2} \sqrt{\frac{27}{\pi}} \frac{1}{ e^{(2x^2 - 1/2 )}}. 
	\label{eq:erfc.err}
\end{eqnarray}
Then, by Equation \eqref{eq:erfc.err} and if 
\[
\frac{B}{2} \ge 
\frac{1}{2} \sqrt{\frac{27}{\pi}} \frac{1}{ e^{(2x^2 - 1/2 )}}  \ge E_{\mathrm{erfc}}(x),
\]
the prover will use the $\hbox{erfc}$ series, else the $\hbox{erf}$ series.  In  the latter case, we require that 
\[
\sqrt{\frac{27}{\pi}} \frac{1}{ 2e^{(2x^2 - 1/2 )}} \geq \frac{B}{2},
\] 
which implies 
\[ 
\sqrt{\frac{27}{\pi}} \frac{1}{B} \geq e^{(2x^2 - 1/2 )}. 
\] 
The above is equivalent to 
\[
\ln(27/\pi)/2 +  \ln(1/B) \geq 2x^2 - 1/2, 
\] 
which implies
\[  
1.076 +  \ln(1/B)/0.434 \geq (2x^2 - 1/2 ) 
\] 
and
\begin{equation} 
	x_{\mathrm{erfc}}^{min}  = \sqrt{ \frac{\ln(1/B)}{2} + 0.788 } \geq x.
	\label{eq:erf.maxx}
\end{equation}

Now that we bounded the domains of $\erf$ and $\erfc$, we can obtain the number of terms to evaluate for the series. 
As shown in Equation \eqref{eq:erf.series}, this is an alternating series too, so
to reach an error smaller than $B/2$, it is sufficient to truncate the series
when terms get smaller than $B/2$ in absolute value.  In particular, we need $L$ terms with
\[
\frac{2x^{2L+1}}{\sqrt{\pi}L!(2L+1)} \le \frac{B}{2}.
\]
Using again Stirling's approximation, this means 
\[
\frac{2x^{2L+1}}{\sqrt{\pi}\sqrt{2\pi L}(L/e)^L (2L+1)} \le \frac{B}{2}.
\]
Taking logarithms, we get
\[
\ln(2)+(2L+1)\ln(x) - \ln(\pi\sqrt{2}) - (L+1/2)\ln(L)+L-\ln(2L+1) \le \ln(B)-\ln(2). 
\]
We have that ${2\ln(2) - \ln(\pi\sqrt{2})-\ln(2L+1) \le 0 }$, so the above inequality is satisfied if
\[
(2L+1)\ln(x) - (L+1/2)\ln(L) + L \leq \ln(B) 
\]
which is equivalent to 
\[
(L+1/2) \ln(ex^2/L) \leq \ln(B),
\]
or written differently: 
\[
(L+1/2) \ln\Bigg(1 - \frac{L - ex^2}{L}\Bigg) \leq \ln(B)
\]
As $\ln(1+\alpha) \leq \alpha$, this is satisfied if
\[
-(L+1/2) \frac{L - ex^2}{L} \leq \ln(B),
\]
which is equivalent to 
\[ 
(L+1/2)\frac{L - ex^2}{L} \geq \ln(1/B). 
\]
The above is satisfied if
\[L - ex^2 \geq \ln(1/B). \]
It follows that we need
\[ 
L \geq \ln(1/B) + ex^2. 
\]
Substituting the worst case value $x_{\mathrm{erfc}}^{min}$ of $x$ from Equation \eqref{eq:erf.maxx},
we get

\[
L \geq \Big\lceil  \Big(1+ \frac{e}{2}\Big) \ln(1/B) + 0.79e \Big\rceil 
\]
which, approximating, is implied if 
\begin{equation}
	\label{eq:erf.minL}
	L \geq  \lceil 2.36\ln(1/B) +  2.15 \rceil = L_{\mathrm{erf}}. 
\end{equation}
Now, to compute $\Lerfc$,  we just observe in Equation \eqref{eq:erfc.err} that the error gets smaller as $x$ increases, 
so the biggest error for a fixed number of terms of the series is in $\xminerfc$. Therefore, plugging Equation \eqref{eq:erf.maxx} we have 
\begin{align} 
	\Lerfc &= \lfloor  (\xminerfc)^2 +1/2\rfloor \notag \\
	&= \left\lfloor  \frac{\ln(1/B)}{2} + 0.788 +\frac{1}{2} \right\rfloor\notag \\
	&= \left\lfloor  \frac{\ln(1/B)}{2} + 1.288 \right\rfloor. \label{eq:erfc.minL}
\end{align} 

\paragraph{Required precision $\gdprec$.} Now we determine the storage needed in our fixed precision representation such that the rounding error is smaller than $B/2$. We have to consider error propagation and the errors $\err{div}$ and $\err{mul}$ due to division and product ZKPs respectively, which are equal to $\gdprec/2$. Let $\err{l}$ be the error to compute the $t_l$ terms of the $\erf$ series. The total $\erf$ rounding error is then 
\[
\Eerfround =    \frac{2}{\sqrt{\pi}} \left(\sum_{l=0}^{\Lerf} \frac{\err{l}}{2l+1} + \err{div} \right) + \err{mul}.
\]
We require $1/\gdprec M^2$ to be an integer so as the variable $x$ is a multiple of $1/M$, we can represent $x$ and $x^2$ without rounding. As $t_0 = x$ we have that $\err{0} = 0$, and for other terms we have 
\begin{align*}
	t_{l+1} \pm \err{l+1} &= \frac{(t_l \pm \err{l}) x^2  \pm \err{mul}}{l+1} \pm \err{div}  \\ 
	&= \frac{t_l x^2 \pm \err{l} x^2  \pm \gdprec/2 }{l+1} \pm \frac{\gdprec}{2} \\ 
	&= \frac{t_l x^2}{l+1} \pm \left( \frac{\err{l} x^2  + \gdprec/2 }{l+1} + \frac{\gdprec}{2} \right).
\end{align*}  
Then the absolute error at term $l+1$ is 
\[ \err{l+1} = \frac{\err{l} x^2  + \gdprec/2 }{l+1} + \frac{\gdprec}{2} \leq \frac{\err{l}x^2}{l+1} + \gdprec. \] For  $1 \le l \le x^2 -1$ we have $\err{l} \geq \gdprec$ and 
\[ 
\err{l+1} = \err{l} \frac{x^2}{l+1} + \gdprec  \leq 2 \err{l} \frac{x^2}{l+1}. 
\] 
If $l+1 \le x^2$, we can easily see that
\begin{equation}  
	\err{l} \leq \gdprec 2^{l-1} \frac{x^{2(l-1)}}{l!}.
	\label{eq:erf.round.El} 
\end{equation} 
If $l+1 > x^2$, we have that 
\begin{align*} 
	\err{l+1} &= \err{l} \frac{x^2}{l+1} + \gdprec \leq E_l + \gdprec.
\end{align*} 
From the above and plugging Equation \eqref{eq:erf.round.El} we have 
\begin{align*} 
	E_l &\leq E_{\lfloor x^2 \rfloor} +  (l - \lfloor x^2 \rfloor) \gdprec \\
	&\leq \gdprec 2^{\lfloor x^2 \rfloor} \frac{x^{2\lfloor x^2 \rfloor}}{\lfloor x^2 \rfloor !}  + (l -\lfloor x^2 \rfloor) \gdprec. 
\end{align*} 
From the above, independently of $x^2$ and $l$ we have that 
\begin{align*} 
	\err{l} &\leq  \gdprec 2^{\lfloor x^2 \rfloor } \frac{x^{2\lfloor x^2 \rfloor }}{\lfloor x^2 \rfloor !}  +  \sum_{ k= \lfloor x^2 \rfloor +1 }^l\gdprec \\
	&=  \gdprec \frac{(2x^2)^{\lfloor x^2 \rfloor}}{\lfloor x^2 \rfloor !} +   \sum_{ k= \lfloor x^2 \rfloor +1 }^l\gdprec. 
\end{align*} 
We assume $x \geq 1$ as $\Eerfround$ is smaller when $x \in [0, 1)$. Using the Stirling approximation we can bound $\err{l}$ to 
\begin{align*} 
	\err{l}  &\leq \gdprec \frac{(2x^2)^{\lfloor x^2 \rfloor }e^{\lfloor x^2 \rfloor}}{\sqrt{2\pi \lfloor x^2 \rfloor} \lfloor x^2 \rfloor^{\lfloor x^2 \rfloor} }  + \sum_{k= \lfloor x^2 \rfloor +1 }^l \gdprec \\
	&\leq \gdprec \frac{(2e)^{ x^2 }}{\sqrt{2\pi}  x}  + \sum_{ k=\lfloor x^2 \rfloor +1 }^l \gdprec \\
	&\leq \gdprec \frac{(2e)^{ x^2 }}{\sqrt{2\pi} }  + \sum_{ k=\lfloor x^2 \rfloor +1 }^l \gdprec .
\end{align*} 

Then 
\begin{align} 
	\Eerfround &= \err{mul} + \frac{2}{\sqrt{\pi}} \sum_{l=0}^{\Lerf}  \err{div} + \frac{\err{l}}{2l+1}   \notag \\
	&\leq  \frac{\gdprec}{2} +  \frac{2}{\sqrt{\pi}} \sum_{l=0}^{\Lerf} \frac{\gdprec}{2} +  \frac{1}{2l+1} \left( \gdprec \frac{(2e)^{ x^2 }}{\sqrt{2\pi}  }  + \sum_{ k=\lfloor x^2 \rfloor +1 }^l  \gdprec \right) \notag \\
	&= \frac{\gdprec}{2} + \frac{2}{\sqrt{\pi}} \gdprec \left( \frac{\Lerf+1}{2} +  \sum_{l=0}^{\Lerf} \frac{1}{2l+1}  \frac{(2e)^{ x^2 }}{\sqrt{2\pi} }  +  \sum_{l=0}^{\Lerf} \frac{1}{2l+1}  \sum_{k=\lfloor x^2 \rfloor +1}^{l} 1  \right)   \notag \\
	&= \frac{\gdprec}{2} + \frac{2}{\sqrt{\pi}} \gdprec \left( \frac{\Lerf+1}{2} +  \sum_{l=0}^{\Lerf} \frac{1}{2l+1}  \frac{(2e)^{ x^2 }}{\sqrt{2\pi} }  +  \sum_{l=\lfloor x^2 \rfloor +1}^{\Lerf} \frac{l - \lfloor x^2 \rfloor}{2l+1}   \right)  \notag \\
	&\leq \frac{\gdprec}{2} +  \frac{2}{\sqrt{\pi}} \gdprec \left(\frac{\Lerf+1}{2} + \sum_{l=0}^{\Lerf}  \frac{(2e)^{ x^2 }}{\sqrt{2\pi} }  +  \sum_{l=\lfloor x^2 \rfloor +1}^{\Lerf} \frac{l - \lfloor x^2 \rfloor}{2l} \right)  \notag \\
	&= \frac{\gdprec}{2} +  \frac{2}{\sqrt{\pi}} \gdprec \left(\frac{\Lerf+1}{2} +  (\Lerf+1) \frac{(2e)^{ x^2 }}{\sqrt{2\pi}  }  +  \sum_{l=\lfloor x^2 \rfloor +1}^{\Lerf} 1- \frac{ \lfloor x^2 \rfloor}{2l} \right)  \notag \\
	&\leq \frac{\gdprec}{2} +  \frac{2}{\sqrt{\pi}} \gdprec \left(\frac{\Lerf+1}{2} +(\Lerf+1) \frac{(2e)^{ x^2 }}{\sqrt{2\pi}  }  +  \sum_{l= \lfloor x^2 \rfloor +1 }^{\Lerf} 1 \right)  \notag \\
	&\leq \frac{\gdprec}{2} +  \frac{2}{\sqrt{\pi}} \gdprec \left(\frac{\Lerf+1}{2} + (\Lerf+1) \frac{(2e)^{ x^2 }}{\sqrt{2\pi}  }  +  \Lerf+1 \right) \notag \\
	&\leq  \frac{\gdprec}{2} + \frac{2(\Lerf+1)}{\sqrt{\pi}} \gdprec \left(\frac{1}{2} + \frac{(2e)^{ x^2 }}{\sqrt{2\pi}}  +  1  \right)  \notag \\ 
	&\leq  \frac{\gdprec}{2} +  \frac{2(\Lerf+1)}{\sqrt{\pi}} \gdprec \frac{\sqrt{2}(2e)^{ x^2 }}{\sqrt{\pi}  }  \notag \\ 
	&=  \left( \frac{1}{2} +  \frac{\sqrt{8}(\Lerf+1)(2e)^{x^2}}{\pi } \right) \gdprec  \notag \\ 
	&\leq  \frac{2\sqrt{8}(\Lerf+1)(2e)^{x^2}}{\pi}  \gdprec. \notag
\end{align}
By plugging Equation \eqref{eq:erf.minL} we have 
\begin{align} 
	\Eerfround &\leq   \frac{2\sqrt{8}(2.36\ln(1/B) +  2.15)(2e)^{x^2}}{\pi }  \gdprec \notag \\
	&\leq     \frac{(13.36\ln(1/B) +  12.17)(2e)^{(\xminerfc)^2}}{\pi }  \gdprec \notag \\
	&\leq \frac{(13.36\ln(1/B) +  12.17)(2e)^{\ln(1/B)/2 + 0.788}}{\pi  }  \gdprec  \notag \\
	&\leq \frac{(13.36\ln(1/B) +  12.17) (2e)^{\log_{2e}(1/B)\ln(2e)/2}(2e)^{0.788}}{\pi  }  \gdprec  \notag \\
	&\leq \frac{3.8(13.36\ln(1/B) +  12.17) (1/B)^{\ln(2e)/2}}{\pi  }  \gdprec  \notag \\
	&\leq \frac{50.8\ln(1/B) +  48.3}{\pi   B^{0.85}} \gdprec 
	\label{eq:erf.round}. 
\end{align}

The $\erfc$ case requires a similar analysis. To compute error of terms $m_0, \dots, m_{\Lerfc}$ defined in the ZKP we take into account the error propagation and the error of our finite precision.  Let now $\errerfc{i}$ be the absolute error of the term $m_i$. We have that $\errerfc{0} = 0$  and 
\begin{align*} 
	m_{i+1} \pm \errerfc{i+1} &= \frac{(m_i \pm \errerfc{i} ) (2i+1)}{2x^2} \pm \err{div}  \\
	&= \frac{m_i (2i+1)}{2x^2} \pm \left( \frac{\errerfc{i}  (2i+1)}{2x^2} + \frac{\gdprec}{2} \right) \\
	&= m_{i+1} \pm \left( \frac{\errerfc{i}  (2i+1)}{2x^2} +\frac{\gdprec}{2} \right). 
\end{align*} 
Hence, 
\begin{align*}  
	\errerfc{i+1} &=  \frac{\errerfc{i}  (2i+1)}{2x^2} + \frac{\gdprec}{2}.
\end{align*} 
In $\erfc$,  $i  < \Lerfc =  \lfloor (\xminerfc)^2  + 1/2 \rfloor$ therefore
\[
\frac{\errerfc{i}  (2i+1)}{2x^2} + \frac{\gdprec}{2} \leq \errerfc{i} + \frac{\gdprec}{2}. 
\]
Then we have that $\errerfc{i} \leq i \frac{\gdprec}{2}$. The total error is 
\begin{align}
	\Eerfcround &= \frac{e^{-x^2}}{x\sqrt{\pi}} \sum_{i=0}^{\Lerfc-1} \errerfc{i} \notag \\
	&=  \frac{e^{-x^2}}{x\sqrt{\pi}} \sum_{i=0}^{\Lerfc-1} i  \frac{\gdprec}{2}  \notag \\
	&=  \frac{e^{-x^2}}{x\sqrt{\pi}}  \frac{(\Lerfc-1)\Lerfc}{2}  \frac{\gdprec}{2}.  \notag
\end{align}

Plugging  Equation \eqref{eq:erfc.minL} to the above we have 
\begin{align} 
	\Eerfcround &\leq   \frac{(0.5\ln(1/B) + 1.288)^2}{4 e^{x^2}x\sqrt{\pi}}\gdprec \notag \\
	&\leq    \frac{0.25\ln^2(1/B) + 1.288\ln(1/B) + 1.659 }{4 e^{(\xminerfc)^2} \sqrt{\pi}}\gdprec \notag \\
	&\leq   \frac{0.25\ln^2(1/B) + 1.288\ln(1/B) + 1.659 }{4 e^{\ln(1/B)/2 + 0.788} \sqrt{\pi}}\gdprec \notag \\
	&\leq   \frac{0.25\ln^2(1/B) + 1.288\ln(1/B) + 1.659 }{8 \sqrt{1/B}  \sqrt{\pi}}\gdprec .
	\label{eq:erfc.round}
\end{align}

Now we determine what value of $\gdprec$ is needed. Equations \eqref{eq:erf.round} and \eqref{eq:erfc.round} fix our requirements to 
\[
\Eerfround  \leq   \frac{50.8\ln(1/B) +  48.3}{\pi   B^{0.85}} \gdprec  \le \frac{B}{2}
\]
and to 
\[ 
\Eerfcround \leq \frac{0.25\ln^2(1/B) + 1.288\ln(1/B) + 1.659 }{8 \sqrt{1/B}  \sqrt{\pi}}\gdprec  \leq \frac{B}{2}.
\] 
$\Eerfround$ imposes the biggest constraint to $\gdprec$, 
as it requires that 
\[
\gdprec  \le \frac{\pi B^{1.85}}{ 101.6\ln(1/B) +  96.6} = O\left(\frac{B^{1.85}}{\ln(1/B)}\right).
\] 
Typically, one would like the total error (due to approximation and
rounding) to be negligible with respect to the standard deviation $\sigma_\eta$,
so one could choose $B=\sigma_\eta / 10^6 |U^H|$.

\paragraph{Computation Cost.}  We now evaluate the computational cost of the proof. 
When we say a task has ``cost $c$'', it means that  requires at most $c$ cryptographic 
computations for generating the proof, $c$ for another party to verify it, and the exchange of  $c\groupelsize$ bits, where 
$\groupelsize$ is the size in bits of an element of $\G$. 

The main statement of the ZKP is
\begin{equation} \\
	\label{eq:zkp.erf.erfc}
	\bigg\{ \bigg(x \in \bigg[0, \bigg\lfloor \frac{ y^{\mathrm{erfc}}_{min}}{\gdprec} \bigg\rfloor \bigg] \land y = \erf(x)\bigg) \lor \bigg(y \in \left[\bigg\lfloor \frac{ y^{\mathrm{erfc}}_{min}}{\gdprec} \bigg\rfloor +1,  \frac{1}{\gdprec} \right]   \land 1-y = \erfc(x)\bigg) \bigg\}
\end{equation}
where $y^{\mathrm{erfc}}_{min} = \erf\left(x_{\mathrm{erfc}}^{min}\right)$ is a public constant. 
The main costs are in the proofs of $\erf$ and $\erfc$.

Proving computations of $\erf$ in Equation \eqref{eq:erf.zkp} requires $3$ range proofs of cost 
of at most $10\log_2(1/\gdprec)$, 
$10\log_2(l)$ and $10\log_2(2l+1)$.  As 
$l \leq \Lerf = 2.36\ln(1/B)$, the cost of evaluating a term is 
$10\log_2(1/\gdprec) + 20\log_2(\ln(1/B))$. We evaluate $\Lerf$ terms. The total cost is 
\begin{align*} 
	& \Lerf (10\log_2(1/\gdprec) + 20\log_2(\ln(1/B)))  = 10\Lerf \log_2(1/\gdprec)\Lerf + 20\log_2(\ln(1/B)).
\end{align*} 
The dominating term above is 
\[ 10\log_2(1/\gdprec)\Lerf = 23.6\log_2(1/\gdprec) \ln(1/B) < 34.1\ln(1/\gdprec)\ln(1/B). \]
The for $\erfc$, we require $\Lerfc$ proofs of the computation in Equation \eqref{eq:erfc.zkp}, one for each term, and its cost is dominated by
\begin{align*} 
	10\log_2( q ) \Lerfc = 10\log_2( q)  \left\lfloor  0.5\ln(1/B) + 1.288 \right\rfloor. 
\end{align*}   
Neglecting lower order constants, the above is dominated by
\begin{align*}
	10\log_2(q) 0.5\ln(1/B) &= 5\log_2(e)\ln(q) \ln(1/B) < 7.22\ln(q) \ln(1/B).
\end{align*} 
 The total cost is the sum of the costs of the $\erf$ and $\erfc$ ZKPs, which is dominated by
\[
\ln(1/B) \cdot \left( 34.1\ln(1/\gdprec) + 7.22\ln(q) \right).
\]

\end{document}